\setlist{nolistsep}
\begin{document}

\title{Security in quantum cryptography}

\author{Christopher Portmann}
\email{Electronic address: chportma@ethz.ch}
\affiliation{Department of Computer Science, ETH Zurich, 8092 Zurich,
  Switzerland}

\author{Renato Renner}
\email{Electronic address: renner@ethz.ch}

\affiliation{Institute for Theoretical Physics, ETH Zurich, 8093
  Zurich, Switzerland}

\date{\today}

\begin{abstract}
  Quantum cryptography exploits principles of quantum physics for the secure processing of information.   A prominent example is secure communication, i.e., the task of transmitting confidential messages from one location to another. The cryptographic requirement here is that the transmitted messages remain inaccessible to anyone other than the designated recipients, even if the communication channel is untrusted. In classical cryptography, this can usually only be guaranteed under computational hardness assumptions, e.g., that factoring large integers is infeasible. In contrast, the security of quantum cryptography relies entirely on the laws of quantum mechanics.  Here we review this physical notion of security, focusing on quantum key distribution and secure communication.
\end{abstract}

\maketitle

\tableofcontents

\section{Security from physical principles}
\label{sec:intro}

Communication theory is concerned with the task of making information
available to different parties. The sender of a message $x$ wants
$x$ to become accessible to a designated set of recipients. In
\emph{cryptography}, one adds to this a somewhat opposite requirement
\--- that of restricting the availability of information. The sender
of~$x$ also wishes to have a guarantee that $x$ remains inaccessible
to \emph{adversaries}, i.e., parties other than the intended recipients.
The term \emph{security} refers to this additional guarantee.

Testing whether a communication protocol works correctly is easy. It
suffices to compare the message~$x$ sent with the received
one. Testing security, however, is more subtle. To ensure that an
adversary cannot read~$x$, one needs to exclude all physically
possible eavesdropping strategies.  Since there are infinitely many such strategies it is not possible, at least not by direct experiments, to prove  that a cryptographic scheme is secure \---
although a successful hacking experiment would of course show the
opposite.

But the situation is not as hopeless as this sounds. Security can be established indirectly, provided that one is ready to make certain assumptions about the capabilities of the adversaries. Clearly, the weaker these assumptions are, the more confident we can be that they apply to any realistic adversary, and hence that a cryptographic scheme based on them is actually secure. 

The security of most cryptographic schemes used today relies on computational hardness assumptions. They correspond to constraints on the adversaries' computational resources. For example, it is assumed that adversaries do not have the capacity to factor large integers~\cite{RSA78}. This is a relatively strong assumption, justified merely by the belief that the currently known algorithms for factoring cannot be substantially improved in the foreseeable future \--- and that  quantum computers, powerful enough to run Shor's (efficient) factoring algorithm~\cite{Shor97}, cannot be built. Cryptographic schemes whose security is based on assumptions of this type are commonly termed \emph{computationally secure}.

In contrast to this, the main assumption that enters quantum cryptography is that adversaries are subject to the laws of quantum mechanics.\footnote{To prove security, one usually also requires that adversaries cannot manipulate the local devices (such as senders and receivers) of the legitimate parties. But, remarkably, this (seemingly necessary) requirement can be weakened \--- this is the topic of device-independent cryptography, which we discuss in~\secref{sec:open.di}.}  This assumption completely substitutes computational hardness assumptions, i.e., security holds even if the adversaries can use unbounded computational resources to process their information.\footnote{Though one may naturally also consider computationally secure quantum cryptography, which we do in, e.g., \secref{sec:computational} and \secref{sec:open.computational}.} To distinguish this from computational security, the resulting security is sometimes termed \emph{information-theoretic}, reflecting the fact that it can be defined in terms of purely information-theoretic concepts~\cite{Shannon49}.

\subsection{Completeness of quantum theory} \label{sec:completeness}

The assumption that adversaries are subject to the laws of quantum mechanics appears to be  rather straightforward to justify. Indeed, quantum mechanics is one of our best tested physical theories. As of yet, no  experiment has been able to detect deviations from its predictions. Of particular relevance for cryptography are non-classical features of quantum mechanics, such as entanglement between remote subsystems, which have been tested by Bell experiments~\cite{FreedmanClauser,Aspect81,Aspect82,Tittel,Weihs,Rowe,Giustina13,Christensen,Hensen,Giustina15,Shalm,Rosenfeld}. However, the assumption  that enters quantum cryptography not only concerns the \emph{correctness} of quantum mechanics (as one may naively think), but also its \emph{completeness}. This is an important point, and we therefore devote this entire subsection to it. 

Quantum mechanics is a \emph{non-deterministic} theory in the following sense. Even if we know, for instance, the polarisation direction $\psi$ of a photon to arbitrary accuracy, the theory will not in general allow us to predict with certainty the outcome $z$ of a polarisation measurement of, say, the vertical versus the horizontal direction. The statement that we can obtain from quantum mechanics may even be completely uninformative. For example, if the polarisation $\psi$ before the measurement was diagonal, the theory merely tells us that a measurement of the vertical versus the horizontal direction will yield both possible outcomes~$z$ with equal probability. 

It is  conceivable that non-determinism is just a limitation of current quantum theory, rather than a fundamental property of nature. This would mean that there could exist another theory that gives better predictions. In the example above, it could be that the photon, in addition to its polarisation state~$\psi$, has certain not yet discovered properties $\lambda$ on which the measurement outcome~$z$ depends.  A theory that takes into account $\lambda$ could then yield more informative predictions for~$z$ than quantum mechanics. If this were the case then quantum mechanics could not be considered a complete theory. 

Quantum cryptography is built on the use of physical systems, such as photons, as information carriers. The incompleteness of quantum mechanics would hence imply that the theory does not give a full account of all  information contained in these systems. This would have severe consequences for security claims. For example, a  cryptographic scheme for transmitting a confidential message~$x$ may be claimed to be secure on the grounds that the quantum state $\psi$ of the information carriers  gathered by an adversary  is independent of~$x$. Nonetheless, it could still be that the adversary's information carriers have an extra property, $\lambda$, which is not described by quantum theory and hence not included in $\psi$. The independence of $\psi$ from~$x$ is then not sufficient to guarantee that the adversary cannot learn the secret message.

A possible way around this problem is to simply \emph{assume} that no adversary can access properties of physical systems, like $\lambda$ in the above example, which are not captured by their quantum state~$\psi$. But such an assumption seems to be similarly difficult to justify as, for instance, the non-existence of an efficient factoring algorithm. The fact that we have not yet been able to discover $\lambda$ does not mean that it does not exist (or that it cannot be discovered). 

Fortunately, the problem can be resolved in a more fundamental manner. The solution is based on a long sequence of work dating back to~\textcite{Born26,EPR35},  where the question regarding the completeness of quantum mechanics was raised. The central insight resulting from this work was that the set of possible theories that could improve the predictions of quantum mechanics is highly constrained. For example, no such theory can yield deterministic predictions, based on additional parameters~$\lambda$, unless it is non-local\footnote{This concept will be briefly discussed in the context of device-independent cryptography in \secref{sec:alternative.di}.}~\cite{Bell64} and contextual \cite{Bell66,KocSpe67}. More recently, it has been shown that no theory can improve the predictions of quantum mechanics unless it violates the requirement that measurement settings can be chosen freely, i.e., independently of other parameters of the theory~\cite{CR11}.\footnote{More precisely, according to~\textcite{BellFree}, variables are ``free'' if they ``have implications only in their future light cones.'' In other words, they are uncorrelated to anything outside their causal future. This notion has sometimes also been called ``free will''~\cite{Conway2006}.} The completeness of quantum mechanics is hence implied by the assumption that physics does not prevent us from making free choices \--- an assumption that appears to be unavoidable in cryptography anyway~\cite{ER14}. 

\subsection{Correctness of quantum-theoretic description}

In the previous section we have seen that the security of quantum cryptography crucially relies on the completeness of quantum mechanics, but that the latter can be derived from the requirement that one can make free choices.  It is of course still necessary to assume that quantum mechanics is correct, in the sense that it accurately describes the  hardware used for implementing a cryptographic protocol. But since  quantum mechanics consists of a set of different rules,  we should  be more specific about what this correctness assumption really means. 

Quantum cryptographic protocols are usually described within the framework of quantum information theory~\cite{nielsen2010quantum}, which provides the necessary formalism to talk about information carriers and operations on them. Any information carrier is modelled as a quantum system $S$ with an associated Hilbert space $\mathcal{H}_S$, and the information encoded in~$S$ corresponds to its state. In the case of ``classical'' information,  the different values~$x$ of a variable with range~$\mathcal{X}$ are represented by different elements from a fixed orthonormal basis  $\{\ket{x}\}_{x \in \mathcal{X}}$ of $\mathcal{H}_S$. If the marginal state of  a system~$S$ has the form  $\rho_{S} = \sum_{x} p_{x} \proj{x}$  this means that $S$ carries the value~$x$ with probability~$p_x$. 
Any processing of information (including, for instance, a measurement)  corresponds to a change of the state of the involved information carriers, and is represented mathematically by a  trace-preserving, completely positive map.\footnote{We refer to standard textbooks in quantum information theory, such as~\textcite{nielsen2010quantum}, for a description of these concepts. An argument that justifies their use in the context of cryptography can be found in \textcite{RenesRenner2020}.}

The modelling of real-world implementations in terms of these rather abstract information-theoretic notions is a highly non-trivial task. To illustrate this, take for example an optical scheme for quantum key distribution, where information is communicated by an encoding in the polarisation of individual photons. This suggests a description where each photon sent over the optical channel is regarded as an individual quantum system. However, photons are just excitations of the electromagnetic field and thus a priori not objects with their own identity. (That is, they are indistinguishable.) A solution to this problem could be that one ``labels'' the photons by the time at which they are sent out, i.e., photons sent at  different times are regarded as different quantum systems, $S$. But there could be more than one photon emitted at a particular time, and these different photons could or could not have the same frequency. One may now choose to take this into account by modelling the photon number and their frequency as internal degrees of freedoms of the system $S$. Or, one could choose the frequency to be an additional system label, so that photons with different frequencies are regarded as different systems. 

This example shows that the translation of an actual physical setup into the language of quantum information theory is prone to mistakes and certainly not unique. Nonetheless, it is critical for security \--- if done incorrectly, the security statements, which are derived within quantum information theory, are vacuous. Particular care must be taken to ensure that no information carriers that are present in an implementation are omitted.  A realistic photon source may, for instance, sometimes emit two instead of only one photon whose polarisation encodes the same value, and this second photon may be accessible to an eavesdropper  (see Section~\secref{sec:attacks:hacking}). This possibility must therefore be included in the quantum information-theoretic description of that photon source.  If it was not, it would represent a \emph{side channel} to the adversary that is not accounted for by the security proof. Side channels may also occur in other components, such as photon detectors, for instance. We refer to the review of~\cite{SBCDLP09} for a general discussion of these practical aspects of quantum cryptography. 


\subsection{Overview of this review}

In this review we will focus on the information-theoretic layer of security proofs, i.e., we will presume that we have a correct quantum information-theoretic description of the cryptographic hardware. The existence of such a description is indeed a standard assumption made for security proofs and usually termed \emph{device-dependence}. It contrasts \emph{device-independent} cryptography, where this assumption is considerably relaxed (see~\secref{sec:alternative.di} for a brief discussion). 

We will start in the next section by introducing general concepts from cryptography. From then on we will largely focus on Quantum Key Distribution (QKD), which currently is the most widespread application of  quantum cryptography. It is also an excellent concrete example to discuss security definitions, the underlying assumptions, as well as proof techniques. Towards the end of this review we will explain how these notions apply to cryptographic tasks other than key distribution.




\section{Cryptographic security definitions}
\label{sec:ac}

\subsection{Real-world ideal-world paradigm}
\label{sec:ac.realideal}

Cryptographic schemes are not usually perfectly secure. Rather, they provide a certain level of security that is quantified by one or several parameters. Take for instance an encryption scheme. It could only be called perfectly secure if we had a guarantee that an adversary can learn absolutely nothing about the encrypted message \--- something that turns out to be impossible to achieve in practice. Still, we have encryption schemes, e.g., those built in quantum cryptography, that are ``almost perfectly secure''. So we need a quantitative definition that makes precise what this means.

Devising sensible quantitative definitions can be challenging.  Consider, for instance, a protocol that encrypts quantum information contained in a $d$-dimensional register~$A$ by applying a unitary $U_k$ that depends on a uniformly chosen key $k \in \cK$. It has been proposed, e.g., in \textcite{AS04,HLSW04,DN06}, that the security of such a scheme may be defined by requiring that, for any state $\rho_A$,
\begin{equation} \label{eq:encryption.naive}
  \frac{1}{2}\trnorm{\frac{1}{|\cK|} \sum_{k \in \cK} U_k \rho_A \hconj{U}_k -
    \tau_A} \leq \eps,
\end{equation}
where $\eps \geq 0$ is the security paramter, $\tau_A = \frac{1}{d} I$ is the fully mixed state, and $\trnorm{\cdot}$ denotes the trace norm or
Schatten $1$-norm. The definition has been justified by the argument that an adversary who does not know the key~$k$ cannot distinguish the encryption of the state from $\tau_A$
(except with advantage\footnote{See the text around~\eqref{eq:adv1} for a definition of the notion of a \emph{distinguishing advantage}.} $\eps$). However, it has later been realized that this does not hide
the information in the $A$ system from an adversary who may hold a purification
$R$ of the information~$A$ \cite{ABW09}. To take this into account, one should instead require that,
for any $\rho_{AR}$,
\begin{equation*} 
  \frac{1}{2}\trnorm{\frac{1}{|\cK|} \sum_{k \in \cK} \left(U_k \otimes I_R
    \right) \rho_{AR} \left(\hconj{U}_k \otimes I_R\right) - \tau_A
    \otimes \rho_R} \leq \eps,
\end{equation*}
where $\rho_R$ is the reduced density operator of $\rho_{AR}$. Note
that, crucially, this criterion is not implied
by~\eqnref{eq:encryption.naive} above \cite{Wat18}.

As another example, early works on Quantum Key Distribution (QKD),
e.g., \textcite{May96,BBBMR00,SP00}, measured the secrecy of a secret
key in terms of the \emph{accessible information}\footnote{This
  captures the information a player may obtain by measuring her
  quantum state, and is formally defined in \eqnref{eq:localqkd} in
  \secref{sec:qkd.other.ai}, see also \textcite{nielsen2010quantum}.}
between the key and all information that may be accessible to an
adversary. In security proofs it was then shown that this value is
small, apparently implying that the key is almost perfectly
secret. Later one has realised, however, that the accessible
information is not a good measure for secrecy: even if this measure is
exponentially small in the key size, an adversary may for example be
able to infer the second part of the key upon seeing the first
part~\cite{KRBM07}. This makes the key unusable for many applications,
such as encryption, as described in detail in
\secref{sec:qkd.other.ai}.

Problems analogous to the ones outlined above are well known in classical cryptography. They were addressed independently by
\textcite{PW00,PW01} and \textcite{Can01}, building on a series of
earlier works~\cite{GMW86,Bea92,MR92,Can00}, with a security paradigm
that we will refer to as the ``real\-/world ideal\-/world'' paradigm.
Its gist lies in quantifying how well some \emph{real} protocol for a
cryptographic task can be distinguished from some \emph{ideal} system
that fulfils the task perfectly.

As a simple (non-cryptographic) example, we consider channel coding, i.e., the task of constructing a noiseless channel form a noisy one. Suppose that Alice
and Bob only have access to a noisy channel, drawn in
\figref{fig:security.channel.noisy}. In order to send a 
message, Alice will encode it in a larger message space that has
redundancies. Upon reception, Bob will decode it, using the redundancies to correct errors~\cite{nielsen2010quantum}. Putting together the encoder, the noisy channel, and the 
decoder, as illustrated in
\figref{fig:security.channel.construction}, gives a new channel. Ideally, this constructed channel, which we call the \emph{real world}, should behave like a perfect, noiseless channel,
\figref{fig:security.channel.noiseless}, which we therefore call the \emph{ideal world}. To quantify how well we achieved this goal, we measure how close the real world is to the ideal world.

\begin{figure}[tb]
  \centering
  \subfloat[Noiseless channel][\label{fig:security.channel.noiseless}A
  noiseless channel that perfectly delivers the message from Alice to
  Bob.]{
\begin{tikzpicture}[
sArrow/.style={->,>=stealth,thick},
thinResource/.style={draw,thick,minimum width=1.618*2cm,minimum height=1cm}]

\small

\node[thinResource] (channel) at (0,0) {};
\node (alice) at (-3,0) {Alice};
\node (bob) at (3,0) {Bob};

\draw[sArrow] (alice) to node[pos=.1,auto] {$\rho$} node[pos=.9,auto] {$\rho$}  (bob);

\end{tikzpicture}
}

\vspace{6pt}

\subfloat[Noisy channel][\label{fig:security.channel.noisy}A noisy
channel that alters the message sent from Alice to Bob.]{
\begin{tikzpicture}[
sLine/.style={-,thick},
nLine/.style={-,thick,decorate,decoration={snake,amplitude=.4mm,segment length=2mm,post length=1mm}},
sArrow/.style={->,>=stealth,thick},
thinResource/.style={draw,thick,minimum width=1.618*2cm,minimum height=1cm}]

\small

\node[thinResource] (channel) at (0,0) {};
\node (alice) at (-3,0) {Alice};
\node (bob) at (3,0) {Bob};

\draw[sLine] (alice) to node[pos=.5,auto] {$\rho$} (channel.west);
\draw[nLine] (channel.west) to (channel.east);
\draw[sArrow] (channel.east) to node[pos=.5,auto] {$\tilde{\rho}$} (bob);

\end{tikzpicture}
}

\vspace{6pt}

\subfloat[Noiseless channel
construction][\label{fig:security.channel.construction}Alice encodes
  her message into a larger space, and Bob decodes upon reception.]{
\begin{tikzpicture}[
sLine/.style={-,thick},
nLine/.style={-,thick,decorate,decoration={snake,amplitude=.4mm,segment length=2mm,post length=1mm}},
sArrow/.style={->,>=stealth,thick},
thinResource/.style={draw,thick,minimum width=1.618*2cm,minimum height=1cm},
converter/.style={draw,thick,rounded corners,minimum width=.8cm,minimum height=.8cm}]

\small

\node[thinResource] (channel) at (0,0) {};
\node[converter] (A) at (-2.5,0) {Enc};
\node[converter] (B) at (2.5,0) {Dec};
\node (alice) at (-3.4,0) {};
\node (bob) at (3.4,0) {};

\draw[sArrow] (alice.center) to node[pos=.5,auto] {$\rho$} (A);
\draw[sLine] (A) to node[pos=.5,auto] {$\sigma$} (channel.west);
\draw[nLine] (channel.west) to (channel.east);
\draw[sArrow] (channel.east) to node[pos=.5,auto] {$\tilde{\sigma}$} (B);
\draw[sArrow] (B) to node[pos=.5,auto] {$\tilde{\rho}$} (bob.center);

\end{tikzpicture}
}
\caption[Channels for error
correcting]{\label{fig:security.channels}In these diagrams each box
  represents a reactive system which produces an output upon receiving
  an input. Boxes with rounded corners are local operations performed
  by a party [e.g., encoding and decoding in
  \subref{fig:security.channel.construction}]. The rectangular box is
  a (possibly noisy) channel form Alice to Bob, which upon receiving
  Alice's input produces an output at Bob's end of the channel. The
  arrows represent quantum states being transmitted from one system to
  another.}
\end{figure}
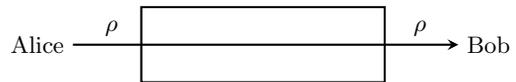
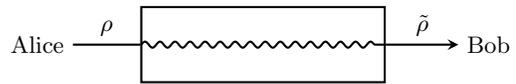
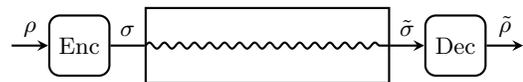

For this, we consider a hypothetical game in which a
\emph{distinguisher} has \emph{black box access} to an unknown system
as shown in \figref{fig:distinguisher}. The unknown system is,
depending on a random bit~$B$, either the real world ($B=0$) or the
ideal world ($B=1$). The term \emph{black box access} means that the
distinguisher is not provided with a description of the system, and in
particular has no direct access to the bit~$B$, but otherwise can
interact arbitrarily with it. In the case of our noiseless channel
construction problem, the distinguisher can generate any joint state
$\rho_{AR}$ it desires, input the $A$ part into the channel, and then
measure the joint state of the channel output and its purification
$R$.  The distinguisher is then asked to guess whether it interacts
with the real world ($B=0$) or with the ideal one ($B=1$). Let $D$ be
a random variable denoting the distinguisher's guess. The
\emph{distinguishing advantage} of the distinguisher is then defined as the
difference between the probabilities that it guesses correctly and
erroneously, namely
\begin{equation} \label{eq:adv1} \left|\Pr[D = 0|B=0] - \Pr[D = 0|B=1]
  \right|.\end{equation} The distinguishing advantage for a class of
distinguishers (e.g., computationally bounded or unbounded
distinguishers), is then defined as the supremum of \eqnref{eq:adv1}
over all distinguishers in this set. For example, in the case of channel coding, the distinguishing advantage for unbounded
distinguishers corresponds to the diamond\-/norm between the
channels~\cite{Wat18}. A protocol is considered secure if the
distinguishing advantage is small \--- or, more accurately, the (level
of) security of a protocol is parametrized by this advantage and the
corresponding class of distinguishers.

\begin{figure}[tb]
\begin{tikzpicture}[
sArrow/.style={->,>=stealth,thick},
largeResource/.style={draw,thick,minimum width=1.618*2cm,minimum height=2cm},
lrnode/.style={minimum width=1.36*2cm,minimum height=.2cm},
llrnode/.style={minimum width=.2cm,minimum height=1.5cm},
tlrnode/.style={minimum width=.2cm,minimum height=.5cm}]

\small

\def\v{.6}
\def\u{2.27} 
\def\w{-1.9} 

\node[lrnode] (r1) at (0,\v) {};
\node[lrnode] (r2) at (0,0) {};
\node[lrnode] (r3) at (0,-\v) {};
\node[llrnode] (rr1) at (-\v,0) {};
\node[llrnode] (rr2) at (0,0) {};
\node[llrnode] (rr3) at (\v,0) {};
\node[largeResource] (R) at (0,0) {\footnotesize Unknown system};

\draw[thick] (-1.618-1.15,1) -- ++(.75,0) -- ++(0,-2.4)  --
++(1.618*2+.8,0)  -- ++(0,2.4) -- ++(.75,0) -- ++(0,-3.4) --
++(-1.618*2-2.3,0) -- cycle;

\node at (0,\w) {\footnotesize Distinguisher};
\node[tlrnode] (dd1) at (-\v,\w) {}; 
\node[tlrnode] (dd2) at (0,\w) {}; 
\node[tlrnode] (dd3) at (\v,\w) {};
\node[inner sep=0] (d1) at (-\u,\v) {};
\node[inner sep=0] (d2) at (-\u,0) {};
\node[inner sep=0] (d3) at (-\u,-\v) {};
\node[inner sep=0] (d4) at (\u,\v) {};
\node[inner sep=0] (d5) at (\u,0) {};
\node[inner sep=0] (d6) at (\u,-\v) {};
\node[inner sep=0] (d0) at (0,\w-.5-.7) {};

\draw[sArrow] (rr1) to (dd1);
\draw[sArrow] (dd2) to (rr2);
\draw[sArrow] (rr3) to (dd3);
\draw[sArrow] (d1) to (r1);
\draw[sArrow] (r2) to (d2);
\draw[sArrow] (d3) to (r3);
\draw[sArrow] (r1) to (d4);
\draw[sArrow] (d5) to (r2);
\draw[sArrow] (r3) to (d6);
\draw[sArrow] (dd2) to node[auto,pos=.6] {\footnotesize $D \in \{0,1\}$} (d0);






\end{tikzpicture}

\caption[Distinguishing systems]{\label{fig:distinguisher}In the real-world ideal-world paradigm, security is defined in terms of indistinguishability. A distinguisher  has black-box access to a system that, depending on an unknown bit $B$, is either the real cryptographic protocol ($B=0$) or an ideal functionality ($B=1$). After interacting with
  the system, the distinguisher outputs a guess $D$ for $B$. The real protocol is considered as secure as the ideal system if the success probability $\Pr[D=B]$ of the best possible distinguisher is close to that of a random guess, i.e., to~$\frac{1}{2}$.}
\end{figure}
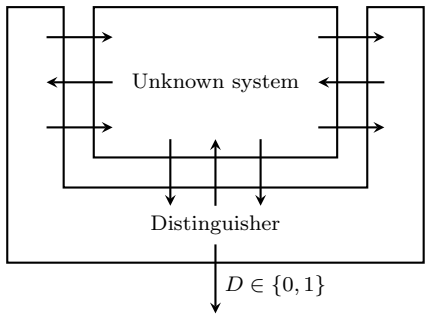

In its essence, the real-world ideal-world paradigm avoids defining
\emph{security}; instead, it provides a simple description, the ideal
world, of what should happen in the real world. In the example of
channel coding, the real world might involve a complex noise model as
well as encoding and decoding operations, whereas the ideal world is
just an identity map.  When evaluating whether such a security
statement is appropriate, one asks whether the ideal world captures
what we need, or whether one should design a different ideal world.

A crucial property of the real-world ideal-world paradigm is that the resulting notion of security is \emph{composable}.  This means that the security of a protocol is guaranteed even if it is composed with other protocols to form a larger cryptographic system. In fact, to ensure composability,  the notion of distinguishability has to be chosen appropriately. Specifically, the distinguisher must have access jointly to
all information available normally to the honest parties as well as to the adversary. The role of the distinguisher is hence to capture ``the rest of the world'', everything that exists around the system of interest. In particular, the distinguisher may choose the inputs to the protocol (that might come from a previously run protocol), receive its outputs (that could be used in a subsequent protocol), and  simultaneously take the role of the adversary, possibly eavesdropping on the communication channels and tampering with messages. 


\subsection{The Abstract Cryptography framework}
\label{sec:ac.ac}

In modern cryptography, security claims (and their proofs) are usually
phrased within a theoretical framework. The framework does not only
provide a common language, but also ensures composability, in the
sense described above. That is, security claims that hold for
individual components can be turned into a security claim for a more
complex cryptographic scheme built from them. The first frameworks to
achieve this for classical cryptography are the \emph{reactive
  simulatability framework} of~\textcite{PW00,PW01} and the
\emph{universal composability framework} of~\textcite{Can01}, which
both use the real-world ideal-world paradigm.

These frameworks have been further developed \cite{BPW04,BPW07,CDPW07,Can20} and several variations have been
proposed~\cite{MMS03,MRST06,CCKLLPS06a,CCKLLPS06b,Kus06,HS13}. The differences between them concern mostly how they describe information-processing systems, i.e., how the individual devices carry out computations and how they schedule messages when communicating (e.g., synchronously or asynchronously).  While this modelling was mostly based on classical notions of computation and communication, the frameworks have also been adapted to quantum cryptography by
\textcite{BM04} and \textcite{Unr04,Unr10}. 

In \textcite{MR11} [see also, \textcite{Mau12,MR16}] a framework was
proposed, \emph{Abstract Cryptography} (AC), that is largely
independent of the underlying modelling of the information-processing
devices, and therefore applies equally to classical and quantum
settings. We use it for the presentation here, for it enables a
self-contained description without the need to specify (unnecessary)
technical details.\footnote{The security of QKD could be equivalently
  modelled based on the work of \textcite{Unr10} with minor
  adaptations to capture finite statements instead of only
  asymptotics.}
In the following, we briefly describe the two basic paradigms on which
the framework is based, \emph{abstraction} and
\emph{constructibility}.

\paragraph{Abstraction.}
The traditional approach to defining security \--- used in all the
frameworks cited except for AC \--- can be seen as \emph{bottom\-/up}. One
first defines (at a low level) a computational model (e.g., a Turing
machine), and then proceeds by modelling how the machines communicate (e.g., by
writing to and reading from shared tapes). Next, one introduces higher-level notions such as indistinguishability. Finally, these notions are used to define security. 

In contrast, AC uses a \emph{top\-/down} approach. In order to state definitions and develop a theory, one starts from the other end, the highest possible level of abstraction. There, cryptographic systems are simply regarded as elements of a set, which can be combined to form new systems. One then proceeds down to lower levels of abstraction, introducing in each of them the minimum necessary specializations. Only on these lower levels it is modelled how exactly the cryptographic systems process information and how they communicate when they are combined (e.g., synchronously or asynchronously). The notion of indistinguishability is first defined on the highest abstraction level as an (arbitrary) metric on the set of cryptographic systems. On lower abstraction levels it can then be instantiated in different ways, e.g., to capture the distinguishing power of a computationally bounded or unbounded environment. 

Abstraction has not only the advantage that it generalises the treatment, but it usually also simplifies it, as unnecessary specificities are avoided.  It may be compared, for instance, to the use of group theory in mathematics, which is an abstraction of more special concepts such as matrix multiplication.  In a bottom\-/up approach, one would
start introducing a rule for taking the product between matrices and then, based on that rule, study the properties of the multiplication operation. In contrast to this, the
top\-/down approach taken here corresponds to first defining the (abstract) multiplication group and prove theorems already on this level. 



\paragraph{Constructibility.}
Cryptography can be regarded as a resource theory, where certain
desired resources are \emph{constructed} from a set of given
resources.\footnoteremember{fn:resourcetheory}{This view is widespread
  in cryptography and made formal by composable frameworks \--- e.g.,
  \textcite{PW00,PW01,Can01,MR11}. Resource theories have also been
  used in many different ways to capture (certain aspects of) quantum
  mechanics, see the recent reviews \textcite{SAP17,CG19}.} The
constructions are defined by protocols. For example, a QKD protocol
uses a quantum communication channel together with an authentic
channel\footnote{An authentic channel guarantees that the message
  received comes from the legitimate sender, and has not been tampered
  with or generated by an adversary.} as resources to construct the
resource of a secret key. This latter resource may then be used by
other protocols, e.g., an encryption protocol, to construct a secure
communication channel, which is again a resource. Similarly, the
authentic channel used by the QKD protocol can itself be constructed
from an insecure channel resource and short (uniform) secret key
\cite{WC81}. And given a weak secret key (i.e., not necessarily
uniform and not perfectly correlated randomness shared by the
communication partners) and 2-way insecure channels, one may construct
an almost perfect secret key (i.e., uniform and perfectly correlated
randomness) using so-called non\-/malleable
extractors~\cite{RW03,DW09,ACLV19}. Composing the authentication
protocol with the QKD protocol results in a scheme which constructs a
long secret key from a short secret key and insecure channels \--- and
composing this again with non\-/malleable extractors constructs this
long key from only a weak key and insecure channels. Part of the
resulting long secret key can be used in further rounds of
authentication and QKD to produce even more secret key. This is
illustrated in \figref{fig:construction}, and discussed in detail in
\secref{sec:smt}.

\begin{figure*}[tbp]

\begin{tikzpicture}[sArrow/.style={-{To[]},thick},
textbox/.style={draw,text width=2.6cm,text centered,minimum height=.6cm}]
\small

\def\d{1.8}
\def\e{2.5}

\node[textbox] (z2) at (1*\d,-1*\e) {Weak secret key};
\node[textbox] (z4) at (3*\d,-1*\e) {Insecure classical channels};
\node[textbox] (a1) at (0*\d,0*\e) {Insecure classical channel};
\node[textbox] (a3) at (2*\d,0*\e) {Short secret key};
\node (m3) at (2*\d,-.5*\e) {};
\node[draw,text width=1.8cm,text centered] (b2) at (1*\d,1*\e) {Authentic channel};
\node (n2) at (1*\d,.5*\e) {};
\node[textbox] (b4) at (3*\d,1*\e) {Insecure quantum channel};
\node[textbox] (c3) at (2*\d,2*\e) {Long secret key};
\node (o3) at (2*\d,1.5*\e) {};
\node[textbox] (c1) at (0*\d,2*\e) {Insecure classical channel};
\node[textbox] (c5) at (4*\d,2*\e) {Insecure classical channel};
\node[draw,text width=1.8cm,text centered] (d2) at (1*\d,3*\e) {Authentic channel};
\node[draw,text width=1.8cm,text centered] (d4) at (3*\d,3*\e) {Authentic channel};
\node (p2) at (1*\d,2.5*\e) {};
\node (p4) at (3*\d,2.5*\e) {};
\node[textbox] (e25) at (1.5*\d,4*\e) {Secure channel};
\node (q3) at (2*\d,3*\e) {};
\node (r25) at (1.5*\d,3.5*\e) {};
\node[textbox] (d6) at (5*\d,3*\e) {Insecure quantum channel};
\node[textbox] (e5) at (4*\d,4*\e) {Long secret key};
\node (r5) at (4*\d,3.5*\e) {};

\draw[thick] (z2) to (m3.center);
\draw[thick] (z4) to (m3.center);
\draw[sArrow] (m3.center) to node[auto,swap] {Non-malleable Extractors} (a3);
\draw[thick] (a1) to (n2.center);
\draw[thick] (a3) to (n2.center);
\draw[sArrow] (n2.center) to node[auto] {Authentication} (b2);
\draw[thick] (b2) to (o3.center);
\draw[thick] (b4) to (o3.center);
\draw[sArrow] (o3.center) to node[auto] {QKD} (c3);
\draw[thick] (c1) to (p2.center);
\draw[thick] (c3) to (p2.center);
\draw[sArrow] (p2.center) to node[auto] {Authentication} (d2);
\draw[thick] (c3) to (p4.center);
\draw[thick] (c5) to (p4.center);
\draw[sArrow] (p4.center) to node[auto,swap] {Authentication} (d4);
\draw[thick] (c3) to (q3.center);
\draw[thick] (d2) to (r25.center);
\draw[thick] (q3.center) to (r25.center);
\draw[sArrow] (r25.center) to node[auto] {One-Time Pad} (e25);
\draw[thick] (d4) to (r5.center);
\draw[thick] (d6) to (r5.center);
\draw[sArrow] (r5.center) to node[auto,swap] {QKD} (e5);

\end{tikzpicture}

\caption[Recursive construction of
resources]{\label{fig:construction}A constructive view of
  cryptography. A cryptographic protocol uses (weak) resources to
  construct other (stronger) resources. These resources are depicted
  in the boxes, and the arrows are protocols. Each box is a
  one-time-use resource, so the same resource appears in multiple
  boxes if different protocols require it. The long secret key
  resource in the center of the figure is split in three shorter keys,
  each of which is used by a separate protocol. The example of secure
  message transmission illustrated here is discussed in detail in
  \secref{sec:smt}.}
\end{figure*}
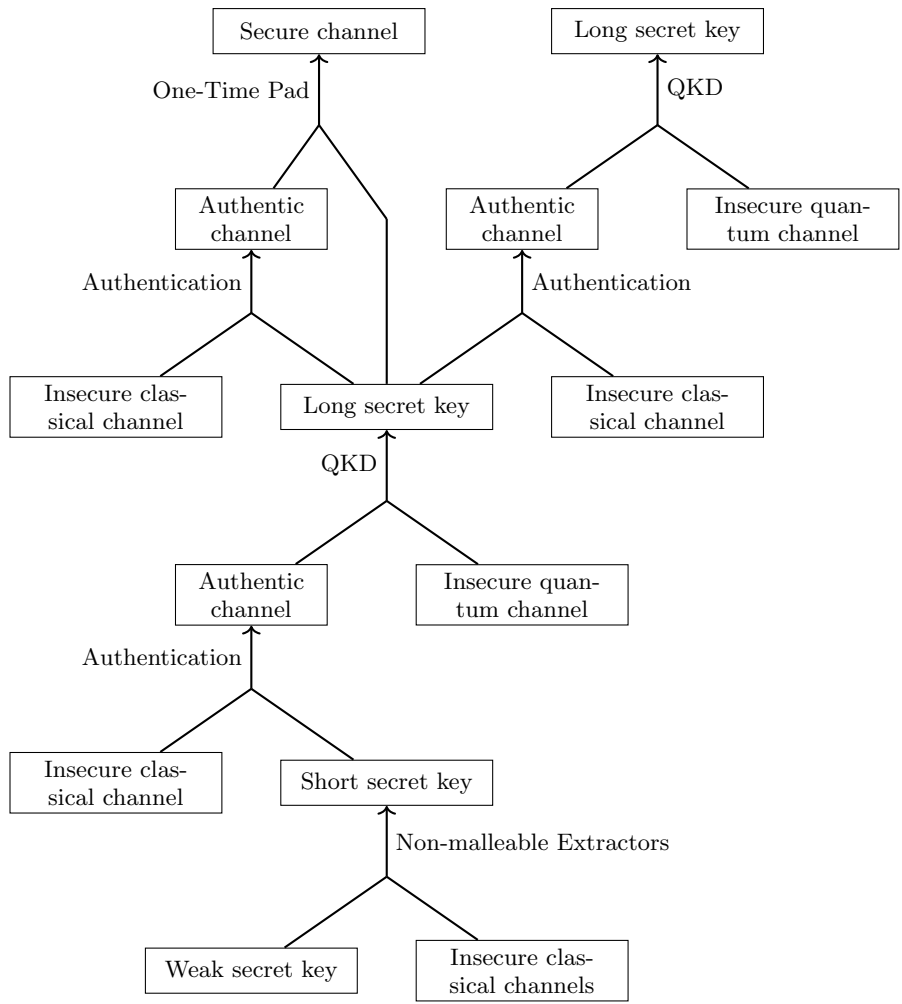

The resources used and constructed in cryptography are interactive
systems shared between players. A system that distributes secret key
or the different types of channels mentioned in the paragraph above
are examples of such resources. These are formalized on an abstract
level in \secref{sec:ac.systems}, and possible instantiations are
discussed in \secref{sec:ac.instantiating}. Static resources such as
coherent states \cite{BCP14,MZYM19} can be seen as a special case of
these.

\subsection{Example: the one-time pad}
\label{sec:ac.otp}

In this section, we describe how the notions introduced above are
employed to specify the security of a cryptographic protocol. For this
we consider a concrete example, One-Time Pad (OTP)
encryption~\cite{Vernam26}. The OTP assumes that the players have access
to an authentic channel, i.e., one which provides the receiver with
the guarantee that the messages received come from the correct sender,
but there is no guarantee about the secrecy of the messages sent on
such a channel, i.e., they may leak to Eve. The OTP also requires the
players to have access to a secret key. These two resources are drawn
as boxes with square corners in \figref{fig:otp.real}. According to
the protocol, the sender, Alice, encrypts a message~$x$ as
$y \coloneqq x \xor k$, where $k$ is the secret key, and where $\xor$
denotes the bit-wise exclusive OR operation. The ciphertext $y$ is
then sent over an authentic channel to the receiver, Bob, who decrypts
it by carrying out the operation $x = y \xor k$. At the same time, $y$
may also leak to an adversary, Eve.

\begin{figure}[tb]
  \centering
  \subfloat[Real OTP system][\label{fig:otp.real}The real OTP
    system consists of the OTP protocol $(\pi^{\otp}_A,\pi^{\otp}_B)$ together
   a secret key and authentic channel resources.]{
\begin{tikzpicture}[
sArrow/.style={->,>=stealth,thick},
thinResource/.style={draw,thick,minimum width=2cm,minimum height=1cm},
protocol/.style={draw,rounded corners,thick,minimum width=1.3cm,minimum height=2.5cm},
pnode/.style={minimum width=.9cm,minimum height=.5cm}]

\small

\def\t{3.7} 
\def\u{2.05} 
\def\v{.75}

\node[pnode] (a1) at (-\u,\v) {};
\node[pnode] (a2) at (-\u,0) {};
\node[pnode] (a3) at (-\u,-\v) {};
\node[protocol,text width=1.1cm] (a) at (-\u,0) {\footnotesize $y
  =$\\$\quad x \xor k$};
\node[yshift=-2,above right] at (a.north west) {\footnotesize
  $\pi^{\otp}_A$};
\node (alice) at (-\t,0) {Alice};

\node[pnode] (b1) at (\u,\v) {};
\node[pnode] (b2) at (\u,0) {};
\node[pnode] (b3) at (\u,-\v) {};
\node[protocol,text width=1.1cm] (b) at (\u,0) {\footnotesize $x
  =$\\$\quad y \xor k$};
\node[yshift=-2,above right] (blabel) at (b.north west) {\footnotesize $\pi^{\otp}_B$};
\node (bob) at (\t,0) {Bob};

\node[thinResource] (keyBox) at (0,\v) {};
\node[draw] (key) at (0,\v) {key};
\node[yshift=-2,above right] at (keyBox.north west) {\footnotesize Secret key};
\node[thinResource] (channel) at (0,-\v) {};
\node[yshift=-1.5,above right] at (channel.north west) {\footnotesize
  Authentic ch.};
\node (eve) at (0,-2.15) {Eve};
\node (junc) at (eve |- a3) {};

\draw[sArrow] (key) to node[auto,swap,pos=.3] {$k$} (a1);
\draw[sArrow] (key) to node[auto,pos=.3] {$k$} (b1);

\draw[sArrow] (alice) to node[auto,pos=.1] {$x$} (a2);
\draw[sArrow] (b2) to node[auto,pos=.8] {$x$} (bob);

\draw[sArrow] (a3) to node[pos=.13,auto] {$y$} node[pos=.87,auto] {$y$} (b3);
\draw[sArrow] (junc.center) to node[pos=.75,auto] {$y$} (eve);

\node[draw, gray, thick, fit=(a)(b)(blabel), inner sep=6] {};

\end{tikzpicture}}

\vspace{6pt}

\subfloat[Ideal One-Time OTP system][\label{fig:otp.ideal}The ideal
OTP system consists of the ideal secure channel and a simulator
$\sigma^{\otp}_E$.]{
\begin{tikzpicture}[
sArrow/.style={->,>=stealth,thick},
thinResource/.style={draw,thick,minimum width=1.618*2cm,minimum height=1cm},
protocolLong/.style={draw,rounded corners,thick,minimum height=1cm,minimum width=2.8cm}]

\small

\def\t{2.368} 
\def\u{-.75}
\def\v{.75}
\def\w{-1.75} 

\node[thinResource] (channel) at (0,\v) {};
\node[yshift=-1.5,above right] (chlabel)at (channel.north west) {\footnotesize
  Secure channel};
\node (alice) at (-2.6,\v) {Alice};
\node (bob) at (2.6,\v) {Bob};

\node[protocolLong] (sim) at (0,\u) {};
\node[xshift=-1.5,yshift=-2,above right] at (sim.north west) {$\sigma^{\otp}_E$};
\node[draw] (rand) at (0,\u) {\footnotesize Random string};

\draw[sArrow] (alice) to node[pos=.03,auto] {$x$}
node[pos=.96,auto] {$x$} (bob);
\draw[sArrow,dotted] (0,\v) to node[pos=.6,auto] {$|x|$} (rand);

\node (eve) at (0,\w-.4) {Eve};
\draw[sArrow] (rand) to node[pos=.75,auto] {$y$} (eve);

\node[draw, gray, thick, fit=(channel)(chlabel)(sim), inner sep=6] {};

\end{tikzpicture}}

\caption[Real and Ideal One-Time Pad systems]{\label{fig:otp}The real
  and ideal one-time pad systems. Boxes with rounded corners are local
  systems executed at Alice's, Bob's or Eve's interfaces. The
  rectangular boxes are shared resources modeling channels or shared
  keys. Arrows represent the
  transmission of messages between systems or to the environment (distinguisher).

  The real world is depicted in
  \subref{fig:otp.real}. The
   protocol consists of a part $\pi^{\otp}_A$ executed by Alice (who
   has access to the interfaces on the left hand side) and a part
    $\pi^{\otp}_B$ executed by Bob (on the right hand side). It
    takes a message $x$ at Alice's outer interface as
    well as a key~$k$, and output a ciphertext $y$ towards the
    authentic channel.  Bob's part of the protocol takes $y$ and~$k$ as
    input, and outputs the decrypted message. The channel may leak~$y$
    at Eve's interface (at the bottom).

    The ideal world is depicted in \subref{fig:otp.ideal}. The secure
    channel transmits the message perfectly from Alice's to Bob's
    interface, leaking only the message length at Eve's interface. The
    simulator $\sigma^{\otp}_E$ generates a random string $y$ of
    length $|x|$, making the real and ideal systems perfectly  indistinguishable.}
\end{figure}
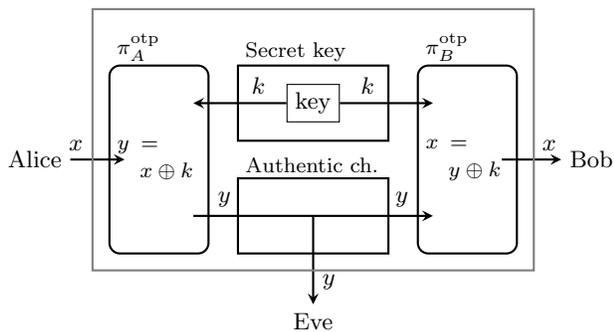
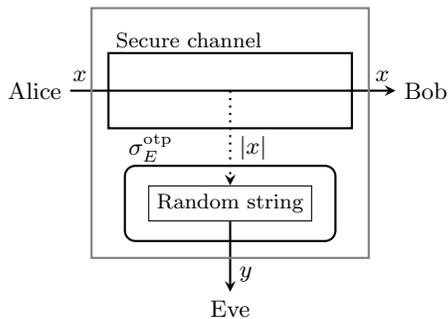

For this example, the goal of the OTP is to add confidentiality to an
authentic channel,\footnote{Alternatively, one may use the OTP with a
  completely insecure channel, and thus obtain a malleable
  confidential channel~\cite{MRT12}.} i.e., the ideal system is a secure
channel, drawn as a box with square corners in
\figref{fig:otp.ideal}. This is a channel which only leaks the message size
but no other information to Eve. It is straightforward to verify that
in the real system, provided that the key~$k$ is uniformly distributed
over bitstrings of the same length as the message~$x$, the ciphertext
$y$ is statistically independent of the message~$x$. The
ciphertext~$y$ hence does not provide Eve with any information
about~$x$, except potentially for its length~$|x|$.  It thus
constructs a secure channel from Alice to Bob. 








To make the real and ideal systems comparable, we consider an entire class of
systems, which are obtained by appending an arbitrary system, called a
\emph{simulator}, to the Eve interface of the ideal channel
resource. The systems from this class are sometimes called
\emph{relaxations} (of the ideal system). The idea is that none of
these relaxations can be more useful to Eve than the original ideal
channel, because she may always herself carry out the task of the
simulator. Security now means that the real system is
indistinguishable from at least one relaxation of the ideal system. In
our example of the OTP, such a relaxation may be obtained by a
simulator that simply generates a random string of length~$|x|$ and
outputs it at the Eve interface, as depicted in
\figref{fig:otp.ideal}.

To establish security of OTP encryption, it is therefore sufficient to
show that the real system depicted by \figref{fig:otp.real} is
indistinguishable from the relaxation of the ideal secure channel shown in~\figref{fig:otp.ideal}. That is, the two systems must behave
identically when they interact with a distinguisher.  This is indeed
the case. For both of them, if the distinguisher inputs $x$ at Alice's
interface, the same string $x$ is output at Bob's interface and a
uniformly random string of length $|x|$ is output at Eve's
interface. The two systems are thus perfectly indistinguishable \---
if the distinguisher were to take a guess for which of the two it is interacting with, it would be right with
probability exactly $1/2$. In this sense, the OTP construction is
perfectly secure.

If two systems are indistinguishable, they can be used interchangeably in any setting. For example, let some protocol $\pi'$ be proven secure if Alice and Bob are connected by a secure channel. Since the OTP constructs such a channel, it can be used in lieu of the secure channel, and composed with $\pi'$. Or equivalently, the contrapositive: if composing the OTP and $\pi'$ were to leak some vital information, which would not happen with a secure channel, a distinguisher that is either given the real or ideal system could run $\pi'$ internally and check whether this leak occurs to find out with which of the two it is interacting.





\subsection{Abstract theory of cryptographic systems}
\label{sec:ac.systems}

The previous sections introduced the concepts of resources, protocols
and simulators in an informal manner. Now, following the spirit of the
AC framework described in \secref{sec:ac.ac}), we provide an axiomatic specification of these
concepts. This will allow us to give a definition of cryptographic security, which is precise, but at the same time largely independent of
implementation details. In particular, it does not depend on the underlying computational model or the scheduling of messages exchanged between the systems. 

While this abstract approach to defining security is rather universal, we note that, when describing concrete systems and
their compositions such as those 
depicted in \figref{fig:otp}, their behaviour must of course be
specified in detail. This may be done using various
frameworks for modeling interactive (quantum) systems, e.g., the
Quantum Combs of \textcite{CDP09} or the Causal Boxes of
\textcite{PMMRT17}. This is discussed further in
\secref{sec:ac.instantiating}.

Nevertheless, the definitions that now follow refer to an abstract notion of a \emph{system}. Following the idea of abstraction motivated above, and  continuing the analogy to group theory used in \secref{sec:ac.ac}, it is sufficient to think of systems as objects on which certain operations are defined, such as their composition.  We will consider two types of systems, which we call resources and converters, and which have slightly different properties.

\paragraph{Resources.} A \emph{resource} is a system with
interfaces specified by a set $\cI$ (e.g., $\cI = \{A,B,E\}$).  
Each
interface $i \in \cI$ models how a player~$i$ can access the system
(e.g., provide inputs and read outputs). Examples of resources are a communication channel or any of the objects that appears in Fig.~\ref{fig:construction} as a box. We will sometimes use the term \emph{$\cI$-resource} to specify the interface set. Resources are equipped with
a parallel composition operator, denoted by~$\|$, that maps two $\cI$-resources to
another $\cI$-resource.

\paragraph{Converters.} A \emph{converter} is a system with two
interfaces, an \emph{inside} interface and an \emph{outside}
interface. A converter can be appended to a resource, converting it
into a new resource. For this the inside interface connects to an
interface of a resource, and the outside interface becomes the new
interface of the new resource \--- see the OTP example in
\figref{fig:otp}, where the gray boxes are new resources resulting
from composing resources and converters. We write either
$\alpha_i \aR$ or $\aR\alpha_i$ to denote the new resource with the
converter $\alpha$ connected at the interface $i$ of
$\aR$.\footnote{There is no mathematical difference between
  $\alpha_i \aR$ and $\aR\alpha_i$. It sometimes simplifies the
  notation to have the converters for some players written on the
  right of the resource and the ones for others on the left, rather
  than having all of them at the same side, hence the two notations.}
Simulators and protocols are examples of converters (see below).

Converters can be composed among themselves. There are two ways of
doing this, referred to as serial and parallel composition. These are
defined as
\begin{align*} 
(\alpha\beta)_i \aR & \coloneqq
  \alpha_i (\beta_i \aR) \\ \intertext{and}  (\alpha \| \beta)_i
  (\aR\|\aS) & \coloneqq (\alpha_i \aR) \| (\beta_i \aS), \end{align*}
  respectively. 
  
  \paragraph{Protocols.} A \emph{(cryptographic) protocol} is a family
  $\alpha = \{\alpha_i\}_i$, of converters (one for every honest
  player). A protocol can be applied to a resource $\aR$, giving a new
  resource denoted by $\alpha\aR$ or $\aR\alpha$. This resource is
  obtained by connecting each member of the family to the interface
  specified by its index.


  \paragraph{Metric.} As explained in \secref{sec:ac.realideal}, the
  distance between resources can be quantified using the notion of
  distinguishers. More generally, one may in principle
  consider any arbitrary pseudo\-/metric, $d(\cdot,\cdot)$, so that
  the following conditions hold:\footnote{If also
    $d(\aR,\aS) = 0 \implies \aR=\aS$ holds then $d$ is a metric.}
\begin{align*} \text{(identity)} & &
  d\left(\aR,\aR\right) & = 0, \\ 
  \text{(symmetry)} & & d\left(\aR,\aS\right) & =
  d\left(\aS,\aR\right), \\ 
  \text{(triangle inequality)} & & d\left(\aR,\aS\right)
  & \leq d\left(\aR,\aT\right) +
  d\left(\aT,\aS\right). 
\end{align*} Furthermore, the pseudo\-/metric must be non-increasing
under composition with resources and converters.\footnote{This only
  holds for information\-/theoretic security, which is the topic of
  most of this review.} This means that for any converter $\alpha$ and
resources $\aR,\aS,\aT$,
\begin{equation*} 
d(\alpha\aR,\alpha\aS)
  \leq d(\aR,\aS) \quad \text{and} \quad d(\aR\|\aT,\aS\|\aT) \leq
  d(\aR,\aS). \end{equation*}
In this work we often simply write $\aR \close{\eps} \aS$ instead of
$d(\aR,\aS) \leq \eps$.


\subsection{Security definition}
\label{sec:ac.security}

We are now ready to define the security of a cryptographic
protocol. We do so in \defref{def:security} in the three-player
setting, for honest Alice and Bob, and dishonest Eve \--- and
illustrate this definition in Fig.~\ref{fig:security}. Thus, in the
following, all resources have three interfaces, denoted $A$, $B$ and
$E$, and we consider honest behaviors (given by a protocol
$(\pi_A,\pi_B)$) at the $A$- and $B$\=/interfaces, but arbitrary
behavior at the $E$\=/interface. We refer to \textcite{MR11} for the
general case, when arbitrary players can be dishonest.

\begin{figure*}[tb]

\begin{tikzpicture}[scale=.8,
sArrow/.style={->,>=stealth,thick},
largeResource/.style={draw,thick,minimum width=1.618*2cm,minimum height=2cm},
lrnode/.style={minimum width=1.36*2cm,minimum height=.2cm},
llrnode/.style={minimum width=.2cm,minimum height=1.5cm},
tlrnode/.style={minimum width=.2cm,minimum height=.5cm},
thinResource/.style={draw,thick,minimum width=1.618*2cm,minimum height=1cm},
protocol/.style={draw,rounded corners,thick,minimum width=1.545cm,minimum height=2.5cm},
pnode/.style={minimum width=1cm,minimum height=.5cm},
protocolLong/.style={draw,rounded corners,thick,minimum height=1cm,minimum width=2.8cm}]

\small

\def\t{4.422} 
\def\u{2.9} 
\def\v{.6}
\def\w{1.8}

\def\a{5.2}
\def\b{8.3}
\def\tb{2.368} 
\def\vb{.4}
\def\ub{.75}
\def\wb{2.05}

\node[scale=.8,pnode] (a1) at (-\u,\v) {};
\node[scale=.8,pnode] (a2) at (-\u,0) {};
\node[scale=.8,pnode] (a3) at (-\u,-\v) {};
\node[scale=.8,protocol] (a) at (-\u,0) {};
\node[yshift=-2,above right] at (a.north west) {\footnotesize
  $\pi_A$};
\node (alice1) at (-\t,\v) {};
\node (alice2) at (-\t,0) {};
\node (alice3) at (-\t,-\v) {};

\node[scale=.8,pnode] (b1) at (\u,\v) {};
\node[scale=.8,pnode] (b2) at (\u,0) {};
\node[scale=.8,pnode] (b3) at (\u,-\v) {};
\node[scale=.8,protocol] (b) at (\u,0) {};
\node[yshift=-2,above right] at (b.north west) {\footnotesize $\pi_B$};
\node (bob1) at (\t,\v) {};
\node (bob2) at (\t,0) {};
\node (bob3) at (\t,-\v) {};

\node[scale=.8,lrnode] (r1) at (0,\v) {};
\node[scale=.8,lrnode] (r2) at (0,0) {};
\node[scale=.8,lrnode] (r3) at (0,-\v) {};
\node[scale=.8,llrnode] (rr1) at (-\v,0) {};
\node[scale=.8,llrnode] (rr2) at (0,0) {};
\node[scale=.8,llrnode] (rr3) at (\v,0) {};
\node[scale=.8,largeResource] (R) at (0,0) {};
\node[yshift=-2,above right] at (R.north west) {\footnotesize $\aR$};

\node (eve1) at (-\v,-\w) {};
\node (eve2) at (0,-\w) {};
\node (eve3) at (\v,-\w) {};

\draw[sArrow] (alice1) to (a1);
\draw[sArrow] (a2) to (alice2);
\draw[sArrow] (alice3) to (a3);

\draw[sArrow] (a1) to (r1);
\draw[sArrow] (r2) to (a2);
\draw[sArrow] (a3) to (r3);

\draw[sArrow] (r1) to (b1);
\draw[sArrow] (b2) to (r2);
\draw[sArrow] (r3) to (b3);

\draw[sArrow] (b1) to (bob1);
\draw[sArrow] (bob2) to (b2);
\draw[sArrow] (b3) to (bob3);

\draw[sArrow] (rr1) to (eve1);
\draw[sArrow] (eve2) to (rr2);
\draw[sArrow] (rr3) to (eve3);

\node at (\a,0) {\Large $\close{\eps}$};

\node[scale=.8,lrnode] (s1) at (\b,\vb+\ub) {};
\node[scale=.8,lrnode] (s2) at (\b,\ub) {};
\node[scale=.8,lrnode] (s3) at (\b,-\vb+\ub) {};
\node[scale=.8,tlrnode] (ss1) at (\b-\v,\ub) {};
\node[scale=.8,tlrnode] (ss2) at (\b,\ub) {};
\node[scale=.8,tlrnode] (ss3) at (\b+\v,\ub) {};
\node[scale=.8,thinResource] (S) at (\b,\ub) {};
\node[yshift=-2,above right] at (S.north west) {\footnotesize $\aS$};

\node[scale=.8,tlrnode] (t1) at (\b-\v,-\ub) {};
\node[scale=.8,tlrnode] (t2) at (\b,-\ub) {};
\node[scale=.8,tlrnode] (t3) at (\b+\v,-\ub) {};
\node[scale=.8,protocolLong] (sim) at (\b,-\ub) {};
\node[xshift=-2,below right] at (sim.north east) {\footnotesize $\sigma_E$};

\node (cate1) at (\b-\tb,\vb+\ub) {};
\node (cate2) at (\b-\tb,\ub) {};
\node (cate3) at (\b-\tb,-\vb+\ub) {};

\node (dave1) at (\b+\tb,\vb+\ub) {};
\node (dave2) at (\b+\tb,\ub) {};
\node (dave3) at (\b+\tb,-\vb+\ub) {};

\node (finn1) at (\b-\v,-\wb) {};
\node (finn2) at (\b,-\wb) {};
\node (finn3) at (\b+\v,-\wb) {};

\draw[sArrow] (cate1) to (s1);
\draw[sArrow] (s2) to (cate2);
\draw[sArrow] (cate3) to (s3);

\draw[sArrow] (s1) to (dave1);
\draw[sArrow] (dave2) to (s2);
\draw[sArrow] (s3) to (dave3);

\draw[sArrow] (ss1) to (t1);
\draw[sArrow] (t2) to (ss2);
\draw[sArrow] (ss3) to (t3);

\draw[sArrow] (t1) to (finn1);
\draw[sArrow] (finn2) to (t2);
\draw[sArrow] (t3) to (finn3);

\end{tikzpicture}


\caption[Cryptographic security]{\label{fig:security}A depiction of
  \defref{def:security}: a protocol $(\pi_A,\pi_B)$ constructs
  $\aS$ from $\aR$ within $\eps$ if the condition
  illustrated in this figure holds. The sequences of arrows at the
  interfaces between the objects represent (arbitrary) rounds of
  communication.}
\end{figure*}
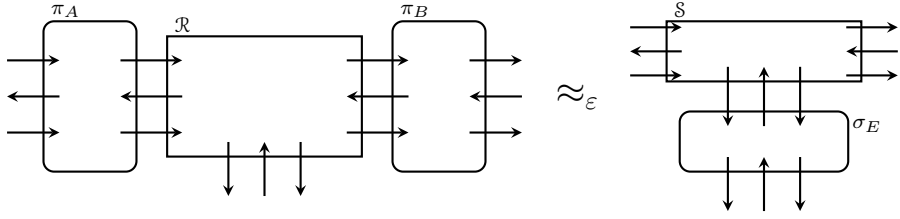

\begin{deff}[Cryptographic security~\cite{MR11}] \label{def:security}
  Let $\pi_{AB} = (\pi_A,\pi_B)$ be a protocol and $\aR$ and $\aS$ two
   resources.  We say that \emph{$\pi_{AB}$ constructs $\aS$
    from $\aR$ within $\eps$}, denoted by
  \[ 
    \aR \xrightarrow{\pi,\eps} \aS \ ,
  \]
  if there exists a converter $\sigma_E$ (called \emph{simulator})
  such that
   \begin{equation} \label{eq:def.sec} d(\pi_{AB}\aR,\aS\sigma_E) \leq
     \eps.\end{equation} If it is clear from the context what
   resources $\aR$ and $\aS$ are meant, we simply say that $\pi_{AB}$
   is $\eps$\=/secure.
\end{deff}

Although this security definition does not refer to any computational
notions, one usually only considers protocols whose converters are
computationally efficient.\footnote{In principle, any reasonable
  notion of efficiency could be considered here. However, if one takes
  the common asymptotic notion of computational complexity classes,
  one would need to describe systems in terms of a computational model
  that enables such asymptotic considerations.}  Furthermore, if one
requires security to hold under composition with protocols that have
only computational security, it is necessary to restrict the choice of
the simulator $\sigma_E$ to converters that are computationally
efficient.  All the converters and resources considered in this work
are efficient in the standard sense, so we will not mention this any
further.

For a given protocol, we usually want to make several security
statements, e.g., one about what is achieved in the presence of an
adversary (sometimes referred to as either the \emph{soundness} or
\emph{security} of a protocol), another about what is achieved when no
adversary is present (usually called either \emph{completeness} or
\emph{correctness}\footnote{In the QKD literature, correctness has
  another meaning \--- it captures the property that Alice and Bob end
  up with identical keys when Eve is active. The term
  \emph{robustness} is traditionally used in the QKD literature to
  denote the performance of a QKD protocol under honest (noisy)
  conditions. We refer to \secref{sec:security.rob} for a discussion
  of the relation between completeness and robustness.}). These two
cases are captured by considering different resources $\aR$ and $\aS$,
but the same protocol $\pi_{AB}$. We will illustrate this in
\secref{sec:qkd} for the case of QKD.


If two protocols $\pi$ and $\pi'$ are
$\eps$- and $\eps'$\=/secure, the composition of the two is
$(\eps+\eps')$\=/secure. More precisely, let protocols $\pi$ and
$\pi'$ construct $\aS$ from $\aR$ and $\aT$
from $\aS$ within $\eps$ and $\eps'$, respectively,
i.e.,
\[\aR \xrightarrow{\pi,\eps}\aS \qquad \text{and}
  \qquad \aS \xrightarrow{\pi',\eps'}\aT.\] It is
then a consequence of the triangle inequality of the distinguishing
metric that $\pi'\pi$ constructs $\aT$ from $\aR$
within $\eps+\eps'$,
\[\aR \xrightarrow{\pi'\pi,\eps+\eps'}\aT.\] A similar
statement holds for parallel composition. Let $\pi$ and $\pi'$
construct $\aS$ and $\aS'$ from $\aR$ and
$\aR'$ within $\eps$ and $\eps'$, respectively, i.e.,
\[\aR \xrightarrow{\pi,\eps}\aS \qquad \text{and}
\qquad \aR' \xrightarrow{\pi',\eps'} \aS'.\]
If these resources and protocols are composed in parallel, we find that $\pi \| \pi'$ constructs $\aS \| \aS'$ from $\aR \| \aR'$ within $\eps+\eps'$, 
\[\aR \| \aR' \xrightarrow{\pi \| \pi',\eps+\eps'}
\aS \| \aS'.\]
Proofs of these statements can be found in \textcite{MR11,Mau12}.

\subsection{Interpretation of the security parameter}
\label{sec:ac.interpretation}

Any pseudo\-/metric which satisfies the basic axioms can be used in
\defref{def:security}. However, the usual pseudo\-/metric is the
\emph{distinguishing advantage}, which was introduced in
\eqnref{eq:adv1} in \secref{sec:ac.realideal}. For two resources $\aR$
and $\aS$ and a distinguisher $\fD$, \eqnref{eq:adv1} may be rewritten
as
  \begin{equation}
  \label{eq:adv2} 
    d^{\fD}\left(\aR,\aS\right) \coloneqq \left| \Pr[\fD(\aR) = 0] - \Pr[\fD(\aS) = 0] \right|,
  \end{equation}
  where $\fD(\aR)$ and $\fD(\aS)$ are the random variables
  corresponding to the output of the distinguisher when interacting
  with $\aR$ and $\aS$, respectively. Alternatively, one may
    define the distinguishing advantage for $\fD$ as
\begin{equation}
  \label{eq:adv3}
  d^{\fD}\left(\aR,\aS\right) \coloneqq \left| 2\ddistinguish{\aR,\aS} - 1
  \right|,
\end{equation}
where $\ddistinguish{\aR,\aS}$ is the probability for $\fD$ to
correctly guess with which of $\aR$ or $\aS$ it is interacting when
either one is chosen with probability $1/2$, i.e.,
\[\ddistinguish{\aR,\aS} \coloneqq \frac{1}{2} \Pr[\fD(\aR) = 0] + \frac{1}{2}
  \Pr[\fD(\aS) = 1].\] It is easy to see that \eqnsref{eq:adv2} and
\eqref{eq:adv3} are equivalent.

One then takes the supremum of this expression over all distinguishers
$\fD$ of a given class $\bD$, i.e.,
  \begin{equation}
  \label{eq:adv4} 
  d^{\bD}\left(\aR,\aS\right)  \coloneqq \sup_{\fD \in \bD} d^{\fD}\left(\aR,\aS\right).
  \end{equation}
  The class $\bD$ may be restricted to a particular set of systems
  (e.g., those that are computationally efficient). The strongest
  security notion corresponds to not imposing any restriction on the
  set of distinguishers (beyond what is allowed by physical laws),
  which is the one considered in most of this work, and which we
  denote
\[
  d(\aR,\aS) \leq \eps \qquad \text{or} \qquad \aR \close{\eps} \aS.
\]
  
The distinguishing advantage is of particular importance because it
has an operational interpretation. If the distinguisher notices a
difference between the two, then something in the real setting did not
behave ideally. This can be loosely interpreted as a failure
occurring. If a distinguisher can guess correctly with probability $1$
with which system it is interacting (i.e.,
$\distinguish{\aR,\aS} = 1$), a failure must occur systematically. If,
conversely, it can only guess correctly with probability $1/2$ (which
corresponds to a random guess), this means that the real system always
behaves like the ideal one, hence no failure occurs at all. The
practically relevant cases are those in between. As shown in
\appendixref{app:op}, a guessing probability
$ \distinguish{\aR,\aS} = p$ corresponds to a failure with probability
$\eps = 2p-1$, which is exactly the distinguishing advantage. The
latter can thus be interpreted as the probability that a failure
occurs in the real protocol. This operational interpretation is
crucial for applications, where one must be able to specify what
maximum value $\eps$ one is ready to tolerate.

A bound on the security $\eps$ of a protocol does however not tell us how ``bad'' this failure is. For example, a key distribution protocol which produces perfectly uniform keys for Alice and Bob, but with probability $\eps$ the keys of Alice and Bob are different, is $\eps$\=/secure. Likewise, a protocol which gives $1$ bit of the key to Eve with probability $\eps$, but is perfect otherwise, and another protocol which gives the entire key to Eve with probability $\eps$, but is perfect otherwise, are both $\eps$\=/secure as well. One could argue that leaking the entire key is worse than leaking one bit, which is worse than not leaking anything but generating mismatching keys, and this should be reflected in the level of security of the protocol. However, leaking one bit can be as bad as leaking the entire key if only one bit of the message is vital, and this happens to be the bit obtained by Eve. Having mismatching keys and therefore misinterpreting a message could have more dire consequences than leaking the message to Eve. How bad a failure is depends on the use of the protocol, and since the purpose of cryptographic security is to make a security statement that is valid for all contexts, bounding the probability that a failure (grave or not) occurs, is the best it can do.

The above is particularly relevant if one considers larger cryptographic tasks that may, for instance, use key distribution numerous times as a subprotocol. Since, as described, a security bound gives no idea of the gravity of a failure, the failure of the key distribution protocol could have an impact on the entire cryptographic system. For example, if the key is used to authenticate later communication, the security of the latter may  be affected by a failure in key distribution. This makes it necessary to choose the probability $\eps$ of a failure in any protocol small enough so that the accumulation of all possible failure probabilities used for the larger cryptographic task are still small.  One way of doing this is to increase the security parameter of a protocol on a regular basis, e.g., once a year the parameters are tweaked so that the new probability of a failure is divided by two. If the accumulated failure during the first year is given by $\eps$, then the total failure over an arbitrarily long lifetime of the system is bounded by $2\eps > \eps + \eps/2 + \eps/4 + \dotsc$

\subsection{Instantiating systems}
\label{sec:ac.instantiating}

As mentioned previously, specifying a concrete behavior of a system
requires a model of systems that satisfies the axioms presented in
\secref{sec:ac.systems}, i.e., provides composition and a
pseudo\-/metric with the required properties. In most of this review
we consider interactive quantum systems with sequential scheduling,
i.e., a system receives a (quantum) message, then sends a (quantum)
message, then receives a (quantum) message, etc. Such systems were
analyzed independently by
\textcite{GW07,CDP09,Har11,Gut12,Har12,Har15} [see also
\textcite{Har05,Har07}], to which we refer in the following using the
term from \textcite{CDP09}, namely \emph{quantum combs}. Quantum combs
are a generalization of \emph{random systems}~\cite{Mau02,MPR07} to
quantum information theory.

What these works essentially show is that an interactive system which
receives the $i^{\text{th}}$ input in register $A_i$ and produces the
$i^{\text{th}}$ output in register $B_i$ and which processes $n$
inputs can be fully described by a completely\-/positive, trace
preserving (CPTP) map
\[ \cE : \cL\left(\bigotimes_{i = 1}^n \hilbert_{A_i}\right) \to
  \cL\left(\bigotimes_{i = 1}^n \hilbert_{B_i}\right).\] Conversely,
any such CPTP map corresponds to an interactive system if it respects
causality, i.e., if for any $j \leq n$ and any
$\rho,\sigma \in \cL\left(\bigotimes_{i = 1}^n \hilbert_{A_i}\right)$
with $\trace[A_{>j}]{\rho} = \trace[A_{>j}]{\sigma}$ we have
\[ \ktrace[B_{>j}]{\cE(\rho)} = \ktrace[B_{>j}]{\cE(\sigma)},\] where
$X_{>j} \coloneqq \bigotimes_{i = j+1}^nX_i$.

Systems such as the resources and converters for the one-time pad in
\figref{fig:otp} \--- or the quantum key distribution systems that
will come in \secref{sec:qkd} \--- all correspond to specific quantum
combs. (Nonetheless we usually give informal descriptions of such systems,
rather than using the comb formalism, especially when the details of their behaviour is not relevant for our claims.) The only results discussed in this
work that cannot be modelled as quantum combs are the relativistic
systems reviewed in \secref{sec:relativistic}, which require a more
complex model of systems that can capture space\-/time and also
satisfies the required axioms, e.g., the causal boxes from
\textcite{PMMRT17}.


\section{Defining security of QKD}
\label{sec:qkd}

The first Quantum Key Distribution (QKD) protocols were proposed
independently by \textcite{BB84} \--- inspired by early work on
quantum money by \textcite{Wie83} \--- and by \textcite{Eke91}. The
original papers discussed security in the presence of an eavesdropper
that could perform only limited operations on the quantum channel. The
models of security evolved over time \--- a review of these is given
in \secref{sec:qkd.other} \--- and the security criterion used today
was introduced in 2005~\cite{RK05,BHLMO05,Ren05}, the so\-/called
\emph{trace distance criterion}. It was argued that $\rho_{KE}$, the
joint state of the final key $K$ and quantum information gathered by
an eavesdropper $E$, must be close to an ideal key, $\tau_K$, that is
perfectly uniform and independent from the adversary's information
$\rho_E$:
\begin{equation} \label{eq:d} (1-\pabort) D \left(     \rho_{KE},\tau_K \otimes \rho_E \right) \leq \eps \   ,
\end{equation} 
where $\pabort$ is the probability that the protocol
aborts,\footnote{In \textcite{Ren05}, \eqnref{eq:d} was introduced
  with a subnormalized state $\rho_{KE}$, with
  $\trace{\rho_{KE}}=1-\pabort$, instead of explicitly writing the
  factor $(1-\pabort)$. The two formulations are however
  mathematically equivalent.} $D(\cdot,\cdot)$ is the trace
distance\footnote{This metric corresponds to the distinguishing
  advantage between two quantum states, and is formally defined in
  \appendixref{app:op}.} and $\eps \in [0,1]$ is a (small) real
number. This security criterion was discussed within the cryptography
frameworks introduced in \secref{sec:ac} by \textcite{BHLMO05,MR09}
\--- see also \appendixref{app:op}.

We note that \eqnref{eq:d} only captures how much an adversary knows
about the key (called \emph{secrecy} in the QKD literature). A QKD
scheme must additionally guarantee that Alice and Bob hold the same
key with high probability (called \emph{correctness}) and that under
reasonable noisy conditions a QKD scheme produces a key with high
probability (called \emph{robustness}). In this section, we describe
how these security notions fit into the general framework described in
\secref{sec:ac}. For this we first explain in
\secref{sec:qkd.overview} how to use the AC framework to model the
task achieved by a QKD protocol \--- namely constructing a secret key
resource from an insecure quantum channel and an authentic classical
channel \--- and write out the corresponding security
definitions. Then in \secref{sec:security} we show how to derive
secrecy, correctness and robustness from these security
definitions. And finally, in \secref{sec:qkd.other} we review other
security definitions that have appeared in the literature, and explain
how they relate to the trace distance criterion, namely \eqnref{eq:d}.

\subsection{The real and ideal QKD systems}
\label{sec:qkd.overview}

In order to apply the general AC security definition to QKD, we first need to specify the ideal key resource, which we do in \secref{sec:qkd.ideal}. Likewise, we specify in \secref{sec:qkd.protocol} the real QKD system consisting of the protocol, an authentic classical channel and an insecure quantum channel. Plugging these systems in \defref{def:security}, we obtain in \secref{sec:qkd.security} the security criteria for QKD.

\subsubsection{Ideal key}
\label{sec:qkd.ideal}

The goal of a key distribution protocol is to generate a secret key shared between two players, Alice and Bob. One can represent such a resource by a box, one end of which is in Alice's lab, and another in Bob's. It provides each of them with a secret key of a given length, but does not give Eve any information about the key. This is illustrated in \figref{fig:qkd.resource.simple}, and is the key resource we used in the OTP construction [\figref{fig:otp.real}].

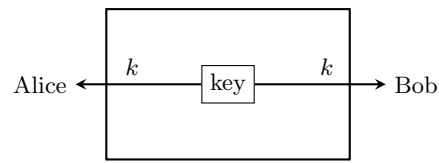
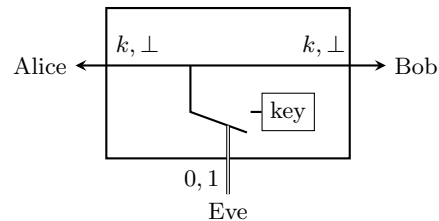
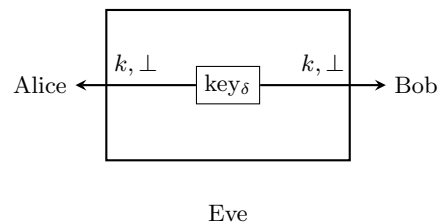
\begin{figure}[tb]
  \centering

  \subfloat[Simple ideal key][\label{fig:qkd.resource.simple}A
  resource that always gives a key $k$ to Alice and Bob, and nothing
  to Eve.]{
\begin{tikzpicture}[
sArrow/.style={->,>=stealth,thick},
largeResource/.style={draw,thick,minimum width=1.618*2cm,minimum height=2cm}]

\small

\def\u{0} 

\node[largeResource] (keyBox) at (0,0) {};
\node (alice) at (-2.5,\u) {Alice};
\node (bob) at (2.5,\u) {Bob};
\node (eve) at (0,-1.7) {Eve};
\node[draw] (key) at (0,0) {key};

\draw[sArrow] (key) to node[pos=.55,auto,swap] {$k$} (alice);
\draw[sArrow] (key) to node[pos=.55,auto] {$k$} (bob);

\end{tikzpicture}
} 

\vspace{6pt}

\subfloat[Ideal key with switch][\label{fig:qkd.resource.switch}A
resource that allows Eve to decide if Alice and Bob get a key $k$ or
an error $\bot$.]{
\begin{tikzpicture}[
sArrow/.style={->,>=stealth,thick},
largeResource/.style={draw,thick,minimum width=1.618*2cm,minimum height=2cm}]

\small

\def\u{.236} 

\node[largeResource] (keyBox) at (0,0) {};
\node (alice) at (-2.5,\u) {Alice};
\node (bob) at (2.5,\u) {Bob};
\node (eve) at (0,-1.7) {Eve};
\node[draw] (key) at (.8,\u/2-.5) {key};
\node (junc) at (-.5,0 |- key.center) {};

\draw[sArrow,<->] (alice) to node[pos=.2,auto] {$k,\bot$} node[pos=.8,auto] {$k,\bot$} (bob);
\draw[thick] (junc.center |- 0,\u) to (junc.center) to node[pos=.666] (handle) {} +(160:-.8);
\draw[thick] (.3,0 |- junc.center) to (key);
\draw[double] (eve) to node[pos=.2,auto] {$0,1$} (handle.center);

\end{tikzpicture}
}

\vspace{6pt}

\subfloat[Probabilistic Ideal
key][\label{fig:qkd.resource.probabilistic}A resource that generates a
perfect key with probability $1-\delta$ and outputs an error $\bot$
with probability $\delta$.]{








\begin{tikzpicture}[
sArrow/.style={->,>=stealth,thick},
largeResource/.style={draw,thick,minimum width=1.618*2cm,minimum height=2cm}]

\small

\def\u{0} 

\node[largeResource] (keyBox) at (0,0) {};
\node (alice) at (-2.5,\u) {Alice};
\node (bob) at (2.5,\u) {Bob};
\node (eve) at (0,-1.7) {Eve};
\node[draw] (key) at (0,0) {key$_\delta$};

\draw[sArrow] (key) to node[pos=.5,auto,swap] {$k,\bot$} (alice);
\draw[sArrow] (key) to node[pos=.5,auto] {$k,\bot$} (bob);

\end{tikzpicture}
}

\caption[Secret key resources]{\label{fig:qkd.resource}Some depictions
  of shared secret key resources.}
\end{figure}

However, if we wish to realize such a functionality with QKD, there is
a caveat: an eavesdropper can always prevent any real QKD protocol
from generating a key by cutting or jumbling the communication lines
between Alice and Bob, and this must be taken into account by the
definition of the ideal resource. This box thus also has an interface
accessible to Eve, which provides her with a switch that, when
pressed, prevents the box from generating this key. We depict this in
\figref{fig:qkd.resource.switch}.

If an OTP protocol uses the key generated by the resource of
\figref{fig:qkd.resource.switch}, we need to consider two cases. If
Eve prevents a key from being generated, the construction is trivially
secure \--- in this case, Alice and Bob do not have a key and
therefore cannot send any message. And in the case where a key
is generated, we have the situation depicted by
\figref{fig:qkd.resource.simple}, which is the situation we already
analyzed in \secref{sec:ac.otp}.

As explained above, an adversary can prevent a key from getting
distributed by disrupting the communication channels. But even if no
adversary is present, one might still wish to take into account that,
due to noise or other disturbance, it can happen that no key is
generated. One may in this case be able to bound the probability of
successfully distributing a key, and so the ideal resource constructed
is stronger than that of \figref{fig:qkd.resource.switch} (where there
is no bound on the probability of getting a key), but weaker than that
of \figref{fig:qkd.resource.simple} (where a key is generated with
probability $1$). This middle point is depicted in
\figref{fig:qkd.resource.probabilistic} (where a key is generated
with probability $1-\delta$) and is treated in
\secref{sec:security.rob}.


\subsubsection{Real QKD system}
\label{sec:qkd.protocol}

\paragraph{Protocol.}
There exist various types of QKD protocols, which differ by their use of resources and hence practical feasibility \cite{SBCDLP09}. For example, in \emph{entanglement\-/based} protocols, first proposed in \textcite{Eke91}, Alice and Bob use a source of entanglement together with a classical authentic (but otherwise insecure) communication channel to generate their keys. Here, we focus on prepare-and-measure schemes, where instead of having access to entanglement, it is assumed that Alice can send quantum states to Bob. These protocols, for which  \textcite{BB84} is the most prominent example, are technologically less challenging than entanglement-based ones, for they do not require the generation of entanglement. Alice merely has to prepare states and send them to Bob, and Bob has to measure them. 

QKD protocols can roughly be divided into three phases: quantum state
distribution, error estimation and classical post\-/processing. In the
first, Alice sends some quantum states to Bob, who measures them upon
reception, obtaining a classical string, called the \emph{raw key}.
In the error estimation phase, they sample some bits at random
positions in the raw key and estimate the noise on the quantum channel
by comparing these values to what Bob should have obtained. If the
noise level is above a certain threshold, they abort the protocol and
output an error message. If the noise is low enough, they move on to
the third phase, in which they perform error correction and privacy
amplification on their respective strings. Error correction allows Bob
to correct the bits where his raw key is different from
Alice's. Privacy amplification turns the raw key, about which an
adversary may still have partial information, into the final secret
key, i.e., uniform strings $k_A$ and $k_B$ for Alice and Bob,
respectively (which should ideally be equal).

\paragraph{Resources.} 
The security of a QKD protocol depends of course also on the resources
we start with. As mentioned previously, we are interested in making
 statements about two cases. In the presence of an active adversary, we
wish to guarantee that any key generated is secure (soundness). But
this is not sufficient, since a protocol that always aborts and never
distributes a key satisfies such a criterion, but is completely
pointless. We thus also want to guarantee that if no adversary is
present \--- only natural (low) noise \--- a key will be generated
with high probability (completeness).

These two cases are modeled by considering different resources in the
real world. In the case of an active adversary, the resources
available for a prepare\-/and\-/measure scheme are a one-way insecure
quantum channel from Alice to Bob (i.e., Eve may change and insert
messages on the channel) and a classical two-way authentic channel
(i.e., it allows authenticated communication from Alice to Bob and Bob
to Alice, but Eve may also listen in). These are illustrated in
\figref{fig:qkd.real.adv}. Recall that this construction is then
supposed to realize the ideal system depicted in
\figref{fig:qkd.resource.switch}.

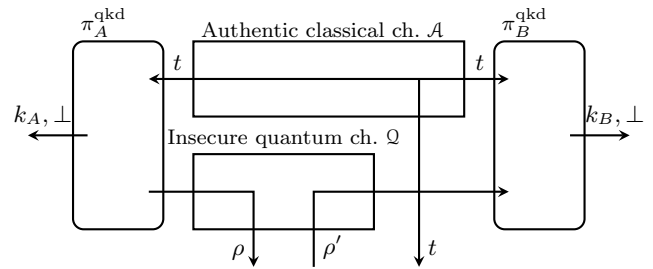
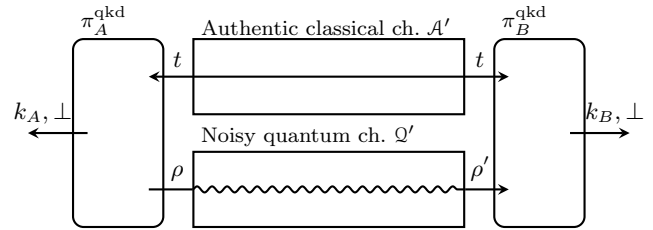
\begin{figure}[tb]
  \centering \subfloat[With adversary][\label{fig:qkd.real.adv}Eve's
  interfaces of the channel resources give her full access to the
  quantum communication and allow her to read the messages on the
  authentic channel.]{
\begin{tikzpicture}[
sArrow/.style={->,>=stealth,thick},
thinResource/.style={draw,thick,minimum width=2.4cm,minimum height=1cm},
longResource/.style={draw,thick,minimum width=3.6cm,minimum height=1cm},
protocol/.style={draw,rounded corners,thick,minimum width=1.2cm,minimum height=2.5cm},
pnode/.style={minimum width=.8cm,minimum height=.5cm}]

\small

\def\t{4} 
\def\u{2.8} 
\def\v{.75}
\def\w{.6} 

\node[pnode] (a1) at (-\u,\v) {};
\node[pnode] (a2) at (-\u,0) {};
\node[pnode] (a3) at (-\u,-\v) {};
\node[protocol] (a) at (-\u,0) {};
\node[yshift=-2,above right] at (a.north west) {\footnotesize
  $\pi^{\qkd}_A$};
\node (alice) at (-\t,0) {};

\node[pnode] (b1) at (\u,\v) {};
\node[pnode] (b2) at (\u,0) {};
\node[pnode] (b3) at (\u,-\v) {};
\node[protocol] (b) at (\u,0) {};
\node[yshift=-2,above right] at (b.north west) {\footnotesize $\pi^{\qkd}_B$};
\node (bob) at (\t,0) {};

\node[longResource] (cch) at (0,\v) {};
\node[yshift=-2,above right] at (cch.north west) {\footnotesize
  Authentic classical ch.~$\aA$};
\node[thinResource] (qch) at (-\w,-\v) {};
\node[yshift=-1.5,above] at (qch.north) {\footnotesize
  Insecure quantum ch.~$\aQ$};
\node (eveq1) at (-\w-.4,-1.75) {};
\node (junc1) at (eveq1 |- a3) {};
\node (eveq2) at (-\w+.4,-1.75) {};
\node (junc2) at (eveq2 |- a3) {};
\node (evec) at (\w+\w,-1.75) {};
\node (junc3) at (evec |- b1) {};

\draw[sArrow,<->] (a1) to node[auto,pos=.08] {$t$} node[auto,pos=.92] {$t$}  (b1);
\draw[sArrow] (junc3.center) to node[auto,pos=.9] {$t$} (evec.center);

\draw[sArrow] (a2) to node[auto,pos=.75,swap] {$k_{A},\bot$} (alice.center);
\draw[sArrow] (b2) to node[auto,pos=.75] {$k_{B},\bot$} (bob.center);

\draw[sArrow] (a3) to (junc1.center) to node[pos=.8,auto,swap] {$\rho$} (eveq1.center);
\draw[sArrow] (eveq2.center) to node[pos=.264,auto,swap] {$\rho'$} (junc2.center) to (b3);

\end{tikzpicture}
}

\vspace{6pt}

\subfloat[Without adversary][\label{fig:qkd.real.noise}In a model with
  natural noise, the resources $\aQ$ and $\aA$ are replaced with
  (non\-/malicious) variants $\aQ'$ and $\aA'$ that have
  a blank interface for Eve and a fixed noise model for the channel
  $\aQ'$.]{
\begin{tikzpicture}[
sArrow/.style={->,>=stealth,thick},
sLine/.style={-,thick},
nLine/.style={-,thick,decorate,decoration={snake,amplitude=.4mm,segment length=2mm,post length=1mm}},
thinResource/.style={draw,thick,minimum width=2.4cm,minimum height=1cm},
longResource/.style={draw,thick,minimum width=3.6cm,minimum height=1cm},
filter/.style={draw,thick,minimum width=1.618cm,minimum height=1cm},
lineFilter/.style={draw,ultra thick,minimum width=.8cm,inner sep=0},
protocol/.style={draw,rounded corners,thick,minimum width=1.2cm,minimum height=2.5cm},
pnode/.style={minimum width=.8cm,minimum height=.5cm}]

\small

\def\t{4} 
\def\u{2.8} 
\def\v{.75}
\def\w{.6} 

\node[pnode] (a1) at (-\u,\v) {};
\node[pnode] (a2) at (-\u,0) {};
\node[pnode] (a3) at (-\u,-\v) {};
\node[protocol] (a) at (-\u,0) {};
\node[yshift=-2,above right] at (a.north west) {\footnotesize
  $\pi^{\qkd}_A$};
\node (alice) at (-\t,0) {};

\node[pnode] (b1) at (\u,\v) {};
\node[pnode] (b2) at (\u,0) {};
\node[pnode] (b3) at (\u,-\v) {};
\node[protocol] (b) at (\u,0) {};
\node[yshift=-2,above right] at (b.north west) {\footnotesize $\pi^{\qkd}_B$};
\node (bob) at (\t,0) {};

\node[longResource] (cch) at (0,\v) {};
\node[yshift=-2,above right] at (cch.north west) {\footnotesize
  Authentic classical ch.~$\aA'$};
\node[longResource] (qch) at (0,-\v) {};
\node[yshift=-1.5,above right] at (qch.north west) {\footnotesize
  Noisy quantum ch.~$\aQ'$};




\draw[sArrow,<->] (a1) to node[auto,pos=.08] {$t$} node[auto,pos=.92] {$t$}  (b1);

\draw[sArrow] (a2) to node[auto,pos=.75,swap] {$k_{A},\bot$} (alice.center);
\draw[sArrow] (b2) to node[auto,pos=.75] {$k_{B},\bot$} (bob.center);

\draw[sLine] (a3) to node[pos=.65,auto] {$\rho$} (qch.west);
\draw[nLine] (qch.west) to (qch.east);
\draw[sArrow] (qch.east) to node[pos=.35,auto] {$\rho'$} (b3);


\end{tikzpicture}
}

\caption[QKD system]{\label{fig:qkd.real}The real QKD system \---
  Alice has access to the left interface, Bob to the right interface
  and Eve to the lower interface \--- consists of the protocol
  $(\pi^{\qkd}_A,\pi^{\qkd}_B)$, the insecure quantum channel $\aQ$ in
  \subref{fig:qkd.real.adv} [and a noisy channel $\aQ'$ in
  \subref{fig:qkd.real.noise}] and two-way authentic classical channel
  $\aA$ (or $\aA'$, respectively). As before, arrows represent the
  transmission of (classical or quantum) messages.

  The protocols of Alice and Bob $(\pi^{\qkd}_A,\pi^{\qkd}_B)$ abort
  if they detect too much interference, i.e., if $\rho'$ is not
  similar enough to $\rho$ to obtain a secret key of the desired
  length. They run the classical post\-/processing over the authentic
  channel, obtaining keys $k_A$ and $k_B$. The message $t$ depicted on
  the two-way authentic channel represents the entire transcript of
  the classical communication between Alice and Bob during the
  protocol.}
\end{figure}

The quantum channel is used in the protocol when Alice sends the
qubits she prepared to Bob. This channel may be completely under the
control of Eve, who could apply any operation allowed by physics to
what is sent over the channel. The authentic channel is used during
the next two phases of the protocol, in which Alice and Bob estimate
the noise in their raw keys and perform the post-processing. Such a
channel faithfully transmits messages between Alice and Bob, but
provides Eve with a copy as well.  Since an authentic channel can be
constructed from an insecure channel and a short shared secret key,
QKD is sometimes referred to as a \emph{key expansion}
protocol.\footnote{We model QKD this way in \secref{sec:smt}.}

The second case is modeled by resources which are not controlled by
Eve anymore. Instead, the quantum channel has a fixed noise model and
the authentic channel does not provide copies of the messages to
Eve. This is drawn in \figref{fig:qkd.real.noise}. With these assumed
resources, the ideal resource one wishes to construct is given by
\figref{fig:qkd.resource.probabilistic}.

\subsubsection{Security}
\label{sec:qkd.security}

For the following, we denote by $(\pi^\qkd_A,\pi^\qkd_B)$ the QKD
protocol, with $\pi^\qkd_A$ and $\pi^\qkd_B$ the converters applied by
Alice and Bob, respectively.  We furthermore denote by $\aQ$ the
insecure quantum channel and by $\aA$ the authentic classical channel,
as drawn in \figref{fig:qkd.real.adv}. Their non\-/malicious
counterparts are denoted $\aQ'$ and $\aA'$,
respectively, as in \figref{fig:qkd.real.noise}. Finally,
let $\aK$ be the secret key resource of
\figref{fig:qkd.resource.switch}, and $\aK'$ the secret key
resource of \figref{fig:qkd.resource.probabilistic}. Applying
\defref{def:security}, we find that $(\pi_A^{\qkd},\pi_B^{\qkd})$
constructs $\aK$ from $\aQ$ and $\aA$ within
$\eps$ if 
\begin{equation} \label{eq:qkd.security}
  \exists \sigma_E, \quad \pi_A^{\qkd}\pi_B^{\qkd}(\aQ \| \aA)
  \close{\eps} \aK \sigma_E,
\end{equation}
and  $(\pi_A^{\qkd},\pi_B^{\qkd})$
constructs $\aK'$ from $\aQ'$ and $\aA'$ within
$\eps'$ if 
\begin{equation} \label{eq:qkd.robust}
  \pi_A^{\qkd}\pi_B^{\qkd}(\aQ'
  \| \aA') \close{\eps'} \aK'.
\end{equation}
Note that no simulator is needed in \eqnref{eq:qkd.robust}, because
both the real and ideal system have a blank interface for Eve.  The
left- and right-hand sides of \eqnref{eq:qkd.security} are illustrated
in Figs.~\ref{fig:qkd.real.adv} and \ref{fig:qkd.resource.sim}, and
the left- and right-hand sides of \eqnref{eq:qkd.robust} are
illustrated in Figs.~\ref{fig:qkd.real.noise} and
\ref{fig:qkd.resource.probabilistic}. These two conditions are
decomposed into simpler criteria in \secref{sec:security}.

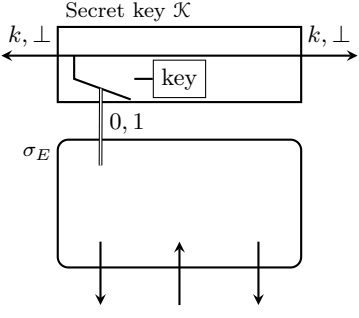
\begin{figure}[tb]
\begin{tikzpicture}[
sArrow/.style={->,>=stealth,thick},
thinResource/.style={draw,thick,minimum width=1.618*2cm,minimum height=1cm},
simulator/.style={draw,rounded corners,thick,minimum width=1.618*2cm,minimum height=1.7cm},
snode/.style={minimum width=1.1cm,minimum height=1.2cm},
innersnode/.style={minimum width=.4cm,minimum height=.3cm}]

\small

\def\t{2.368} 
\def\u{-1.85} 
\def\w{-3.2} 
\def\v{.118} 
\def\s{1.05}

\node[thinResource] (keyBox) at (0,0) {};
\node[draw] (key) at (0,\v/2-.25) {key};
\node (junc) at (-1.4,0 |- key.center) {};
\node[yshift=-1.5,above right] at (keyBox.north west) {\footnotesize
  Secret key $\aK$};
\node (alice) at (-\t,\v) {};
\node (bob) at (\t,\v) {};

\draw[sArrow,<->] (alice.center) to node[pos=.08,auto] {$k,\bot$} node[pos=.92,auto] {$k,\bot$} (bob.center);
\draw[thick] (junc.center |- 0,\v) to (junc.center) to node[pos=.472] (handle) {} +(160:-.8);
\draw[thick] (-.6,0 |- junc.center) to (key);

\node[simulator] (sim) at (0,\u) {};
\node[xshift=1.5,below left] at (sim.north west) {\footnotesize
  $\sigma_E$};
\node[innersnode] (a1) at (-\s,\u+.35) {};
\node[innersnode] (a2) at (-\s,\u-.35) {};
\node[innersnode] (b2) at (\s,\u-.35) {};
\node[innersnode] (c2) at (0,\u-.35) {};

\node (evel) at (-\s,\w) {};
\node (evec) at (0,\w) {};
\node (ever) at (\s,\w) {};

\draw[double] (a1) to node[pos=.55,auto,swap] {$0,1$} (handle.center);
\draw[sArrow] (a2) to (evel.center);
\draw[sArrow] (evec.center) to (c2);
\draw[sArrow] (b2) to (ever.center);

\end{tikzpicture}

\caption[Ideal key \& simulator]{\label{fig:qkd.resource.sim}The key resource from
\figref{fig:qkd.resource.switch} with a simulator $\sigma_E$. This
corresponds to the ideal world in \eqnref{eq:qkd.security}.}
\end{figure}

\subsection{Reduction to the trace distance criterion}
\label{sec:security}

By applying the general AC security definition to QKD, we obtained two
criteria, \eqnsref{eq:qkd.security} and \eqref{eq:qkd.robust},
capturing soundness and completeness, respectively. In this section we
derive the trace distance criterion, \eqnref{eq:d}, introduced at the
beginning of \secref{sec:qkd}, from \eqnref{eq:qkd.security}. We first
show in \secref{sec:security.dist} that the distinguishing advantage
used in the previous sections reduces to the trace distance between
the quantum states gathered by the distinguisher interacting with the
real and ideal systems. Then in \secref{sec:security.simulator}, we
determine the simulator $\sigma_E$ of the ideal system. In
\secref{sec:security.simple} we decompose the resulting security
criterion into a combination of \emph{secrecy} \--- the trace distance
criterion \--- and \emph{correctness} \--- the probability that
Alice's and Bob's keys differ. In the last section,
\ref{sec:security.rob}, we consider the security condition of
\eqnref{eq:qkd.robust}, which captures (security) guarantees in the
absence of a malicious adversary.  We show how this condition can be
used to model the \emph{robustness} of the protocol, i.e., the
probability that the protocol aborts with non\-/malicious noise.

\subsubsection{Trace distance}
\label{sec:security.dist}

The security criteria given in \eqnsref{eq:qkd.security} and
\eqref{eq:qkd.robust} are defined in terms of the distinguishing
advantage between resources. To simplify these equations, we rewrite
them in terms of the trace distance between the states held by the
distinguisher at the end of the protocol in the real and ideal
settings. \textcite{Hel76} proved that the advantage a distinguisher
has in guessing whether it was provided with one of two states with
equal priors, $\rho$ or $\sigma$, is given by the trace distance
between the two, $D(\rho,\sigma)$.\footnote{Actually, \textcite{Hel76}
  solved a more general problem, in which the states $\rho$ and
  $\sigma$ are picked with a priori probabilities $p$ and $1-p$,
  respectively, instead of $1/2$ as in the definition of the
  distinguishing advantage.} A proof of this along with a discussion
of different operational interpretations of the trace distance is
given in \appendixref{app:op}.

We start with the criterion given by~\eqnref{eq:qkd.robust}. The two resources on the left- and right-hand sides of \eqnref{eq:qkd.robust} simply output classical strings (a key or error message) at Alice and Bob's interfaces. Let these pairs of strings be given by the joint probability distributions $P_{AB}$ and $\tilde{P}_{AB}$. The distinguishing advantage between the two resources is thus simply the distinguishing advantage between these probability distributions \--- a distinguisher is given a pair of strings sampled according to either $P_{AB}$ or $\tilde{P}_{AB}$ and has to guess from which distribution it was sampled \--- i.e., 
\[ 
  d\left( \pi_A^{\qkd}\pi_B^{\qkd}(\aQ' \| \aA'), \aK' \right) = d(P_{AB},\tilde{P}_{AB}).
   \] 
 As stated above, the distinguishing advantage between two quantum states is equal to their trace distance, and in the special case where the states are classical \--- i.e., given by two probability distributions \--- the trace distance between the classical states is equal to the total variational distance between the corresponding probability distributions. Thus $d(P_{AB},\tilde{P}_{AB})= D(P_{AB},\tilde{P}_{AB})$, where we use the same notation for both the trace distance and total variational distance, since the latter is a special case of the former. Putting the two together we get
\[ 
d\left( \pi_A^{\qkd}\pi_B^{\qkd}(\aQ' \| \aA')
  , \aK' \right) = D(P_{AB},\tilde{P}_{AB}),
 \] 
  where $P_{AB}$ and $\tilde{P}_{AB}$ are the distributions of the strings output by the real and ideal systems, respectively.

  The resources on the left- and right-hand sides of
  \eqnref{eq:qkd.security} are slightly more complex than those in
  \eqnsref{eq:qkd.robust}. They first output a state $\varphi_C$ at
  the $E$\=/interface, namely the quantum states that Alice sends over
  the insecure quantum channel. Without loss of generality, the
  distinguisher now applies any map $\cE : \lo{C} \to \lo{CE'}$
  allowed by quantum physics to this state, obtaining
  $\rho_{CE'} = \cE(\varphi_C)$ and puts the $C$ register back on the
  insecure channel for Bob, keeping the part in $E'$. Finally, the
  systems output some keys (or error messages) at the $A$ and
  $B$\=/interfaces, and all classical messages exchanged during the
  error estimation and post\-/processing at the $E$\=/interface \---
  this captures the fact that the classical communication is
  public.\footnote{We sometimes refer to the entire sequence of these
    messages as the \emph{classical transcript} of the protocol.} This
  sequence of interactions of the distinguisher with the real or ideal
  QKD systems is illustrated in \figref{fig:qkd.distinguisher}.

  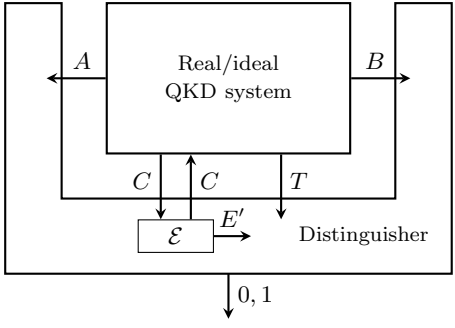
\begin{figure}[tb]
    \centering
    \begin{tikzpicture}[
      sArrow/.style={->,>=stealth,thick},
      largeResource/.style={draw,thick,minimum width=1.618*2cm,minimum
        height=2cm}]

      \node[largeResource,text width=3cm,text centered] (R) at (0,0) {\footnotesize Real/ideal QKD system};

      \draw[thick] (-1.618-1.35,1) -- ++(.75,0) -- ++(0,-2.6)  --
++(1.618*2+1.2,0)  -- ++(0,2.6) -- ++(.75,0) -- ++(0,-3.6) --
++(-1.618*2-2.7,0) -- cycle;

     \node[draw,minimum width=1cm] (E) at (-.7,-2.1) {$\cE$};
     \node at (1.8,-2.1) {\footnotesize Distinguisher};
     
     \draw[sArrow] (R) to node[auto,swap,pos=.4] {$A$} (-1.618-.8,0);
     \draw[sArrow] (R) to node[auto,pos=.4] {$B$} (1.618+.8,0);
     \draw[sArrow] (-.9,0 |- R.south) to node[auto,swap,pos=.4] {$C$}
     (-.9,0 |- E.north);
     \draw[sArrow] (-.5,0 |- E.north) to node[auto,swap,pos=.6] {$C$}
     (-.5,0 |- R.south);
     \draw[sArrow] (.7,0 |- R.south) to node[auto,pos=.4] {$T$}
     (.7,0 |- E.north);
    \draw[sArrow] (E.east |- 0,-2.1) to node[auto] {$E'$}
     (.3,-2.1);

     \draw[sArrow] (0,-2.6) to node[auto] {$0,1$} (0,-3.2);
    \end{tikzpicture}
    \caption[Distinguisher for QKD]{\label{fig:qkd.distinguisher}The
      distinguisher interacting with either the real or ideal QKD
      system first receives a register $C$ containing the quantum
      states sent from Alice to Bob. It applies a map
      $\cE : \lo{C} \to \lo{CE'}$ of its choice, keeps the $E'$
      register and puts $C$ back in the insecure channel. Finally, it
      gets the transcript of the classical communication $T$, and
      Alice's and Bob's outputs $A$ and $B$. It thus holds a state
      $\rho_{ABE'T}$, which it measures to decide if it was
      interacting with the real or ideal system.}
  \end{figure}

  Let $\rho^{\cE}_{ABE}$ be the tripartite state held by a
  distinguisher interacting with the real system, and
  $\tilde{\rho}^{\cE}_{ABE}$ the state held after interacting with the
  ideal system, where the registers $A$ and $B$ contain the final keys
  or error messages, and the register $E$ holds both the state
  $\rho_{E'}$ obtained from tampering with the quantum channel and the
  classical transcript. Distinguishing between these two systems thus
  reduces to maximizing over the distinguisher strategies (the choice
  of $\cE$) and distinguishing between the resulting states,
  $\rho^{\cE}_{ABE}$ and $\tilde{\rho}^{\cE}_{ABE}$:
\[ 
d\left( \pi_A^{\qkd}\pi_B^{\qkd}(\aQ \| \aA) , \aK \sigma_E \right)
= \max_{\cE} d\left( \rho^{\cE}_{ABE},\tilde{\rho}^{\cE}_{ABE} \right).
 \]
 Using again the equality between trace distance and distinguishing
 advantage, we obtain that the advantage a distinguisher has in
 guessing whether it holds the state $\rho^{\cE}_{ABE}$ or
 $\tilde{\rho}^{\cE}_{ABE}$ is given by the trace distance between
 these states, i.e.,
\[ 
  d\left( \pi_A^{\qkd}\pi_B^{\qkd}(\aQ \| \aA) , \aK \sigma_E \right)
= \max_{\cE} D\left(\rho^{\cE}_{ABE},\tilde{\rho}^{\cE}_{ABE} \right). 
\]

The distinguishing advantage between the real and ideal systems of
\eqnref{eq:qkd.security} thus reduces to the trace distance between
the quantum states gathered by the distinguisher. In the following, we
usually omit ${\cE}$ where it is clear that we are maximizing over the
distinguisher strategies, and simply express the security criterion
as \begin{equation} \label{eq:qkd.security.1}
  D(\rho_{ABE},\tilde{\rho}_{ABE}) \leq \eps, \end{equation} where
$\rho_{ABE}$ and $\tilde{\rho}_{ABE}$ are the quantum states gathered
by the distinguisher interacting with the real and ideal systems,
respectively.

\subsubsection{Simulator}
\label{sec:security.simulator}

In the real setting [\figref{fig:qkd.real.adv}], Eve has full control
over the quantum channel and obtains the entire classical transcript
of the protocol. So for the real and ideal settings to be
indistinguishable, a simulator $\sigma^{\qkd}_E$ must generate the
same communication as in the real setting. This can be done by
internally running Alice's and Bob's protocol
$(\pi^{\qkd}_A,\pi^{\qkd}_B)$, producing the same messages at Eve's
interface as the real system. However, instead of letting this
(simulated) protocol decide the value of the key as in the real
setting, the simulator ignores these values and only checks whether a
key is actually produced or whether an error message is generated
instead. It then operates the switch on the secret key resource
accordingly. We illustrate this in \figref{fig:qkd.ideal}.


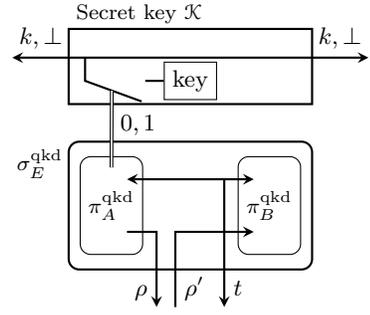
\begin{figure}[tb]

\begin{tikzpicture}[
sArrow/.style={->,>=stealth,thick},
thinResource/.style={draw,thick,minimum width=1.618*2cm,minimum height=1cm},
simulator/.style={draw,rounded corners,thick,minimum width=1.618*2cm,minimum height=1.7cm},
innersim/.style={minimum width=.83cm,minimum height=1.3cm},
innersnode/.style={minimum width=.4cm,minimum height=.3cm}]

\small

\def\t{2.368} 
\def\u{-1.85} 
\def\w{-3.2} 
\def\v{.118} 
\def\s{1.05}

\node[thinResource] (keyBox) at (0,0) {};
\node[draw] (key) at (0,\v/2-.25) {key};
\node (junc) at (-1.4,0 |- key.center) {};
\node[yshift=-1.5,above right] at (keyBox.north west) {\footnotesize
  Secret key $\aK$};
\node (alice) at (-\t,\v) {};
\node (bob) at (\t,\v) {};

\draw[sArrow,<->] (alice.center) to node[pos=.08,auto] {$k,\bot$} node[pos=.92,auto] {$k,\bot$} (bob.center);
\draw[thick] (junc.center |- 0,\v) to (junc.center) to node[pos=.472] (handle) {} +(160:-.8);
\draw[thick] (-.6,0 |- junc.center) to (key);

\node[simulator] (sim) at (0,\u) {};
\node[xshift=1.5,below left] at (sim.north west) {\footnotesize
  $\sigma^{\qkd}_E$};
\node[innersim,rounded corners,draw] (pleft) at (-\s,\u) {\footnotesize $\pi^{\qkd}_A$};
\node[innersim,rounded corners,draw] (pright) at (\s,\u) {\footnotesize $\pi^{\qkd}_B$};
\node[innersnode] (a1) at (-\s,\u+.35) {};
\node[innersnode] (a2) at (-\s,\u-.35) {};
\node[innersnode] (b1) at (\s,\u+.35) {};
\node[innersnode] (b2) at (\s,\u-.35) {};

\node (evel) at (-.45,\w) {};
\node (juncl) at (evel |- a2) {};
\node (evec) at (-.2,\w) {};
\node (juncc) at (evec |- a2) {};
\node (ever) at (.45,\w) {};
\node (juncr) at (ever |- a1) {};

\draw[double] (a1) to node[pos=.55,auto,swap] {$0,1$} (handle.center);
\draw[sArrow,<->] (a1) to (b1);
\draw[sArrow] (juncr.center) to node[pos=.852,auto] {$t$} (ever.center);
\draw[sArrow] (a2) to (juncl.center) to node[pos=.805,auto,swap] {$\rho$} (evel.center);
\draw[sArrow] (evec.center) to node[pos=.25,auto,swap] {$\rho'$} (juncc.center) to (b2);

\end{tikzpicture}

\caption[Simulator for QKD]{\label{fig:qkd.ideal}The ideal QKD system \--- Alice  has access to the left interface, Bob to the right interface and Eve   to the lower interface \--- consists of the ideal secret key   resource and a simulator $\sigma^{\qkd}_E$.}
\end{figure}

The security criterion from \eqnref{eq:qkd.security.1} can now be simplified by noting that with this simulator, the states of the ideal and real systems are identical when no key is produced. The outputs at Alice's and Bob's interfaces are classical elements of the set $\{\bot\} \cup \cK$, where $\bot$ symbolizes an error and $\cK$ is the set of possible keys. The states of the real and ideal systems can be written as
\begin{align*}  \rho_{ABE} & = p^\bot \proj{\bot_A,\bot_B} \otimes
  \rho^\bot_E \\
  & \qquad + \sum_{k_A,k_B \in \cK} p_{k_A,k_B} \proj{k_A,k_B} \otimes
  \rho^{k_A,k_B}_E,\\
  \tilde{\rho}_{ABE} & = p^\bot \proj{\bot_A,\bot_B}
  \otimes \rho^\bot_E \\ & \qquad +\frac{1}{|\cK|} \sum_{k \in \cK} \proj{k,k}
  \otimes \sum_{k_A,k_B \in \cK} p_{k_A,k_B} \rho^{k_A,k_B}_E,
\end{align*}
where $p_{k_A,k_B}$ is the probability of Alice getting the key $k_A$
and Bob getting $k_B$, and $p^\bot$ is the probability of an abort.
Plugging this in \eqnref{eq:qkd.security.1} we get
\begin{equation} \label{eq:qkd.security.2} D\left(
    \rho_{ABE},\tilde{\rho}_{ABE}\right) = (1- p^{\bot})
  D\left(\rho^{\top}_{ABE},\tau_{AB} \otimes \rho^{\top}_{E}\right)
  \leq \eps, 
  \end{equation}
where
 \begin{equation} \label{eq:qkd.security.tmp} \rho^{\top}_{ABE}
  \coloneqq \frac{1}{1- p^{\bot}} \sum_{k_A,k_B \in \cK} p_{k_A,k_B}
  \proj{k_A,k_B} \otimes \rho^{k_A,k_B}_E 
\end{equation}
is the renormalized state of the system conditioned on not aborting
and $\tau_{AB} \coloneqq \frac{1}{|\cK|} \sum_{k \in \cK} \proj{k,k}$
is a perfectly uniform shared key. As previously, the $E$ register
contains the quantum side information that Eve collects about the
states being sent as well as the entire classical transcript of the
error estimation and post\-/processing.

\subsubsection{Correctness \& secrecy}
\label{sec:security.simple}

We now break \eqnref{eq:qkd.security.2} up into two components, often referred to as \emph{correctness} and \emph{secrecy}, and recover the security definition for QKD introduced in \textcite{RK05,BHLMO05,Ren05}. The correctness of a QKD protocol refers to the probability that Alice and Bob end up holding different keys. We say that a protocol is \emph{$\eps_{\corr}$\=/correct} if for all adversarial strategies,
\begin{equation}
  \label{eq:qkd.cor}
  \Pr \left[ K_A \neq K_B \right] \leq \eps_{\corr},
\end{equation}
where $K_A$ and $K_B$ are random variables over the alphabet $\cK \cup \{\bot\}$ describing Alice's and Bob's outputs.\footnote{This can  equivalently be written as $(1-p^\bot)\Pr \left[ K^\top_A \neq     K^\top_B \right] \leq \eps_{\corr}$, where $p^\bot$ is the   probability of aborting and $K^\top_A$ and $K^\top_B$ are Alice and   Bob's keys conditioned on not aborting.} The secrecy of a QKD protocol measures how close the final key is to a distribution that is uniform and independent of the adversary's system. Let $p^\bot$ be the probability that the protocol aborts, and $\rho^\top_{AE}$ be the resulting state of the $AE$ subsystems conditioned on not aborting. A protocol is \emph{$\eps_{\secr}$\=/secret} if for all adversarial strategies,
\begin{equation}
  \label{eq:qkd.sec}
  (1-p^\bot) D\left(\rho^\top_{AE},\tau_A \otimes \rho^{\top}_E\right)
  \leq \eps_{\secr},
\end{equation}
where the distance $D(\cdot,\cdot)$ is the trace distance and $\tau_A$
is the fully mixed state.\footnote{\eqnref{eq:qkd.sec} was already
  introduced at the beginning of \secref{sec:qkd} as \eqnref{eq:d}.}

\begin{thm}
  \label{thm:qkd}
  If a QKD protocol is $\eps_{\corr}$\=/correct and
  $\eps_{\secr}$\=/secret, then \eqnref{eq:qkd.security} is satisfied
  for $\eps = \eps_{\corr} + \eps_{\secr}$.
\end{thm}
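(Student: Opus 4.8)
The groundwork is already laid in \secref{sec:security.dist} and \secref{sec:security.simulator}: for the explicit simulator $\sigma^{\qkd}_E$ of \figref{fig:qkd.ideal}, the security condition \eqnref{eq:qkd.security} has been reduced, via \eqnref{eq:qkd.security.1} and \eqnref{eq:qkd.security.2}, to the single inequality $(1-p^{\bot})\,D(\rho^{\top}_{ABE},\tau_{AB}\otimes\rho^{\top}_{E})\leq\eps$ for each eavesdropping map $\cE$. Since both hypotheses hold for every adversarial strategy, the plan is to establish this bound with $\eps=\eps_{\corr}+\eps_{\secr}$ for an arbitrary fixed $\cE$; the maximization over $\cE$ built into the reduction then delivers \eqnref{eq:qkd.security}. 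Thus the whole task is to control the left-hand side of \eqnref{eq:qkd.security.2}.

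The approach is to interpolate through one carefully chosen state and apply the triangle inequality. I would introduce the ``corrected'' state in which Bob's register is overwritten by Alice's key,
\[
  \mu_{ABE}\coloneqq\frac{1}{1-p^{\bot}}\sum_{k_A,k_B\in\cK}p_{k_A,k_B}\,\proj{k_A,k_A}\otimes\rho^{k_A,k_B}_E ,
\]
which differs from $\rho^{\top}_{ABE}$ of \eqnref{eq:qkd.security.tmp} only in the classical $B$ register. The triangle inequality, multiplied through by $(1-p^{\bot})$, then splits the target into a correctness gap $D(\rho^{\top}_{ABE},\mu_{ABE})$ and a secrecy gap $D(\mu_{ABE},\tau_{AB}\otimes\rho^{\top}_{E})$, which I would bound by $\eps_{\corr}$ and $\eps_{\secr}$ respectively.

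For the correctness gap, the two states are block-diagonal in the classical pair $(A,B)$ and coincide on every block with $k_A=k_B$, so their difference is supported on the event $K_A\neq K_B$. Evaluating the trace norm block by block (using $\trnorm{\rho^{k_A,k_B}_E}=1$) gives $(1-p^{\bot})\,D(\rho^{\top}_{ABE},\mu_{ABE})\leq\sum_{k_A\neq k_B}p_{k_A,k_B}=\Pr[K_A\neq K_B]\leq\eps_{\corr}$, where the last step is \eqnref{eq:qkd.cor} and we use that aborting contributes $K_A=K_B=\bot$ and hence never enters the sum. For the secrecy gap, the key observation is that copying the classical $A$-register into $B$ is an isometry $V\colon\ket{k}_A\mapsto\ket{k}_A\ket{k}_B$ (acting as the identity on $E$) that sends $\rho^{\top}_{AE}$ to $\mu_{ABE}$ and $\tau_A\otimes\rho^{\top}_E$ to $\tau_{AB}\otimes\rho^{\top}_E$, where $\rho^{\top}_{AE}$ is exactly the marginal of $\rho^{\top}_{ABE}$ on $A,E$ entering \eqnref{eq:qkd.sec}. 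Invariance of the trace distance under isometries then yields $D(\mu_{ABE},\tau_{AB}\otimes\rho^{\top}_E)=D(\rho^{\top}_{AE},\tau_A\otimes\rho^{\top}_E)$, so the secrecy gap is bounded by $\eps_{\secr}$ via \eqnref{eq:qkd.sec}. Summing the two bounds closes the argument.

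I expect the main obstacle to be getting the correctness gap exactly right rather than off by a constant: because the discrepancy between $\rho^{\top}_{ABE}$ and $\mu_{ABE}$ shows up both in the vanished off-diagonal blocks and in the compensating diagonal deficits, a careless estimate produces a spurious factor of two, which is in fact cancelled precisely by the $\tfrac12$ in the definition of $D$. The only genuine conceptual choice is the interpolant itself: correcting Bob's key while leaving $E$ untouched is what makes one gap purely a matter of correctness and the other purely a matter of secrecy, so that the two error terms add rather than interact.
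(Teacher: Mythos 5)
Your proposal is correct and follows essentially the same route as the paper's own proof in \appendixref{app:proofs}: your interpolant $\mu_{ABE}$ is exactly the paper's $\gamma_{ABE}$, and the same triangle-inequality split is then closed by the same two observations (the block-diagonal bound $\Pr[K_A\neq K_B]$ for the correctness gap, and the fact that the copied $B$ register leaves the trace distance unchanged — which the paper states informally and you justify via the copy isometry — for the secrecy gap).
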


This theorem can be proven by using the triangle inequality of the
trace distance to bound \eqnref{eq:qkd.security.2} in terms of the sum
of correctness and secrecy. For completeness, a proof is given in
\appendixref{app:proofs}. This result may also be found in
\textcite{BHLMO05}.

The converse statement can also be shown: if \eqnref{eq:qkd.security} holds for some $\eps$, then the corresponding QKD protocol is both $\eps$\=/correct and $2\eps$\=/secret.\footnote{The factor $2$ is due to the existence quantifier over simulators $\sigma_E$ in the security definition. We cannot exclude that for some specific QKD protocol there exists a  simulator $\bar{\sigma}^\qkd_E$, different from the one used in this proof, that generates a   state $\bar{\rho}_E$ satisfying   $D\left(\rho^{\top}_{AE},\tau_{A} \otimes \bar{\rho}^{\top}_E\right)   \leq D\left(\rho^{\top}_{AE},\tau_{A} \otimes \rho^{\top}_E\right)$.   However, by the triangle inequality we also have that for any   $\bar{\rho}_E$,   $D\left(\rho^{\top}_{AE},\tau_{A} \otimes \bar{\rho}^{\top}_E\right)  \geq \frac{1}{2} D\left(\rho^{\top}_{AE},\tau_{A} \otimes     \rho^{\top}_E\right)$.  Hence the failure $\eps$ of the generic simulator used in this proof   cannot be more than twice as large compared to the optimal one.}

\subsubsection{Robustness}
\label{sec:security.rob}

Correctness and secrecy, as described above, capture the soundness of QKD in the presence of a
malicious Eve, as specified by
\eqnref{eq:qkd.security}.  This is however not sufficient: a QKD protocol which
always aborts without producing any key trivially satisfies
\eqnref{eq:qkd.security} with $\eps=0$, but is obviously not a useful
protocol at all! This is where the second condition, namely
\eqnref{eq:qkd.robust}, is relevant. The real system must not only be
indistinguishable from ideal when an adversary is present and
manipulating the channel, but also when one has a simple noisy
channel, with a blank adversarial interface. In this case, we expect a
secret key to be generated successfully with high probability. This is
captured by considering the strong ideal key resource $\aK'$ from
\figref{fig:qkd.resource.probabilistic} which produces a key with
probability $1-\delta$. If the real system does not generate a key
with the same probability, this immediately results in a gap
noticeable by the distinguisher.

The probability that the real protocol generates a key depends on the
noise introduced by the noisy channel $\aQ'$ [illustrated in
\figref{fig:qkd.real.noise}]. Suppose that this noise is parametrized
by a value $q$, e.g., a depolarizing channel with probability $q$. For
every $q$, the protocol has a probability of aborting, $\delta$, which
is called the \emph{robustness}. Let $\aQ_q$ denote a channel with
this noise model, and let $\aK_\delta$ denote the key resource which
produces an error with a fixed probability
$\delta$. \eqnref{eq:qkd.robust} can thus be phrased as
\begin{equation} \label{eq:robustness} \pi_A^{\qkd}\pi_B^{\qkd}(\aQ_q \|
  \aA') \close{\eps} \aK_\delta
  ,\end{equation} 
  where varying $q$ and $\delta$ results in a family of real and ideal systems.

One can then show that the failure $\eps$ from \eqnref{eq:robustness} is bounded by $\eps_{\corr}+\eps_{\secr}$. Note that this statement is only useful if the probability of aborting, $\delta$, is small for reasonable noise models $q$.

\begin{lem} \label{lem:robustness}
If the resources from \eqnref{eq:robustness} are parametrized such that $\aK_\delta$ aborts with exactly the same probability as the protocol $(\pi_A^{\qkd},\pi_B^{\qkd})$ run on the noisy channel $\aQ_q$, then the completeness of the protocol is bounded by the soundness, i.e.,  
\begin{equation*} d\left( \pi_A^{\qkd}\pi_B^{\qkd}(\aQ_q \| \aA')
  ,\aK_\delta\right)  \leq d\left(
  \pi_A^{\qkd}\pi_B^{\qkd}(\aQ \| \aA),\aK \sigma^{\qkd}_E\right),
\end{equation*} 
where the simulator $\sigma^{\qkd}_E$ is the one used in the previous sections, introduced in \secref{sec:security.simulator}, \figref{fig:qkd.ideal}.
\end{lem}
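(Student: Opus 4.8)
The plan is to realize the robustness (completeness) scenario on the left-hand side as a special case of the soundness scenario on the right-hand side, in which the distinguisher's power at the $E$-interface is frozen to merely reproducing the non-malicious noise. Concretely, I would introduce a single converter $\gamma$ attached at the $E$-interface that (i)~applies the noise map $\cE$ of $\aQ_q$ to the quantum register passing through the insecure channel and feeds the result back, (ii)~discards the classical transcript leaked by $\aA$, and (iii)~presents a trivial (blank) outside interface. Since $\gamma$ acts only at $E$ while $\pi^{\qkd}_A,\pi^{\qkd}_B$ act at $A,B$, the two commute, and the whole argument reduces to checking two resource identities together with one application of monotonicity of the pseudo-metric.

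The first identity is $\gamma\bigl(\pi^{\qkd}_A\pi^{\qkd}_B(\aQ\|\aA)\bigr)=\pi^{\qkd}_A\pi^{\qkd}_B(\aQ_q\|\aA')$. This should be immediate from the definitions of the channels: forcing Eve to apply exactly the noise map turns the fully adversarial $\aQ$ into the fixed noisy channel $\aQ_q$, and blocking the leakage turns $\aA$ into $\aA'$, so the resulting resource has precisely the behaviour of the real robustness system with its blank $E$-interface. The second identity, $\gamma(\aK\sigma^{\qkd}_E)=\aK_\delta$, is where the hypothesis of the lemma enters. Here I would use that the simulator $\sigma^{\qkd}_E$ of \secref{sec:security.simulator} runs $(\pi^{\qkd}_A,\pi^{\qkd}_B)$ internally and operates the switch solely according to whether this internal run aborts; feeding it the noise through $\gamma$ therefore makes it abort with exactly the protocol's abort probability over $\aQ_q$, which by assumption equals $\delta$. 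Crucially, whenever the switch does not abort, $\aK$ emits a freshly sampled uniform key identical at $A$ and $B$ — exactly the output of $\aK_\delta$ — so the two resources agree at every interface.

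With these two identities in hand, I would conclude by invoking the property stated in \secref{sec:ac.systems} that the pseudo-metric is non-increasing under composition with a converter, namely $d(\gamma\aR,\gamma\aS)\le d(\aR,\aS)$. Applying this with $\aR=\pi^{\qkd}_A\pi^{\qkd}_B(\aQ\|\aA)$ and $\aS=\aK\sigma^{\qkd}_E$ and substituting the two identities yields
\[
  d\bigl(\pi^{\qkd}_A\pi^{\qkd}_B(\aQ_q\|\aA'),\aK_\delta\bigr)
  \le d\bigl(\pi^{\qkd}_A\pi^{\qkd}_B(\aQ\|\aA),\aK\sigma^{\qkd}_E\bigr),
\]
which is exactly the claim.

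The main obstacle I anticipate is the verification of the second identity, rather than the routine first one: one must check that the simulator's internal run over the noise reproduces the abort statistics of the genuine noisy protocol (so that the hypothesis $\delta=\Pr[\text{abort over }\aQ_q]$ applies) and that the key delivered by $\aK$ on non-abort is perfectly uniform and perfectly correlated, matching $\aK_\delta$ rather than the possibly mismatched keys of the internally simulated protocol. Equivalently, and perhaps more transparently, the same conclusion follows from the distinguisher picture: any distinguisher against the robustness pair can be turned into one against the soundness pair that first applies the noise map at $E$ and ignores the transcript, achieving identical advantage, so that taking the supremum over soundness distinguishers can only increase it.
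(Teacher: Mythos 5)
Your proposal is correct and takes essentially the same approach as the paper: the paper's proof introduces exactly such a converter $\pi_E$ at Eve's interface (emulating the noisy channel $\aQ_q$ and blocking the transcript of the authentic channel), establishes the same two identities $\left(\aQ \| \aA\right)\pi_E = \aQ_q \| \aA'$ and $\aK\sigma^{\qkd}_E\pi_E = \aK_\delta$, and concludes via the non-increase of the pseudo-metric under composition with converters. Your argument, including the observation that the simulator's internal run of the protocol makes the hypothesis on the abort probability do the work in the second identity, matches the paper's proof step for step.
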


A proof of this is provided in \appendixref{app:proofs}.

\subsection{Other security criteria}
\label{sec:qkd.other}

\subsubsection{Accessible information}
\label{sec:qkd.other.ai}

As mentioned at the beginning of this section, the trace distance
criterion was only introduced in
2005~\cite{RK05,BHLMO05,Ren05}. Earlier works, e.g.,
\textcite{May96,BBBMR00,SP00}, used a notion of security directly
inspired from classical cryptography, where key techniques such as
\emph{advantage distillation}, \emph{error correction} and
\emph{privacy amplification} were developed
\cite{BBR88,Mau93,AC93,BBCM95}. More concretely, if one denotes an
$n$-bit key random variable by $K$ and the adversary's classical side
information by $Z$, in these works a key was considered secure if the
mutual information per bit between the two is small, i.e.,
$\frac{1}{n}\Ii(K;Z) \leq \eps$, where $\Ii(K;Z) = \Hh(K) -
\Hh(K|Z)$. It was later realized \cite{Mau94,MW00} that the mutual
information per bit is not appropriate in the asymptotic setting,
since $\eps(n) \to 0$ does not imply that the total information about
the key is also small, i.e., one may still have
$n\eps(n) \nrightarrow 0$. It was therefore considered preferable to
directly bound the total information about the key,
$\Ii(K;Z) \leq \eps$.

In the case of QKD, the side information may be quantum, and the
joint system of the key and side information is given by a state
$\rho_{KE}$. The \emph{accessible information} between $K$ and $E$ is
obtained by measuring the $E$ system, and taking the mutual
information between $K$ and the measurement outcome, i.e., 
\begin{equation} \label{eq:localqkd} \Ii_{\text{acc}} (K;E)_\rho
  \coloneqq \max_{\{\Gamma_z\}_z} \Ii(K;\Gamma_Z(E)) \leq \eps,
 \end{equation} 
 where $\Gamma_Z(E)$ is the random variable resulting from measuring
 the $E$ system with the POVM $\{\Gamma_z\}_z$ and as before,
 $\Ii(K;Z) = \Hh(K) - \Hh(K|Z)$ is the mutual information.

 Since measuring a quantum system can only diminish the information it
 provides, one always has $\Ss(K|E) \leq \Hh(K|Z)$ for any random
 variable $Z$ obtained by measuring the $E$ system of a bipartite
 state $\rho_{KE}$, where $\Ss(\cdot)$ is the von Neumann
 entropy. Using the continuity of the conditional von Neumann
 entropy~\cite{AF04}, this can by bounded by its trace distance from
 uniform, namely\footnote{See \corref{cor:AF04} in
   \appendixref{app:op} for a proof of this.}
 \[
 n - \Ss(K|E) \leq 8 \delta n + 2h(2\delta),
 \] 
 where $\delta$ is the trace distance between $\rho_{KE}$ and
 $\tau_K \otimes \rho_E$ and $h(p) = - p \log p - (1-p) \log (1-p)$ is
 the binary entropy. The trace distance criterion thus provides a
 bound on the accessible information.

 Crucially, however, the converse does not hold. As shown in
 \textcite{KRBM07}, it is possible to find a joint state $\rho_{K E}$
 of an $n$-bit key $K$ and the adversary's information~$E$ that
 satisfies \eqnref{eq:localqkd} with $\eps = 2^{-0.18n}$, but knowledge
 of the first $n-1$ bits $K_1$ of $K=K_1K_2$ allow the last bit $K_2$
 to be guessed perfectly.\footnote{This phenomenon is known as
   \emph{information locking} and further examples may be found in
   \textcite{DHLST04,Win17}.} More precisely, if one knows $K_1$ there
 is a way to measure the quantum system $E$ such that the outcome,
 $\Gamma_{Z'}(K_1E)$ is a perfect guess for
 $K_2$,\footnote{\textcite{BHLMO05} show that if $\eps < 2^{-n}$ then
   information locking cannot be exploited, and the adversary's
   advantage in guessing $K_2$ remains exponentially small.} i.e.,
 \[\Ii(K_2;\Gamma_{Z'}(K_1E)) = 1.\]

 To see why this is problematic, suppose that the two parts of the
 key, $K_1$ and $K_2$, are used for One-Time Pad encryption (cf.\
 \secref{sec:ac.otp}) of two messages $M_1$ and $M_2$, respectively,
 of which the first is already known to an eavesdropper (e.g., because
 it contains some publicly available information).  Given $M_1$, the
 eavesdropper can, by listening to the ciphertext, infer $K_1$. This,
 in turn, allows her to apply the measurement yielding $Z'$ to $E$,
 which provides information about $K_2$, and hence also about $M_2$.

 The example emphasises the relevance of \emph{composability}, i.e.,
 the principle that any reasonable notion of \emph{security} should
 have the property that if two cryptographic schemes are considered
 secure then this should also be the case for their
 combination. Criterion~\eqref{eq:localqkd} does not satisfy this
 principle. If a key $K$ generated by a QKD protocol satisfies
 \eqnref{eq:localqkd} then, by definition, it is guaranteed that an
 adversary cannot infer $K$.  But the composition of this QKD protocol
 with the One-Time Pad encryption, which by itself is a perfectly
 secure protocol, is insecure. This is clearly problematic, for such
 compositions of protocols are ubiquitous in cryptography.

 To see how the real\-/world ideal\-/world paradigm avoids this issue,
 imagine a protocol that generates a key consisting of two parts,
 $K_1$ and $K_2$, with the (undesirable) properties as in the example
 from \textcite{KRBM07} described above. The distinguisher could then
 use~$K_1$ to measure~$E$ and check whether the outcome~$Z'$
 determines~$K_2$.  If this is the case then the distinguisher knows
 that it was interacting with the real system ($B=0$), otherwise it
 must have been the ideal one ($B=1$). The distinguisher could thus
 correctly guess the bit~$B$, i.e., the protocol would not meet the
 criterion of being indistinguishable from an ideal system. Hence,
 although the key generation protocol may still satisfy earlier
 criteria such as \eqnref{eq:localqkd}, it would be considered
 insecure, as it should be.

 It is interesting nonetheless to understand what construction a
 security definition like the accessible information corresponds
 to. We discuss this in \secref{sec:alternative.memoryless}, where we
 show that if one assumes that the adversary has no quantum memory,
 then the accessible information is a sufficient security criterion.
 
\subsubsection{Adversarial models}
\label{sec:qkd.other.models}

The definition of cryptographic security introduced in \secref{sec:ac}, \defref{def:security}, does not explicitly mention an adversary. The notion of an adversary is embedded in the distinguisher, which is used to measure the distance between real and ideal systems. The distinguishing metric thus has a dual role: performing the most powerful attack possible and measuring whether this attack was successful \--- i.e., whether it allows real and ideal systems to be distinguished. The reduction to the trace distance criterion discussed in \secref{sec:security} separates these two notions. The trace distance criterion itself, \eqnref{eq:d}, can be seen as the measure of whether the attack resulting in the adversary holding the system $E$ is successful. For this condition to make sense as a security definition, one has to consider all possible adversarial behaviors, i.e., take the maximum of \eqnref{eq:d} over all possible states $\rho_{KE}$ that may occur.

Historically, these two aspects \--- the attack and the criterion for
measuring whether the attack is successful \--- were treated
separately. Early security proofs for QKD, e.g., \textcite{BBBSS92},
did not consider the most powerful attack an eavesdropper could
perform, but only \emph{individual attacks}. These are attack
strategies where the adversary performs an identical operation on each
qubit on the quantum channel and keeps only classical information
$Z$. The information held by Alice, Bob, and Eve is then modeled by
independent and identically distributed (i.i.d.) random variables.

\emph{Collective attacks} \--- a generalization of individual attacks
that allows the eavesdropper to keep quantum information, but still
forces her to perform the same operation on every qubit \--- were
proposed in \textcite{BM97b,BBBvdGM02}. In this setting, one has to
use von Neumann entropy instead of Shannon entropy to measure the
adversary's information about the raw key and compute the achievable
rate \cite{DW05}. Although the adversary's interactions with the
quantum channel are restricted to i.i.d.\ operations, this class of
attacks is particularly important, since proof techniques developed
later \cite{Ren05,Ren07,CKR09,DFR20,AFRV19} show how one can reduce the most
general attack strategies to such a limited one.

The first security proof for QKD that considered a fully general adversary \--- performing \emph{coherent} attacks \--- is attributed to \textcite{May96,May01}. It was then followed by other simpler proofs \cite{BBBMR00,BBBMR06,SP00}. 
In \textcite{BBBMR00,BBBMR06,SP00} the authors point out that security does not hold conditioned on the protocol terminating with a secret key. Instead, one should prove that the probability of the event that the protocol does not abort  \emph{and} that the adversary has non-trivial information about the key is negligible. However, the works discussed above still use basically classical security definitions, such as those based on the accessible information.
 
\subsubsection{Expressing weaker security criteria within the AC framework}
\label{sec:qkd.other.ac}

As discussed above, the early security definitions implicitly imposed
a restriction on the set of possible attack strategies that an
adversary could pursue. Within the modern real-world ideal-world
paradigm, or, more precisely, the AC framework, one can understand
these restrictions as limitations on the distinguisher that tries to
guess whether it is interacting with the real or ideal system. That
is, one does not consider the full set of possible distinguishers, but
only a restricted subset that, e.g., performs i.i.d.\ operations or
takes its final decision by measuring the $E$ system alone, not the
joint $KE$ system.

Alternatively one may also represent these definitions in the AC
framework either by replacing the resources in the ideal setting by
weaker ones, or the resources available in the real setting by
stronger ones. To illustrate the latter, recall that, within the
description of \secref{sec:qkd.overview}, the insecure quantum channel
used by Alice and Bob allows Eve to perform arbitrary operations on
the quantum messages sent. If instead one would provide the players
with a stronger resource that allows Eve to perform only i.i.d.\
operations or allows her to access classical information only, one
would recover weaker security definitions. This is developed in more
detail in \secref{sec:alternative.memoryless}.

Using such an approach, one may still regard the older security
definitions as ``composable'', provided one is aware of the fact that
the real resources are now weaker. In other words, the weaker
definitions do not guarantee that a secret key is obtained from an
authentic classical channel and a completely insecure quantum channel,
but still ensures that a secret key is obtained from some (less
insecure) quantum channel that limits Eve's tampering.

We note that this approach is applicable much more generally, i.e.,
beyond quantum cryptography.  For example, the notion of security
known as \emph{stand\-/alone}~\cite{Gol04} makes the assumption that
the dishonest party does not interact with the environment during the
execution of the protocol. By introducing a resource that restricts
the distinguisher's behavior accordingly, this security definition can
be shown to actually guarantee security, albeit only in a setting
where honest parties have access to such a resource. Similarly, early
definitions of blindness in delegated quantum computing
\cite[DQC;][]{BFK09,FK17} are not known to construct the expected
ideal resource for DQC \--- one which takes the input from the client,
only leaks a bound on the computation size to the server, and returns
a (possibly wrong) computation result to the client
\cite{DFPR14}. However, they do construct a weaker resource which does
not provide the honest player with the result of the computation. This
example is discussed again in \secref{sec:dqc}. A further example are
results in the bounded storage model by \textcite{Unr11}, who obtains
composable security if one limits the number of times a protocol is
run \--- we discuss these further in \secref{sec:open.other}.

\subsubsection{Asymptotic versus finite-size security}

The trace distance criterion, \eqnref{eq:d}, was introduced in
\textcite{RK05,BHLMO05,Ren05} and the relation to composable security
frameworks has been discussed in \textcite{BHLMO05,MR09} \--- see also
\appendixref{app:op}. Security proofs for QKD with respect to this
criterion were developed at the same time
\cite{CRE04,RGK05,Ren05}. Although these new security proofs arguably
use the right security definition, they only prove security
asymptotically. This means that instead of computing the failure
$\eps$ for specific parameters of the protocol, one shows that
$\eps(n) \to 0$ when $n \to \infty$, where $n$ is a parameter that
quantifies some resource, typically the number of quantum signals sent
during the protocol. This does not allow the failure to be evaluated
for any implementation of the protocol, since implementations must
necessarily generate a key in finite time, and hence with finite
resources.

In the asymptotic setting one often demands that the function
$\eps(n)$ be \emph{negligible}, i.e., smaller than $1/p(n)$ for any
polynomial $p(\cdot)$ \--- for example, $\eps(n)$ could be
exponentially small in $n$. The reasoning is usually that honest
players are polynomially bounded, so they will never run the protocol
more than $p(n)$ times and the accumulated error $p(n)\eps(n)$ is then
still negligible. Although such a requirement is standard in
cryptography, it is not directly useful for practical purposes, as
already indicated above.  For example, a protocol with a failure given
by a function $\eps(n)$ which is exponentially small for $n \geq
10^{10^{10}}$ but equal to $1$ otherwise, where $n$ is the number of
signals exchanged between the players, is asymptotically secure, and
yet completely insecure for any realistic parameters. Conversely, the
function $\eps(n) = 10^{-18}$ is considered insecure in the asymptotic
setting, but it guarantees that the protocol can be run once per
second for the lifetime of the universe, and still have an accumulated
error substantially smaller than $1$.

This illustrates that asymptotic security claims can be highly
ambiguous. It is thus necessary to prove finite security bounds if one
wishes to actually use a cryptographic scheme. This has been done for
basic protocols in, e.g.,
\textcite{ILM07,SR08,STS10,TLGR12,HT12,TL17}. For more advanced
protocols, which are specifically designed to be implementable with
imperfect hardware, finite-size security claims can be found, e.g., in
\textcite{LCWXZ14} for decoy\-/state QKD (which will be discussed in
\secref{sec:attacks:countermeasures}), in~\textcite{YinChen} for
twin-field QKD \cite{Lucamarinietal}, and in \textcite{CXCLTL14} for
measurement\-/device\-/independent QKD (which will be discussed in
\secref{sec:alternative.semi}).



\subsubsection{Variations of the trace distance criterion}
\label{sec:qkd.other.variations}

An alternative definition for $\eps$\=/secrecy has been proposed in the literature instead of the trace distance criterion \cite{TSSR10,TLGR12}:
\begin{equation}\label{eq:alternative.sec}
  (1-\pabort)\min_{\sigma_E}D\left(\rho_{KE}, \tau_K \otimes
    \sigma_E\right) \leq \eps.\end{equation}
This alternative notion is equivalent to the standard definition of secrecy [\eqnref{eq:d}] up to a factor $2$, hence any QKD scheme proven secure with one definition is  still secure according to the other, with a minor adjustment of the failure parameter $\eps$. However, we do not know how to derive this alternative notion from a  composable framework. In particular, it is not clear if the failure $\eps$ from \eqnref{eq:alternative.sec} is additive under parallel composition. For example, the concatenation of two keys that each, individually, satisfy \eqnref{eq:alternative.sec}, could possibly have distance from uniform greater than $2\eps$.\footnote{The arXiv version of \textcite{TLGR12} was updated to use \eqnref{eq:d} instead.}


\section{Assumptions for security}
\label{sec:attacks}

The security of a quantum cryptographic protocol relies on assumptions about the physics of the devices that are employed to implement the protocol.  In this section, we discuss these assumptions. For concreteness, we focus on the case of QKD, for which we describe the full set of assumptions in \secref{sec:attacks:assumptionlist}.  We then explain why these assumptions are needed and to what extent they are justified in \secref{sec:attacks:necessity}. Experimental work in QKD has shown however that the assumptions are often very difficult to meet, and are actually not met in many cases. This fact can be exploited by quantum hacking attacks, which are described in \ref{sec:attacks:hacking}. Finally, in Section~\ref{sec:attacks:countermeasures}, we discuss countermeasures against these attacks. 

\subsection{Standard assumptions for QKD} \label{sec:attacks:assumptionlist}

The security of QKD protocols usually relies on the following assumptions.

\begin{enumerate}
  \item \label{item_qm} All devices used by Alice and Bob, as well as the communication channels connecting them, are correctly and completely\footnote{The completeness of quantum theory can be derived from their correctness;  see \secref{sec:completeness}.} described by quantum theory.
    \item \label{item_res} The channel that Alice and Bob use to exchange classical messages is authentic, i.e., it is impossible for an adversary to modify messages or insert new ones. 
  \item \label{item_conv} The devices that Alice and Bob use locally to execute the steps of the protocol, e.g., for preparing and measuring quantum systems, do exactly what they are instructed to do.
\end{enumerate}

As already indicated earlier, due to the lack of proof techniques, additional assumptions had been introduced in the past. A prominent example is the \emph{i.i.d.}\ assumption, which demands that the quantum channel connecting Alice and Bob be described by a sequence of identical and independently distributed maps. Physically, this means that an adversary's interception strategy is such that each signal sent from Alice to Bob is modified in the same manner and independently of the other signals. Security under the i.i.d.\ assumption is called security against \emph{collective attacks}~\cite[see also \secref{sec:qkd.other.models}]{BM97b,BBBvdGM02}. Another assumption, which  usually comes on top of the i.i.d.\ assumption, is that Eve only stores classical data, which she obtains by individually measuring the pieces of information she gained from each signal sent from Alice to Bob. Since it is difficult to argue why an adversary should be restricted in that particular way, the corresponding security guarantee is rather weak. It is usually referred to as security against \emph{individual attacks}~\cite[see \secref{sec:qkd.other.models}]{Fuchsetal1997,Lutkenhaus2000}. 

Most modern security proofs do however not require such additional assumptions, i.e., they are based entirely on Assumptions~\ref{item_qm}--\ref{item_conv} above. This means, in particular, that the quantum channel connecting Alice and Bob can be arbitrary, and may even be entirely controlled by Eve. In this case, one talks about security against \emph{general attacks}, \emph{coherent attacks}, or \emph{joint attacks}. Sometimes the term  \emph{unconditional security} appeared in the literature~\cite{SBCDLP09}, but it is important to keep in mind that the assumptions listed above are still necessary.

\subsection{Necessity and justification of assumptions} \label{sec:attacks:necessity}

Assumption~\ref{item_qm} is often implicit, for it is a prerequisite to even describe the cryptographic scheme. It justifies the use of the formalism of quantum theory to model the different systems, such as the communication channel, including any possible attacks on them. The assumption thus captures the main idea behind quantum cryptography, namely that an adversary is limited by the laws of quantum theory.  The other two assumptions ensure that the experimental implementation follows the theoretical prescription that enters the security definition (Definition~\ref{def:security}), namely the description of the protocol $\pi_{AB}$ and the used resources. In particular, Assumption~\ref{item_res} guarantees that the resources shared between Alice and Bob fulfil the theoretical specifications~$\aR$, which in the case of QKD includes the classical authentic communication channel. Assumption~\ref{item_conv} guarantees that the steps prescribed by the protocol~$\pi_{A B}$ are correctly executed.

Assumption~\ref{item_qm} is widely accepted \--- and proving it wrong would represent a major breakthrough in physics. Nevertheless, it has been shown that there exist QKD protocols that only rely on the weaker assumption of \emph{no-signalling}~\cite{BHK05}.  

Assumption~\ref{item_res} demands that an authentic communication channel is set up between Alice and Bob. There exist information-theoretically secure protocols that achieve this, provided that Alice and Bob share a weak secret key~\cite[see also \secref{sec:smt.auth}]{RW03,DW09,ACLV19}.  Assumption~\ref{item_res} can thus be met by the use of such authentication protocols (see also~\secref{sec:intro} as well as standard textbooks on classical cryptography)

Although Assumption~\ref{item_conv} sounds rather natural, and is in fact required for almost any cryptographic scheme, including any classical one, it is rather challenging to meet.  Numerous quantum hacking experiments, which have been conducted over the past few years, have shown that many implementations of QKD failed to satisfy this assumption. To illustrate this problem, we describe selected examples of such attacks in the following subsection.

\subsection{Quantum hacking attacks} \label{sec:attacks:hacking}          

We start with the \emph{photon number splitting attack}~\cite{Brassardetal2000}, which targets optical implementations of QKD that use individual photons as quantum information carriers. Suppose, for concreteness, that Alice and Bob implement the BB84 protocol~\cite{BB84} by encoding the qubits into the polarisation degree of freedom of individual photons. Specifically, Alice may use a single-photon source that emits photons with a polarisation that she can choose. The BB84 protocol\footnote{This protocol is explained in more detail in \secref{sec:securityproofs}, where a security proof is also sketched.} requires her to send in each round at random a state from one orthonormal basis, say $\{\ket{h}, \ket{v}\}$, where $\ket{h}$ may be realised by a horizontally polarised photon and $\ket{v}$ by a vertically polarised one, or from a complementary basis $\{\ket{d^+}, \ket{d^-}\}$, where $\ket{d^+} = \smash{\frac{1}{\sqrt{2}}} (\ket{h} + \ket{v})$ and $\ket{d^-} = \smash{\frac{1}{\sqrt{2}}} (\ket{h} - \ket{v})$. It may now happen that, in an experimental implementation, the source sometimes accidentally emits two photons at once, which then carry the same polarisation. The states emitted in the four cases are thus $\ket{h} \otimes \ket{h}$, $\ket{v} \otimes \ket{v}$, $\ket{d^+} \otimes \ket{d^+}$, and $\ket{d^-} \otimes \ket{d^-}$.

Before describing the actual attack, we first give a simple information-theoretic argument for why this is problematic. Note first that one single photon carries no information about the choice of the basis made by Alice. Indeed, for either of the basis choices, the density operator describing the photon is maximally mixed, i.e., $\frac{1}{2} \proj{h} + \frac{1}{2} \proj{v} = \frac{1}{2} \proj{d^+} + \frac{1}{2} \proj{d^-} =  \frac{1}{2} \mathbf{1}$. This is however no longer the case for a pulse consisting of two photons, i.e., 
\begin{align}
  \frac{1}{2} \proj{h}^{\otimes 2} + \frac{1}{2} \proj{v}^{\otimes 2} \neq \frac{1}{2} \proj{d^+}^{\otimes 2} + \frac{1}{2} \proj{d^-}^{\otimes 2} \ .
\end{align}
Hence, if the source accidentally emits two equally polarised photons instead of one, it reveals information about Alice's basis choice, which it shouldn't. 

It is therefore not surprising that such two-photon pulses can be exploited by an adversary to attack the system. Eve, who intercepts the channel, may split the two-photon pulse into two, keep one of the photons and send the other one to Bob. The latter thus receives photons in exactly the way prescribed by the protocol, and hence does not notice the interception. Eve, meanwhile, may measure the photons she captured. In principle, if Eve had quantum memory, she could even wait with the measurement until Alice announces the basis choice to Bob, and hence always gain full information about the polarisation state that Alice prepared. 

While the photon number splitting attack exploits an imperfection of the sender (namely that it sometimes emits two identically polarised photons instead of one), many quantum attacks are targeted towards the receiver. An example is the \emph{time-shift attack}~\cite{Makarovetal2006,qi2007time,Zhaoetal2008}, which exploits inaccuracies of the photon detectors. In order to avoid dark counts, the photon detectors are often set up such that they only count photons that arrive within a small time window around the time when a signal is expected to arrive. Furthermore, Bob's receiver device may consist of more than one detector, e.g., one for each possible polarisation state. The time windows of the different detectors are then never perfectly synchronised. This means that there are times at which the receiver is more sensitive to signals with respect to one polarisation than another. Eve may therefore, by appropriately delaying the signals sent from Alice and Bob, bias the detected signals towards one or the other polarisation, and thus gain information about what Bob measures. While this information may be partial, it can, together with the error correction information that is available to Eve, be sufficient to infer the final key. 

Another attack that is targeted towards the receiver is the \emph{detector blinding attack} ~\cite{Makarov2009,WKRFNW11,LWWESM10,GLLSKM11}, where the adversary tries to control the detectors by illuminating them with bright laser light.  In a QKD implementation that uses the encoding of information into the polarisation of individual photons, the detectors are usually configured such they can optimally detect single photon pulses. That is, they should click whenever the incoming pulse contains a photon, and not click if the pulse is empty.  However, the behaviour of such detectors may be rather different in a regime where the incoming pulses contain many photons. For example, it could be that they always click when they are exposed to bright light with a particular intensity, and they may never click for another intensity. Hence, by sending in light with appropriately chosen polarisation and intensity, Eve may gain immediate control over the clicks of Bob's detector. To exploit this for an attack, Eve may mimic Bob's receiver, i.e., intercept the photons sent from Alice and measure them in a randomly chosen basis, as Bob would do. She then sends bright light to Bob to ensure that he obtains the same detector clicks as if he had directly obtained Alice's photons. This works particularly well for implementations that use a \emph{passive basis choice}, i.e., where Bob's measurement basis is not  provided as an input, but rather made by the detection device itself. In this case, an adversary can essentially remote-control Bob and thus get hold of the entire key. 

Yet another hacking strategy are \emph{Trojan-horse attacks}~\cite{Vakhitov2001,GisinFaselKraus2006}. Here the idea is to send a bright laser pulse via the optical fibre into Alice or Bob's component to extract information about its internal settings. Depending on the sender and receiver hardware which is used, measuring the reflection of the pulse can allow Eve, for instance, to determine the basis choices made by Alice and Bob.

In some optical implementations of QKD, e.g., in the \emph{plug-and-play}~\cite{Muller97} or the \emph{circular-type}~\cite{Nishioka2002} system, Alice does not have a photon source but instead encodes information by modulating an incoming signal from Bob before sending it back to him. The signal thus travels twice in opposite directions through the same optical links, which helps reducing fluctuations due to birefringence  and environmental noise. The two-fold use of the (insecure) channel however opens additional possibilities of attacks~\cite{GisinFaselKraus2006}. A prominent example is the \emph{phase-remapping attack}~\cite{FQTL07,XQL10}. It exploits the fact that the modulator used by Alice to encode information into the signal coming from Bob acts on that signal during a particular time interval. In the attack, the adversary slightly advances or delays the signal on its way from Bob to Alice, so that it no longer lies fully within that time interval. The modulation by Alice will then be incomplete, which means that the encoding of the information in the signal differs from what is foreseen by the protocol. This can in turn be exploited by Eve in an intercept-and-resend attack on the signal returned from Alice to Bob.

\subsection{Countermeasures against quantum hacking} \label{sec:attacks:countermeasures}

The attacks described here have in common that they all exploit a breakdown of Assumption~\ref{item_conv}. Specifically, in the case of the photon-splitting attack, the device used by Alice sends out more information than it is supposed to. In the case of the time-shift attack, it is Bob's measurement device  whose measurement operators are not constant over time and can even be partially controlled by Eve. Finally, in the case of the detector blinding attack on systems with passive basis choice, Eve even takes over control of the randomness used to choose the basis.

A seemingly obvious countermeasure to prevent such attacks is to manufacture sources and detectors that meet the theoretical specifications. That is, one would need a perfect single-photon source, as well as detectors that are perfectly efficient and only measure photon pulses in a specified parameter regime. Such requirements are however unrealistic --- the devices used in experiments will always, at least slightly, deviate from these specifications. 

The other possibility is to develop cryptographic protocols and  security proofs that tolerate imperfections of the devices~\cite{GLLP04}.  This has been done in particular for the attacks described above. To prevent photon number splitting attacks, an efficient countermeasure is the \emph{decoy-state} method~\cite{Hwang2003,Wang2005,Loetal2005}. The idea here is that Alice sometimes deliberately sends multi-photon pulses. Alice and Bob can  then check statistically whether an adversary captured them. Another possibility is to use protocols where Alice's encoding of information has the property that, even when one photon is extracted from a pulse, the information about what Alice sent is still partial~\cite{SARG,TamakiLo,SYK14}. In the case of time-shift attacks, it is sufficient to characterise the maximum bias in the detector efficiencies that can be introduced and account for it in the security proofs. Finally, for the detector blinding attacks, a possible countermeasure is to add tests to the protocol, such as a monitoring of the photocurrent, in order to detect those~\cite{Yuanetal2010}.  

The main problem with such countermeasures is however that the space of possible imperfections is hard to characterise. The above are just a few examples of attacks, and many others have been proposed, and sometimes even demonstrated to work successfully in experiments. For example, an adversary may exploit imperfections in the randomness that Alice and Bob use for choosing their measurement basis. To prevent such attacks, one may again extend the protocols such that they can tolerate imperfect randomness (see \secref{sec:alternative.randomness}).

The last decade has thus seen an arms race between designers and attackers of quantum cryptographic schemes. A possible way out of this unsatisfactory situation is \emph{device-independent cryptography}. Here the idea is to replace Assumption~\ref{item_conv} by something much weaker. Namely, one requires that the devices used by Alice and Bob do not unintentionally send information out to an adversary, and that the classical processing of information done by Alice and Bob is correct. Crucially, however,  one does no longer demand that the sources and detectors used by Alice and Bob work according to their specifications. The way this can work is explained in \secref{sec:alternative.di}. 


\section{Security proofs for QKD}
\label{sec:securityproofs}

In this section, we discuss security proofs for QKD. For this we consider a generic protocol as shown in \figref{fig:GenericQKD}. The techniques presented here are however not restricted to QKD. Concepts such as information reconciliation or privacy amplification, which we will describe in this section, also play a role in other protocols, for instance those discussed in \secref{sec:other}.

While the first QKD security proofs such as~\textcite{May01,SP00} treat  the entire QKD protocol as a whole, modern security proofs are modular~\cite{Ren05}. This means that a separate security statement is established for each part of the cryptographic protocol. The overall security statement for QKD then follows by combining these individual statements.  In the case of the protocol shown in \figref{fig:GenericQKD}, one statement concerns the raw key distribution and parameter estimation step (see \secref{sec:RawKeyDistribution}), another one the information reconciliation step (see \secref{sec_infrec}), and yet another one the privacy amplification step (see \secref{sec_pa}). According to the AC framework, each part can be regarded as a constructive statement, asserting that the corresponding subprotocol constructs a particular resource from certain given resources. This modular analysis does not only come with the obvious advantage that the proofs are more versatile and can be adapted to different protocols, but also that the arguments are more transparent and easier to understand and verify.

In the following, we focus on the modular approach to proving security proposed in~\textcite{Ren05}. We note however that there exist various other methods (we discuss these in \secref{sec_othersecurityproofs}). The common feature of all security proofs is that they derive a relation between the information accessible to the legitimate parties and the maximum information that may have been gained by Eve. In the description below, this relation is given by \eqnref{eq_HminBB84}; it lower bounds Eve's uncertainty about the raw key~$\mathbf{X}$ generated by Alice. Crucially, although the statement concerns Eve's knowledge, the bound depends only on data that is accessible to Alice and Bob, in this case the error rate $\eta_0$  between their raw keys $\mathbf{X}$ and~$\mathbf{Y}$.  

There are various different ways to derive and interpret such bounds on Eve's information. In the case of prepare-and-measure schemes, they can be understood as consequences of the \emph{no-cloning principle}~\cite{Wootters82}. According to this principle, if Eve attempts to copy parts of the information transmitted from Alice to Bob into her register $E$,  the transmitted information is disturbed, resulting in a decrease of the correlations between Alice and Bob. This disturbance is larger the more information Eve has gained --- a fact that is known as the \emph{information-disturbance tradeoff}~\cite{Fuchs98}. In the case of entanglement-based protocols, the bounds on Eve's information can be regarded as an instance of the \emph{monogamy of entanglement}. It asserts that the stronger Alice's entanglement with Bob the weaker is her correlation with Eve~\cite{Coffman00,Terhal04,KoashiWinter04}.

\begin{figure}
\noindent\fbox{%
    \parbox{\columnwidth}{%
\begin{enumerate}
  \item \textbf{protocol} $\mathrm{QKD}$
  \item $(\mathbf{X}, \mathbf{Y}) := \mathrm{RawKeyDistribution()}$
  \item \textbf{if} $\mathrm{ParameterEstimation}(\mathbf{X}, \mathbf{Y}) = \mathrm{fail}$ \\ \textbf{then} \textbf{return} $(\perp, \perp)$ and \textbf{abort}
  \item $(\mathbf{X}, \mathbf{X'}) := \mathrm{InformationReconciliation}(\mathbf{X}, \mathbf{Y})$
  \item $(\mathbf{S}, \mathbf{S'}) := \mathrm{PrivacyAmplification}(\mathbf{X}, \mathbf{X'})$
  \item \textbf{return} $(\mathbf{S}, \mathbf{S'})$ 
\end{enumerate}
    }%
}
\caption{Generic QKD protocol \label{fig:GenericQKD}}
\end{figure}

\subsection{Protocol replacement} 

Cryptographic protocols that are optimised for practical use are often not easy to analyse directly. Conversely, protocols that are designed in a way that simplifies their security proofs are usually not easily implementable in practice. For example, building an entanglement-based QKD protocol in practice is technologically more challenging than building a  prepare-and-measure scheme. Conversely, the structure of entanglement-based schemes fits more naturally with the known techniques for proving security.

A first step in a security proof for a practical protocol $\pi_{\mathrm{practical}}$ is thus usually to conceive of another protocol $\pi_{\mathrm{theoretical}}$ that is adapted to the proof techniques at hand. One then argues that, for the purpose of the security proof, $\pi_{\mathrm{practical}}$ can be replaced by $\pi_{\mathrm{theoretical}}$, i.e., that the security of $\pi_{\mathrm{practical}}$ is implied by the security of $\pi_{\mathrm{theoretical}}$. A generic way to achieve this is to show that for any possible attack against $\pi_{\mathrm{practical}}$ there exists a corresponding attack against $\pi_{\mathrm{theoretical}}$.

For a concrete example, suppose that $\pi_{\mathrm{practical}}$ is the BB84 protocol~\cite{BB84}. The protocol follows the generic structure shown in \figref{fig:GenericQKD}, with a particular raw key distribution procedure as shown in \figref{fig:BB84RawKeyDistribution}. The protocol prescribes that Alice and Bob proceed in rounds. In each round~$i$, Alice inputs one qubit $Q_i$ to the quantum channel.  The qubit encodes a random signal bit $X_i$ with respect to a randomly chosen basis $B_i$. Bob measures the output $Q'_i$ of the quantum channel with respect to a randomly chosen basis $B'_i$ to obtain a bit~$Y_i$. This is a prepare-and-measure scheme and in this sense ``practical''.

\begin{figure}[h]
\noindent\fbox{%
    \parbox{\columnwidth}{%
\begin{enumerate}
  \item \textbf{protocol} $\mathrm{RawKeyDistribution}()$ [BB84]
  \item \textbf{parameters} 
     $n$ [number of signals];
     $\phi_{x, 0} := \ket{x}$, $\phi_{x,1} := \frac{1}{\sqrt{2}} (\ket{0} + (-1)^x \ket{1})$, for $x \in \{0,1\}$  [bases for encoding] 
  \item $i:=1$
  \item \textbf{while} $i \leq n$ \textbf{do}
  \item Alice chooses $B_i, X_i \in_R \{0, 1\}$ 
  \item Bob chooses $B'_i \in_R \{0,1\}$
  \item Alice prepares a qubit $Q_i$ in  state $\phi_{X_i,B_i}$ and gives it as input to the quantum channel
  \item Bob measures the output $Q'_i$ of the quantum channel w.r.t.\ basis $\{\phi_{0, B'_i}, \phi_{1, B'_i}\}$ to get $Y_i$
  \item Alice and Bob communicate $B_i$ and $B'_i$ over the classical channel
  \item \textbf{if} $B_i = B'_i$ \textbf{then} $i:=i+1$
  \item \textbf{endwhile}
  \item \textbf{return} $(\mathbf{X} = (X_1, \ldots, X_n), \mathbf{Y} = (Y_1, \ldots, Y_n))$
\end{enumerate}
    }%
}
\caption{Prepare-and-measure raw key distribution \label{fig:BB84RawKeyDistribution}}
\end{figure}

The corresponding ``theoretical'' protocol $\pi_{\mathrm{theoretical}}$ could be an entanglement-based protocol similar to the E91 protocol \cite{Eke91}. This protocol is identical to the BB84 protocol described above, except that the raw key distribution step is replaced by the procedure shown in \figref{fig:EntanglementBasedRawKeyDistribution}. In each round~$i$, Alice creates an entangled state between two qubits $\bar{Q}_i$ and $Q_i$ and sends the latter to Bob, who receives it as $Q'_i$.\footnote{Security is also guaranteed if this entangled state is generated by an untrusted third party and distributed to Alice and Bob.}  Alice and Bob then both select random bases $B_i$ and $B'_i$ and measure their qubits accordingly to obtain bits $X_i$ and $Y_i$, respectively. 

\begin{figure}[h]
\noindent\fbox{%
    \parbox{\columnwidth}{%
\begin{enumerate}
  \item \textbf{protocol} $\mathrm{RawKeyDistribution}()$ [entanglement-based]
  \item \textbf{parameters} 
     $n$ [number of signals];
     $\phi_{x, 0} := \ket{x}$, $\phi_{x,1} := \frac{1}{\sqrt{2}} (\ket{0} + (-1)^x \ket{1})$, for $x \in \{0,1\}$  [bases for encoding]
  \item $i:=1$
  \item \textbf{while} $i \leq n$ \textbf{do}
  \item Alice chooses $B_i \in_R \{0, 1\}$ 
  \item Bob chooses $B'_i \in_R \{0,1\}$
  \item Alice prepares qubits $(\bar{Q}_i, Q_i)$ in state $\smash{\frac{1}{\sqrt{2}} (\ket{0} \ket{0} + \ket{1} \ket{1})}$ and gives $Q_i$ as input to the quantum channel
    \item Alice measures $\bar{Q}_i$ w.r.t.\ basis $\{\phi_{0, B_i}, \phi_{1, B_i}\}$ to get $X_i$ 
  \item Bob measures the quantum channel output $Q'_i$ w.r.t.\ basis $\{\phi_{0, B'_i}, \phi_{1, B'_i}\}$ to get $Y_i$
    \item Alice and Bob communicate $B_i$ and $B'_i$ over the classical channel
  \item \textbf{if} $B_i = B'_i$ \textbf{then} $i:=i+1$
  \item \textbf{endwhile}
  \item \textbf{return} $(\mathbf{X} = (X_1, \ldots, X_n), \mathbf{Y} = (Y_1, \ldots, Y_n))$
\end{enumerate}
    }%
}
\caption{Entanglement-based raw key distribution \label{fig:EntanglementBasedRawKeyDistribution}}
\end{figure}

As first shown in \textcite{BBM92}, these two protocols, $\pi_{\mathrm{practical}}$ and $\pi_{\mathrm{theoretical}}$, are equivalent in terms of their security.\footnote{This statement is only valid in the device-dependent setting, but does not extend to device-independent security proofs [see \textcite{ER14}]. For full device-independent security, it is necessary to distribute entanglement.} Note first that Bob's part of the protocol is obviously the same for $\pi_{\mathrm{practical}}$ and $\pi_{\mathrm{theoretical}}$. To see the correspondence of Alice's part, consider the two bits $B_i$ and $X_i$ together with the qubit $Q_i$ generated by Alice in any round~$i$. It is straightforward to verify that, for both $\pi_{\mathrm{practical}}$ and $\pi_{\mathrm{theoretical}}$, these are described by the same ccq-state of the form
\begin{align}
  \rho_{B_i X_i Q_i}
  = \frac{1}{4} \sum_{b=0}^1 \sum_{x=0}^1 \proj{b} \otimes \proj{x} \otimes \proj{\phi_{x,b}} \ .
\end{align}
This shows in particular that, from the viewpoint of an adversary, who may have access to the quantum channel and hence to~$Q_i$, the two protocols are equivalent. 

The entanglement-based protocol $\pi_{\mathrm{theoretical}}$ described above may be further modified to make it even more suitable for security proofs. One such modification concerns the timing of the steps. Instead of running through $n$ rounds,  in each of which an entangled qubit pair is created and the qubits measured, one may instead consider a first step in which $n$ entangled qubit pairs $(\bar{Q}_i, Q_i)$ are distributed between Alice and Bob and, rather than being measured directly, first stored in quantum memories. Only in a second step Alice and Bob choose bases $B_i = B'_i$ for each of their qubit pairs and measure them accordingly. This is shown in \figref{fig:MemoryEntanglementBasedRawKeyDistribution}. An argument similar to the one above shows that this change has no impact on the security of the protocol. 

\begin{figure}[h]
\noindent\fbox{%
    \parbox{\columnwidth}{%
\begin{enumerate}
  \item \textbf{protocol} $\mathrm{RawKeyDistribution}()$ [with postponed measurement]
  \item \textbf{parameters} 
     $n$ [number of signals];
     $\phi_{x, 0} := \ket{x}$, $\phi_{x,1} := \frac{1}{\sqrt{2}} (\ket{0} + (-1)^x \ket{1})$, for $x \in \{0,1\}$  [bases for encoding]
  \item \textbf{for} $i \in \{1, \ldots, n\}$ \textbf{do}
  \item Alice prepares qubits $(\bar{Q}_i, Q_i)$ in state $\smash{\frac{1}{\sqrt{2}} (\ket{0} \ket{0} + \ket{1} \ket{1})}$ and gives $Q_i$ as input to the quantum channel
  \item Bob stores the quantum channel output $Q'_i$
  \item \textbf{endfor}
    \item \textbf{for} $i \in \{1, \ldots, n\}$ \textbf{do}
      \item Alice chooses $B_i \in_R \{0,1\}$ and communicates $B_i$ to Bob over the classical channel
 \item Alice measures $\bar{Q}_i$ w.r.t.\ basis $\{\phi_{0, B'_i}, \phi_{1, B_i}\}$ to get $X_i$     
  \item Bob measures $Q'_i$ w.r.t.\ basis $\{\phi_{0, B'_i}, \phi_{1, B_i}\}$ to get $Y_i$
  \item \textbf{endfor}
  \item \textbf{return} $(\mathbf{X} = (X_1, \ldots, X_n), \mathbf{Y} = (Y_1, \ldots, Y_n))$
\end{enumerate}
    }%
}
\caption{Entanglement-based raw key distribution with postponed measurement \label{fig:MemoryEntanglementBasedRawKeyDistribution}}
\end{figure}

\subsection{Raw key distribution and parameter estimation} \label{sec:RawKeyDistribution}

The first part of the security proof concerns the raw key distribution and the parameter estimation step. For raw key distribution we consider the particular subprotocol described in \figref{fig:MemoryEntanglementBasedRawKeyDistribution}. Parameter estimation is shown in \figref{fig:ParameterEstimation}. It essentially calculates an estimate for the fraction~$\eta$ of positions~$i$ in which the bit strings $\mathbf{X}$ and $\mathbf{Y}$ differ, i.e., $|X_i - Y_i| = 1$, and returns the value ``$\mathrm{fail}$'' if this fraction exceeds a given threshold~$\eta_0$. 

\begin{figure}[h]
\noindent\fbox{%
    \parbox{\columnwidth}{%
\begin{enumerate}
  \item \textbf{protocol} $\mathrm{ParameterEstimation}(\mathbf{X}, \mathbf{Y})$
  \item \textbf{parameters} $s$ [sample size]; $\eta_0$ [threshold]
  \item Alice chooses a subset $S \subset_R \{1, \ldots, n\}$, \\ with $n=|\mathbf{X}|$ [length of $\mathbf{X}$] and $s=|S|$ [size of $S$]
  \item Alice communicates $\{(i, X_i): \, i \in S\}$ over the classical channel
  \item Bob computes $\eta = \frac{1}{s} \sum_{i \in S} |X_i - Y_i|$
  \item \textbf{if} $\eta \leq \eta_0$ \textbf{then} \textbf{return} $\mathrm{ok}$ \textbf{else} \textbf{return} $\mathrm{fail}$
\end{enumerate}
    }%
}
\caption{Parameter estimation \label{fig:ParameterEstimation}}
\end{figure}

To run the raw key distribution and parameter estimation protocol, one needs as initial resources   an insecure quantum channel $\aQ$ together with an authentic classical channel $\aA$, as shown for example in \figref{fig:qkd.real}. The target is a \emph{raw key} resource $\aR$, which can be understood as a weak version of a shared secret key resource as shown in \figref{fig:qkd.resource.switch}. The resource~$\aR$ is equipped with a switch controlled by Eve~\cite{Portmann2017}. If the switch is in position~$1$, the resource merely outputs $\perp$ to Alice and Bob. If the switch is in position~$0$, the resource outputs bit strings $\mathbf{X}$ and $\mathbf{Y}$ of length~$n$ to Alice and Bob, but at the same time enables Eve to interact with the resource, allowing her to gain information~$E$. The latter is bounded by a secrecy condition, which may be expressed in terms of a lower bound~$t$ on the \emph{smooth min-entropy}~\cite{Ren05} of  Alice's output $\mathbf{X}$ conditioned on~$E$,
\begin{align} \label{eq:Hminboundgeneral}
  H_{\min}^\varepsilon(\mathbf{X} | E) \geq t \ .
\end{align}
Here $\varepsilon > 0$ is a small parameter that will contribute additively to the failure probability of the protocol. The choice of this particular measure for entropy will be relevant for the further proof steps below, especially privacy amplification. Intuitively, one may think of $H_{\min}^\varepsilon(\mathbf{X} | E)$ as the minimum number of bits that can be extracted from $\mathbf{X}$ that are uniform and uncorrelated to~$E$, except with probability~$\varepsilon$. 

The desired statement is that running the raw key distribution protocol followed by the parameter estimation protocol on $\aQ$ and $\aA$ constructs the raw key resource $\aR$  for appropriately chosen parameters. One may view this as the core of security proofs in QKD. It shows that a criterion on the statistics of the  data $\mathbf{X}$ and $\mathbf{Y}$ measured by Alice and Bob, as tested by the parameter estimation protocol, is sufficient to imply a certain level of secrecy of $\mathbf{X}$ towards Eve. 

To illustrate the idea behind the argument, let us for the moment focus on collective attacks (see \secref{sec:qkd.other.models}). Under this assumption, each of the qubit pairs $(\bar{Q}_i, Q'_i)$ held by Alice and Bob when they execute the raw key distribution protocol of \figref{fig:MemoryEntanglementBasedRawKeyDistribution}, prior to the measurement, is in the same state $\rho_{\bar{Q}_i, Q'_i}$. Recall, however, that the second qubit, $Q'_i$, is what Bob received. Since Eve may corrupt the quantum communication channel, it is not guaranteed that this qubit coincides with the qubit $Q_i$ that Alice sent. The state $\rho_{\bar{Q}_i, Q'_i}$ may thus be different from the entangled state $\smash{\frac{1}{\sqrt{2}} (\ket{0} \ket{0} + \ket{1} \ket{1})}$ that Alice prepared. 

To gain some intuition, it may be useful to consider the special case where the threshold in the subprotocol for parameter estimation is small, say even $\eta_0 = 0$. If the subprotocol returns the value ``$\mathrm{ok}$'' then this means that the bit strings $\mathbf{X}$ and $\mathbf{Y}$ largely coincice. This yields a constraint on the state  $\rho_{\bar{Q}_i, Q'_i}$, namely that if both Alice and Bob measure it  with respect to the basis $\{\ket{0}, \ket{1}\}$ or with respect to the basis $\{\frac{1}{\sqrt{2}} (\ket{0} \pm \ket{1} )\}$ they obtain identical outcomes, except with some small probability that is due to the finite sample size used for parameter estimation. 

It is now straightforward to verify that the only states $\rho_{\bar{Q}_i Q'_i}$ that can pass the test with $\eta_0 = 0$ are those that are close to the pure state $\smash{\frac{1}{\sqrt{2}} (\ket{0} \ket{0} + \ket{1} \ket{1})}$ that Alice prepared. Next, one may consider the joint state $\rho_{\bar{Q}_i Q'_i E}$ that includes Eve. But because the state of the first two qubits is almost pure, one can conclude that this state must be of the form
\begin{align}
  \rho_{\bar{Q}_i Q'_i E} \approx \rho_{\bar{Q}_i Q'_i} \otimes \rho_E \ .
\end{align}
That is, Eve's information $E$ is almost uncorrelated to $\bar{Q}_i$ and $Q'_i$. But because each of the $n$ bits $X_i$ of $\mathbf{X}$ is obtained from a measurement of $\bar{Q}_i$, it is as well almost uncorrelated to $E$. This proves that each bit $X_i$ is almost uniformly random and independent of $E$. The smooth min-entropy of the entire sequence $\mathbf{X}$ of bits is thus almost maximal, i.e., $H_{\min}^\varepsilon(\mathbf{X} | E) \approx n$. 

If, instead of $\eta_0 = 0$, one inserts an arbitrary value for the tolerated noise tolerance $\eta_0$, which is also known as the \emph{Quantum Bit Error Rate (QBER)}, a refinement of the argument we just sketched gives~\cite{RGK05,Ren05}
\begin{align} \label{eq_HminBB84}
  H_{\min}^{\varepsilon}(\mathbf{X}|E) \geq n (1-h(\eta_0)) + O(\sqrt{n})
\end{align}
where $h(x) = -x\log_2(x) -(1-x)\log_2(1-x)$ denotes the binary entropy function. 

We also note that the argument can be adapted to the case of device-independent security. In this case  the parameter estimation tests whether the outcome statistics of Alice and Bob violates a Bell inequality. The lower bound on the entropy then depends on the degree of this violation; see \textcite{ABGMPS07} for the example of the CHSH Bell inequality~\cite{CHSH69}.

The assumption of collective attacks is necessary    to sensibly talk about the state $\rho_{\bar{Q}_i Q'_i}$ of the individual systems. However, there are no good reasons why an adversary should be restricted to such attacks (see~\secref{sec:attacks:assumptionlist}). Modern security proofs therefore usually consist of an additional step, in which it is shown that general attacks cannot be more powerful than collective attacks. 

There exist various techniques to achieve this. The most widely one used so far is based on the \emph{exponential de Finetti theorem}~\cite{Ren05,Ren07,RC09}. The theorem states that, if a state over many subsystems, such as $\rho_{\bar{Q}_1 Q'_1 \cdots \bar{Q}_n Q'_n}$, is symmetric under reorderings, i.e., the state remains the same if one permutes the subsystems $\bar{Q}_i Q'_i$, then it is well approximated by a mixture of i.i.d.\ states, i.e., states of the form $\rho_{\bar{Q}_1 Q'_1} \otimes \cdots \otimes \rho_{\bar{Q}_n Q'_n}$. The latter corresponds to the structure one has if one assumes collective attacks. 

To apply the exponential de Finetti theorem, it is sufficient to argue that the rounds of the protocol, in which the individual signals are sent, could be reordered arbitrarily. Like in the example of the BB84 protocol described above, this is the case for most protocols that have been proposed in the literature. A notable exception are the \emph{Coherent One-Way (COW)} protocol~\cite{SBGSZ05} and the \emph{Differential Phase Shift (DPS)} protocol~\cite{IWY02}, where information is encoded in the correlations between signals. 

Another method, which is related to the de Finetti theorem, is the  \emph{post-selection technique}~\cite{CKR09}. Like the former, it can be used to lift security proofs against collective attacks to security proofs against general attacks, provided that the protocol satisfies the symmetry assumptions described above. 

Under certain conditions, it is also possible to establish bounds of the form of \eqnref{eq_HminBB84} directly for general attacks, i.e., without first restricting to collective attacks. This is the case for the approaches presented in~\textcite{CRE04} and in~\textcite{RGK05}, which are both applicable to the device\-/dependent setting, as well as the techniques proposed in~\textcite{TR11,TLGR12}, which include semi\-/device\-/independent scenarios, and in~\textcite{RUV13,VV14,MS14}, which applies to particular device\-/independent protocols. 

The most recent approach to directly prove security against general attacks relies on the  \emph{Entropy Accumulation Theorem (EAT)}~\cite{DFR20}. This approach, in contrast to methods based on the de Finetti theorem, gives rather tight min-entropy bounds even when the number~$n$ of protocol rounds is relatively small. It is furthermore applicable to the semi\-/device\-/independent and the device-independent setting~\cite{AFRV19}, which will be discussed in~\secref{sec:alternative}.

\subsection{Information reconciliation} \label{sec_infrec}

The goal of information reconciliation is to ensure that Alice and Bob have the same (raw) key. The most common way to achieve this is to regard Alice's bit string $\mathbf{X}$ as the key, and to let Bob infer this key from the information $\mathbf{Y}$ he has. To this end, Alice sends partial information about $\mathbf{X}$ to Bob over the classical channel. 

\begin{figure}[h]
\noindent\fbox{%
    \parbox{\columnwidth}{%
\begin{enumerate}
  \item \textbf{protocol} $\mathrm{InformationReconciliation}(\mathbf{X}, \mathbf{Y})$
  \item \textbf{parameters} $\mathrm{enc}, \mathrm{dec}$ [coding scheme]
  \item Alice sends $C = \mathrm{enc}(\mathbf{X})$ over the classical channel
  \item Bob computes $\mathbf{X'} = \mathrm{dec}(C, \mathbf{Y})$
\item \textbf{return} $(\mathbf{X}, \mathbf{X'})$
\end{enumerate}
    }%
}
\caption{Information reconciliation \label{fig:InformationReconciliation}}
\end{figure}

The protocol shown in \figref{fig:InformationReconciliation} uses as resources a raw key $\aR$, as described in the previous section, as well as, again, an authentic classical communication channel~$\aA$. Its purpose is to generate a weak key resource $\aR'$,  which provides a guarantee of the form of \eqnref{eq:Hminboundgeneral} on the secrecy of the key, and, in addition, ensures that  Alice and Bob's values, $\mathbf{X}$ and $\mathbf{X'}$, are identical. 

We note that information reconciliation is a purely classical subprotocol. It is also largely independent of the other parts of the QKD protocol, and hence  works in both the device-dependent and the device-independent setting. The choice of the \emph{coding scheme}, i.e., the functions $\mathrm{enc}$ and $\mathrm{dec}$ that the protocol invokes,  merely depends on the \emph{noise model}. The latter describes how  Alice and Bob's inputs to the protocol, $\mathbf{X}$ and $\mathbf{Y}$, are correlated with each other. 

The noise model is most generally specified in terms of a joint probability distribution of $\mathbf{X}$ and $\mathbf{Y}$. The coding scheme must then be chosen such that
\begin{align} \label{eq:decodingsuccess}
  \Pr\bigl[ \mathrm{dec}(\mathrm{enc}(\mathbf{X},\mathbf{Y}) = \mathbf{X}  \bigr] \geq 1-\varepsilon
\end{align}
The parameter $\varepsilon>0$  bounds the failure probability of the subprotocol and will hence, similarly to the parameter  $\varepsilon$ used in the step above,  contribute additively to the total failure probability of the QKD protocol. Furthermore, to maintain as much secrecy as possible for $\mathbf{X}$, the function $\mathrm{enc}$ should be chosen such that $C=\mathrm{enc}(\mathbf{X})$ does not reveal too much information about $\mathbf{X}$.  (Recall that the classical channel is accessible to Eve, so she may get hold of~$C$.) This may be achieved by making $C$ as small as possible. It can be shown using classical techniques from information theory that any coding scheme that satisfies \eqnref{eq:decodingsuccess} requires a communication~$C$ of
\begin{align} \label{eq:communicationbound}
  k \geq H^{\varepsilon}_{\max}(\mathbf{X} | \mathbf{Y}) \ ,
\end{align}
bits, where $H^{\varepsilon}_{\max}$ denotes the smooth max-entropy~\cite{RW05}. Furthermore, there exist coding schemes that saturate this bound (up to a small additive constant).  

In the case of an i.i.d.\ noise model, $H^{\varepsilon}_{\max}(\mathbf{X} | \mathbf{Y})$ is approximated by the Shannon entropy, up to terms of order $\sqrt{n}$, where $n$ is the length of $\mathbf{X}$. For a protocol such as BB84, which uses single qubits, and assuming that the QBER is $\eta_0$, one thus has
\begin{align}
  k \approx n h(\eta_0) + O(\sqrt{n}) \ .
\end{align}

Letting $E$ be the initial information that Eve has about the raw key $\mathbf{X}$ before information reconciliation, the secrecy after information reconciliation with communication~$C$ consisting of $k$ bits is given by
\begin{align} \label{eq:secreduction}
  H_{\min}^{\varepsilon}(\mathbf{X} | E C) \gtrapprox H_{\min}^{\varepsilon}(\mathbf{X} | E) - k - O(1) \ .
\end{align}
Hence, for an optimal information reconciliation protocol, we have 
\begin{align}
  H_{\min}^{\varepsilon}(\mathbf{X} | E C) \gtrapprox H_{\min}^{\varepsilon}(\mathbf{X} | E) - H_{\max}(\mathbf{X} | \mathbf{X'}) - O(\sqrt{n}) \ .
\end{align}
In particular, for the case of the BB84 protocol, we get 
\begin{align}
  H_{\min}^\varepsilon(\mathbf{X} | E C)  \geq n (1- 2 h(\eta_0)) - O(\sqrt{n}) \ .
\end{align}

As is clear from \eqnref{eq:secreduction}, the amount of secrecy that is left after information reconciliation  depends on the amount~$k$ of  communication required. The design of coding schemes  $(\mathrm{enc}$, $\mathrm{dec})$ that optimise this parameter is a main subject of classical information theory~\cite{CT12}. While the bound in \eqnref{eq:communicationbound} can already be saturated with randomly constructed encoding functions, a main challenge is to develop schemes for which the encoding and decoding functions are efficiently computable~\cite{LABZG08,ELAB09,EMM11,JK14}.

While the information reconciliation protocol of \figref{fig:InformationReconciliation} invokes only one-way communication from Alice to Bob, one may also consider two-way schemes. In fact, the first proposals for QKD implementations used a procedure to correct errors that required multiple rounds of communication between Alice and Bob~\cite{BBBSS92}.\footnote{Despite its two-way nature, the particular method proposed in~\textcite{BBBSS92} did not achieve the information-theoretic bounds described above. It was only realised later in~\textcite{BBCS92}, in the context of oblivious transfer, that one-way error correction is sufficient and can be made (asymptotically) optimal.} Furthermore, one may also include \emph{advantage distillation}~\cite{Mau93}. Here the idea is that Alice and Bob group their data into small blocks. They then try to distinguish blocks that are likely to contain few errors from those that are likely to contain many errors. The ones with many errors are then discarded. It has been shown that this technique can be advantageous compared to standard error correction~\cite{GL03,Ren05,TLR20}.

\subsection{Privacy amplification} \label{sec_pa}

The aim of privacy amplification is to turn the weakly secret key~$\mathbf{X}$, which after information reconciliation is known to Alice and Bob, into  a strong secret key $K$, i.e., a bit string that is essentially uniform and independent of the information held by an adversary~\cite{BBR88,BBCM95}. This is typically achieved with a protocol as in  \figref{fig:PrivacyAmplification}. Apart from the weak key resource $\aR$, which satisfies a secrecy bound of the form of \eqnref{eq:Hminboundgeneral} and which is assumed to output the same string~$\mathbf{X}$ to Alice and Bob,  the protocol requires an authentic communication channel~$\aA$. From these resources, the protocol constructs a secret key resource as shown in \figref{fig:qkd.resource.switch}.

\begin{figure}[h]
\noindent\fbox{%
    \parbox{\columnwidth}{%
\begin{enumerate}
  \item \textbf{protocol} $\mathrm{PrivacyAmplification}(\mathbf{X}, \mathbf{X'})$
  \item \textbf{parameters} $\{\mathrm{ext}_s\}_{s \in \cS}$ [randomness extractor]
  \item Alice chooses $S \in_R \cS$ and sends it over the classical channel
  \item Alice computes $K = \mathrm{ext}_S(\mathbf{X})$ 
  \item Bob computes $K' = \mathrm{ext}_S(\mathbf{X'})$ 
  \item \textbf{return} $(K, K')$
\end{enumerate}
    }%
}
\caption{Privacy amplification \label{fig:PrivacyAmplification}}
\end{figure}

The protocol makes use of a \emph{randomness extractor}~\cite{Zuc90,Shaltiel04}. This is a family of functions $\mathrm{ext}_s$ parameterised by a \emph{seed} $s \in \cS$, which take as input a bit string, such as $\mathbf{X}$, and output a bit string of a fixed length~$\ell$. In the classical literature, a \emph{strong $(k, \varepsilon)$-extractor} is defined by the property that, for any input $\mathbf{X}$ whose min-entropy satisfies the lower bound $H_{\min}(\mathbf{X}) \geq k$, the output $\mathrm{ext}_s(\mathbf{X})$ is $\varepsilon$-close to uniform. More precisely, the expectation over a randomly chosen seed $s \in \cS$ of the variational distance between the distribution of the output $\mathrm{ext}(\mathbf{X})$ and a uniform string $U$ of $\ell$ bits must be upper bounded by $\varepsilon$,
\begin{align}
  \mathrm{Exp}_s\bigl[D(P_{\mathrm{ext}_s(\mathbf{X})}, P_{U})\bigr] \leq \varepsilon \ .
\end{align}
This definition does however not take into account the quantum nature of information that an adversary may have about $\mathbf{X}$~\cite{KMR05,GKKRD07}. It is hence not sufficient for use in the context of quantum key distribution, unless one restricts to security against individual attacks, which corresponds to forcing the adversary to store classical information only (see \secref{sec:attacks:assumptionlist}). 

To be able to prove general security, it is necessary to demand that the randomness extractor $\{\mathrm{ext}_s\}_{s \in \cS}$ be \emph{quantum-proof}, for parameters $k$ and $\varepsilon$ as above. This means that, for any $\mathbf{X}$ and any quantum system $E$ such that $H_{\min}(\mathbf{X} | E) \geq k$ one has
\begin{align}
 \mathrm{Exp}_s\bigl[D(\rho_{\mathrm{ext}_s(\mathbf{X}) E}, \rho_{U} \otimes \rho_E)\bigr] \leq \varepsilon \ .
\end{align}
Note that this criterion refers to min-entropy $H_{\min}(\mathbf{X} | E) = H_{\min}^{\varepsilon'}(\mathbf{X} | E)$ with smoothness parameter $\varepsilon' = 0$. However, a straightforward application of the triangle inequality for the distance between states implies that a corresponding criterion  also holds if $\varepsilon'>0$~\cite{Ren05}. 

A number of constructions for quantum-proof extractors have been proposed in the literature~\cite{RK05,Ren05,FS08,KT08,DPVR12,BT12,MPS12,BFS17}. In the context of QKD, the most widely used extractors are based on two-universal hashing~\cite{CW79,WC81}. As shown in~\textcite{RK05,Ren05,TSSR10}, these  can achieve an output length of  $\ell = k -  2 \log_2(1/\varepsilon)$ while still being quantum-proof $(k, \varepsilon)$ extractors. Using them within the protocol of \figref{fig:PrivacyAmplification}, it generates a key of length 
\begin{align}
   \ell = H_{\min}^\varepsilon(\mathbf{X} | E C) - O(1) \ ,
 \end{align}
 with a failure probability of the order~$\varepsilon$.
Combining this with the results of the previous sections, with optimal information reconciliation and privacy amplification, it is possible to generate a key of length
\begin{align}
  \ell = H_{\min}^\varepsilon(\mathbf{X} | E) - H_{\max}(\mathbf{X} | \mathbf{Y}) - O(1) \ .
\end{align}
In particular, in the case of the BB84 protocol, we obtain
\begin{align}
  \ell = n (1-2 h(\eta_0)) - O(\sqrt{n})
\end{align}
where $\eta_0$ is the QBER. The asymptotic key rate is thus $1-2 h(\eta_0)$. 

\subsection{Other approaches to prove security} \label{sec_othersecurityproofs}

The generic security proof described above follows the approach proposed in \textcite{Ren05}. It is sometimes termed ``information-theoretic'', as its core part consists of bounds on entropic quantities, such as \eqnref{eq_HminBB84}. Such bounds have first been proposed in \textcite{DW05}. They were further developed in \textcite{RenRen12} and used in \textcite{PhysRevLett.95.080501,RGK05}; see also \textcite{christandl2007unifying} for related work. However, as already mentioned, there exist a variety of other proof strategies. 

Early proofs~\cite{LoChau99,SP00} used a reduction to the problem of entanglement distillation. For this, one rearranges the key distribution protocol such that all measurements are postponed to the very last step. If one now omits these final measurements, Alice and Bob end up with correlated quantum registers rather than classical keys. One may thus regard the protocol as an entanglement distillation protocol~\cite{Bennett96,Benett96b} and prove that the registers held by Alice and Bob are almost maximally entangled. If this is the case then, by the monogamy of entanglement, the information in these registers is uncorrelated to Eve, and hence secret.\footnote{The following is a quantitative version of this statement. If the entanglement distilled by Alice and Bob has fidelity~$F$ to a maximally entangled state then it follows from Theorem~1 of \textcite{FuchsvanGraaf} that the corresponding security parameter~$\varepsilon$ according to~\eqnref{eq:qkd.security} is bounded by $\epsilon \leq \sqrt{1-F^2}$.}

This approach may be more generally understood as follows. Assuming that Alice and Bob's start with quantum correlation stored in individual qubits equipped with a computational basis, the entanglement distillation protocol can be regarded as a quantum error correction scheme~\cite{CS96,Steane96} that corrects both for bit and phase flip errors. The correction of bit flip errors ensures that Alice and Bob end up with the same key. The correction of phase flip errors ensures that the two registers are not only classically correlated but maximally entangled. Since, as indicated above, the latter implies secrecy, one can understand the correction of phase flip errors as a kind of privacy amplification~\cite{Renes2013}.

The technique has  been used originally to prove the security of the BB84 protocol, including variants with imperfect devices~\cite{GLLP04}, but can also be applied to other quantum key distribution protocols~\cite{Tamaki03,Koashi04,Boileau}. While the correspondence to entanglement distillation requires that error correction and privacy amplification be treated as a single quantum error correction step, it is under certain conditions possible to achieve a separation in a way similar to the modular description above~\cite{Lo03}. Furthermore, as shown in \textcite{HHHLO08}, the method also works if the registers of Alice and Bob merely contain \emph{bound entanglement}, i.e., entanglement from which no maximally entangled states can be distilled~\cite{HHH98}. 

A somewhat related strategy, proposed originally in \textcite{May01}, is the use of complementarity~\cite{Koa09}. Specifically, one uses the fact that if Alice is able to accurately predict the outcomes of a measurement in one basis, say the computational basis, then by the uncertainty principle any predictions for the outcomes of measurements in a complementary basis, e.g., the diagonal basis in the case of single qubits, will be inaccurate.  This technique has been refined in a series of works and made applicable to the study of finite-size effects~\cite{HT12,TLGR12} and to measurement-device independent cryptography~\cite{TamakiLo12}. The complementarity approach is also related to the use of  \emph{entropic uncertainty relations}~\cite{Berta10,Coles,TR11}.


\section{Alternative modeling of QKD}
\label{sec:alternative}

So far we discussed QKD as protocols that start with an insecure
quantum channel and an authentic classical channel and generate, as
the desired ideal resource, a key of fixed length. In this section we
discuss other variants of QKD protocols, where these resources are
chosen differently. In \secref{sec:alternative.adaptive} we
consider an ideal key resource with adaptive key length. In
\secref{sec:alternative.entanglement} we discuss protocols which
use a source of entanglement instead of an insecure quantum
channel. In \secref{sec:alternative.randomness} we show how to
model a situation in which no perfect randomness is available.  In
\secref{sec:alternative.di} we model device\-/independent
QKD. Relaxations of this known as semi\-/device\-/independence are
discussed in \secref{sec:alternative.semi}. Finally, in
\secref{sec:alternative.memoryless} we consider adversaries that have
no quantum memory.

\subsection{Adaptive key length}
\label{sec:alternative.adaptive}

For a protocol to construct the shared secret key resource of \figref{fig:qkd.resource.switch}, it must either abort or produce a key of a fixed length. A more practical protocol could adapt the secret key length to the noise level of the quantum channel. This provides the adversary with the functionality to control the key length (not only whether it gets generated or not), and can be modeled by allowing the key length to be input at Eve's interface of the ideal key resource, as illustrated in \figref{fig:qkd.resource.adaptive}.

\begin{figure}[tb]

\begin{tikzpicture}[
sArrow/.style={->,>=stealth,thick},
largeResource/.style={draw,thick,minimum width=1.618*2cm,minimum height=2cm}]

\small

\def\u{.236} 

\node[largeResource] (keyBox) at (0,0) {};
\node (alice) at (-2.5,\u) {Alice};
\node (bob) at (2.5,\u) {Bob};
\node (eve) at (0,-1.7) {Eve};
\node[draw] (key) at (0,-.5) {key};

\draw[sArrow,<->] (alice) to node[pos=.22,auto] {$k(m)$} node[pos=.78,auto] {$k(m)$} (bob);
\draw[thick] (0,\u) to (key);
\draw[sArrow] (eve) to node[pos=.3,auto] {$m$} (key);

\end{tikzpicture}

\caption[Secret key resource with adaptive
length]{\label{fig:qkd.resource.adaptive}A secret key resource with   adaptive key length. This resources allows Eve to choose the length $m$   of the final key $k$, which is then output at Alice's and Bob's interfaces.}
\end{figure}
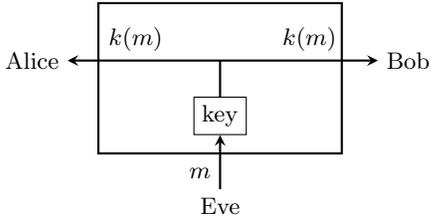

Such an ideal resource has been considered in \textcite{BHLMO05,HT12}. The reduction from the corresponding security definition in AC to a trace distance criterion still goes through. But instead of \eqnref{eq:d}, we get 
\begin{equation} \label{eq:adpative.trace.d}
\sum_m p_m D \left( \rho^m_{KE},\tau^m_K \otimes \rho^m_E \right) \leq
\eps,
\end{equation}
where $p_m$ is the probability of obtaining a key of length $m$, $\rho^m_{KE}$ is the joint state of the key and Eve's system conditioned on the key having length $m$,  and $\tau^m_K$ is a fully mixed state of dimension $2^m$.

\subsection{Source of entanglement}
\label{sec:alternative.entanglement}

In contrast to \emph{prepare\-/and\-/measure} protocols,
\emph{entanglement\-/based} protocols, e.g., \textcite{Eke91,BBM92},
use a source of entanglement, instead of a quantum communication
channel. It is also pretty standard in security proofs to first
transform a given prepare\-/and\-/measure protocol into an
entanglement\-/based one, and then prove the security of the
latter~\cite{SP00}. In \figref{fig:qkd.real.ent} we draw the system
consisting of a QKD protocol $(\pi^{\qkd}_A,\pi^{\qkd}_B)$, the
authentic channel $\aA$ and a source  $\aE$ of entangled states, which may be
controlled by Eve. To specify the completeness property, we also consider
a source of entanglement $\aE'$ that produces a fixed bipartite
entangled state instead of allowing Eve to decide.

\begin{figure}[tb]

\begin{tikzpicture}[
sArrow/.style={->,>=stealth,thick},
thinResource/.style={draw,thick,minimum width=2.4cm,minimum height=1cm},
protocol/.style={draw,rounded corners,thick,minimum width=1.2cm,minimum height=2.5cm},
pnode/.style={minimum width=.8cm,minimum height=.5cm}]

\small

\def\t{4} 
\def\u{2.8} 
\def\v{.75}
\def\w{.6} 

\node[pnode] (a1) at (-\u,\v) {};
\node[pnode] (a2) at (-\u,0) {};
\node[pnode] (a3) at (-\u,-\v) {};
\node[protocol] (a) at (-\u,0) {};
\node[yshift=-2,above right] at (a.north west) {\footnotesize
  $\pi^{\qkd}_A$};
\node (alice) at (-\t,0) {};

\node[pnode] (b1) at (\u,\v) {};
\node[pnode] (b2) at (\u,0) {};
\node[pnode] (b3) at (\u,-\v) {};
\node[protocol] (b) at (\u,0) {};
\node[yshift=-2,above right] at (b.north west) {\footnotesize $\pi^{\qkd}_B$};
\node (bob) at (\t,0) {};

\node[thinResource] (cch) at (\w,\v) {};
\node[yshift=-2,above right] at (cch.north west) {\footnotesize
  Authentic ch.~$\aA$};
\node[thinResource] (qch) at (-\w,-\v) {};
\node[yshift=-1.5,above right] at (qch.north west) {\footnotesize
  Source of states $\aE$};
\node (eveq1) at (-\w-.4,-1.75) {};
\node (junc1) at (eveq1 |- a3) {};
\node (eveq2) at (-\w+.4,-1.75) {};
\node (junc2) at (eveq2 |- a3) {};
\node (evec) at (\w+\w,-1.75) {};
\node (junc3) at (evec |- b1) {};

\draw[sArrow,<->] (a1) to node[auto,pos=.08] {$t$} node[auto,pos=.92] {$t$}  (b1);
\draw[sArrow] (junc3.center) to node[auto,pos=.9] {$t$} (evec.center);

\draw[sArrow] (a2) to node[auto,pos=.75,swap] {$k_{A},\bot$} (alice.center);
\draw[sArrow] (b2) to node[auto,pos=.75] {$k_{B},\bot$} (bob.center);

\draw[sArrow,<-] (a3) to (junc1.center) to node[pos=.8,auto,swap] {$\rho_A$} (eveq1.center);
\draw[sArrow] (eveq2.center) to node[pos=.2,auto,swap] {$\rho_B$} (junc2.center) to (b3);

\end{tikzpicture}

\caption[QKD system with source of
entanglement]{\label{fig:qkd.real.ent}A real QKD system that uses a
  source of entangled states. Instead of having access to an insecure
  channel as in \figref{fig:qkd.real.adv}, Alice and Bob use a source
  of entanglement $\aE$ that is controlled by Eve. This means that Eve may
  generate an arbitrary state $\rho_{ABE}$ of which the $A$ register goes to Alice and
  the $B$ register to Bob.}
\end{figure}
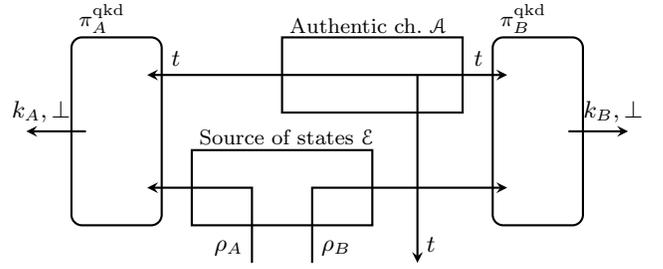

The reduction from the AC security definition to the trace distance criterion described in \secref{sec:security} works here, too, with the source of entanglement replacing the insecure channel, resulting in the same conditions for $\eps$\=/secrecy and $\eps$\=/correctness.

One can also show that any protocol designed for a distributed source
of entanglement can be transformed into one where a state is prepared
locally and sent over an (insecure) channel. To explain this, we first
decompose Alice's QKD protocol in two parts.  In the first she carries
out a subprotocol $\alpha$ that performs a measurement
$\bM^a = \{M^a_x\}_x$ on the state received from the source of
entangled states, where $\bM^a$ is chosen with some probability $p_a$
from a set $\{\bM^a\}_a$. The second part consists of the rest of her
QKD protocol. We illustrate this in \figref{fig:qkd.ent.protocol}.

\begin{figure}[tb]

\begin{tikzpicture}[
sArrow/.style={->,>=stealth,thick},
thinResource/.style={draw,thick,minimum width=1.618*2cm,minimum height=1cm},
protocol/.style={draw,rounded corners,thick,minimum width=1.545cm,minimum height=2.5cm},
pnode/.style={minimum width=1cm,minimum height=.5cm},
sqResource/.style={draw,rounded corners,thick,minimum width=1cm,minimum height=1cm}]

\small

\def\t{5.92} 
\def\u{4.39} 
\def\um{2.5} 
\def\ub{2.37} 
\def\v{.75}

\node[pnode] (a1) at (-\u,\v) {};
\node[pnode] (a2) at (-\u,0) {};
\node[pnode] (a3) at (-\u,-\v) {};
\node[protocol] (a) at (-\u,0) {};
\node[yshift=-2,above right] at (a.north west) {\footnotesize
  $\pi^{\qkd}_A$};
\node (alice) at (-\t,0) {};

\node (b1) at (-.4,0 |- a1) {};
\node (b3) at (\ub,-\v) {};

\node[sqResource] (m) at (-\um,-\v) {};
\node[inner sep=1] (mInner) at (-\um,-\v) {$M^a_x$};
\node[yshift=-2,above right] at (m.north west) {\footnotesize
  $\alpha$};

\node[thinResource] (qch) at (0,-\v) {};
\node[yshift=-1.5,above right] at (qch.north west) {\footnotesize
  Source of states $\aE$};
\node (eveq1) at (-.4,-1.75) {};
\node (junc1) at (eveq1 |- a3) {};
\node (eveq2) at (.4,-1.75) {};
\node (junc2) at (eveq2 |- a3) {};

\draw[sArrow] (b1) to node[auto,pos=.85,swap] {$t$}  (a1);

\draw[sArrow] (a2) to node[auto,pos=.75,swap] {$k_{A},\bot$} (alice.center);

\draw[sArrow] (mInner) to node[swap,auto,pos=.48] {$a,x$} (a3);
\draw[sArrow,<-] (mInner) to (junc1.center) to node[pos=.8,auto,swap] {$\rho_A$} (eveq1.center);
\draw[sArrow] (eveq2.center) to node[pos=.264,auto,swap] {$\rho_B$} (junc2.center) to (b3);

\end{tikzpicture}

\caption[Entanglement based QKD
protocol]{\label{fig:qkd.ent.protocol}We split Alice's part of an
  entanglement\-/based QKD protocol in two parts, the measurement of
  the incoming states (denoted by $\alpha$) and the rest of the
  protocol (denoted by $\pi^{\qkd}_A$).}
\end{figure}
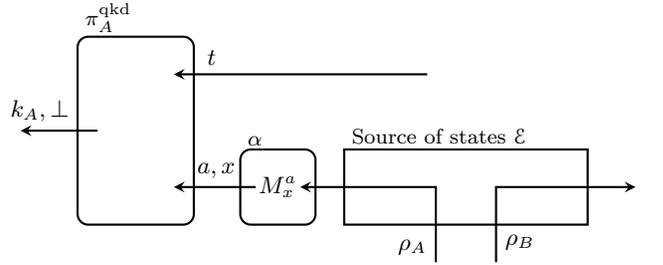

We now need to argue that there exists a converter $\gamma$ which constructs
$\alpha \aE$ from an insecure channel $\aQ$ and $\alpha \aE'$ from a
noiseless channel $\aQ'$. For this, we must
establish the two following conditions.
\begin{enumerate}[label=(\roman*), ref=\roman*]
\item \label{eq:ent.sec} There exists a simulator $\sigma_E$ such that
  \[ 
   \gamma \aQ = \alpha \aE \sigma_E.
\]
\item \label{eq:ent.cor} The following equality holds,
  \[\gamma \aQ' = \alpha \aE'.\] 
\end{enumerate}
Once we have established these conditions, it follows immediately from
the composition theorem of the AC framework~\cite{MR11} that any QKD
protocol which is sound when using $\alpha\aE$ and complete when using
$\alpha\aE'$ is also sound and complete when using $\gamma\aQ$ and
$\gamma\aQ'$, respectively.

Let $\rho_{AB}$ be the bipartite entangled state that is generated by
$\aE'$. Let
$\tilde{\varphi}^{x,a}_B \coloneqq \trace[A]{M^a_x \rho_{AB}
  \hconj{\left(M^a_x\right)}}$,
$p_{x|a} \coloneqq \tr \tilde{\varphi}^{x,a}_B$ and
$\varphi^{x,a}_B \coloneqq \tilde{\varphi}^{x,a}_B/p_{x|a}$. We define
the converter $\gamma$ to prepare the state $\varphi^{x,a}_B$ with
probability $p_ap_{x|a}$, which it sends on the insecure
channel. Furthermore, we define the simulator $\sigma_E$ to prepare
$\rho_{AB}$, input the $A$\=/part on the entanglement resource for
Alice and output the $B$\=/part at the outer interface. It is then
straightforward to check from \figref{fig:qkd.ent} that this satisfies
the conditions \eqref{eq:ent.sec} and \eqref{eq:ent.cor} described above.

\begin{figure*}[htb]
\centering
\subfloat[Soundness][\label{fig:qkd.ent.soundness}When modeling
  soundness, the adversary can modify the messages on the insecure
  channel $\aQ$. The simulator $\sigma_E$ generates the entangled state
  $\rho_{AB}$ that is expected from of a non\-/adversarial source of
  entangled states, and outputs the $B$ part at the outer interface,
  making the two systems on the left and right indistinguishable.]{
\begin{tikzpicture}[
sArrow/.style={->,>=stealth,thick},
thinResource/.style={draw,thick,minimum width=1.618*2cm,minimum height=1cm},
sqResource/.style={draw,rounded corners,thick,minimum width=1cm,minimum height=1cm}]

\small

\def\ua{3.85} 
\def\um{2.6} 
\def\ub{2.37} 
\def\t{2.5}

\node (a) at (-\ua,0) {};
\node (b) at (\ub,0) {};

\node[sqResource] (m) at (-\um,0) {};
\node[inner sep=1] (mInner) at (-\um,0) {$\varphi^{x,a}$};
\node[yshift=-2,above right] at (m.north west) {\footnotesize
  $\gamma$};

\node[thinResource] (qch) at (0,0) {};
\node[yshift=-1.5,above right] at (qch.north west) {\footnotesize
  Insecure channel $\aQ$};

\node (le) at (-.4,-\t) {};
\node (re) at (.4,-\t) {};

\node (junc1) at (le |- a) {};
\node (junc2) at (re |- a) {};

\draw[sArrow] (mInner) to node[swap,auto,pos=.6] {$a,x$} (a);

\draw[sArrow] (mInner) to node[auto,pos=.2] {$\varphi$} (junc1.center) to (le);
\draw[sArrow] (re) to (junc2.center) to (b);

\end{tikzpicture}  \hspace{2cm}
\begin{tikzpicture}[
sArrow/.style={->,>=stealth,thick},
thinResource/.style={draw,thick,minimum width=1.618*2cm,minimum height=1cm},
sqResource/.style={draw,rounded corners,thick,minimum width=1cm,minimum height=1cm}]

\small

\def\ua{3.85} 
\def\um{2.6} 
\def\ub{2.37} 
\def\v{.75}
\def\t{2.5}

\node (a) at (-\ua,0) {};
\node (b) at (\ub,0) {};

\node[sqResource] (m) at (-\um,0) {};
\node[inner sep=1] (mInner) at (-\um,0) {$M^a_x$};
\node[yshift=-2,above right] at (m.north west) {\footnotesize
  $\alpha$};

\node[thinResource] (qch) at (0,0) {};
\node[yshift=-1.5,above right] at (qch.north west) {\footnotesize
  Source of states $\aE$};

\node[sqResource] (sim) at (-.4,-2*\v) {};
\node[inner sep=5] (simInner) at (-.4,-2*\v) {$\rho_{AB}$};
\node[xshift=1,below left] at (sim.north west) {\footnotesize $\sigma_E$};

\node (le) at (-.4,-\t) {};
\node (re) at (.4,-\t) {};

\node (junc1) at (le |- a) {};
\node (junc2) at (re |- a) {};

\draw[sArrow] (mInner) to node[swap,auto,pos=.6] {$a,x$} (a);

\draw[sArrow,<-] (mInner) to (junc1.center) to (simInner);
\draw[sArrow] (simInner) to (le);
\draw[sArrow] (re) to (junc2.center) to (b);

\end{tikzpicture}}

\vspace{12pt}

\subfloat[Completeness][\label{fig:qkd.ent.completeness}When modeling
completeness, the source of entanglement $\aE'$ prepares the state
$\rho_{AB}$. The systems on the left and right are
indistinguishable.]{
\begin{tikzpicture}[
sArrow/.style={->,>=stealth,thick},
thinResource/.style={draw,thick,minimum width=1.618*2cm,minimum height=1cm},
sqResource/.style={draw,rounded corners,thick,minimum width=1cm,minimum height=1cm}]

\small

\def\ua{3.85} 
\def\um{2.6} 
\def\ub{2.37} 
\def\v{.75}

\node (a) at (-\ua,0) {};
\node (b) at (\ub,0) {};

\node[sqResource] (m) at (-\um,0) {};
\node[inner sep=1] (mInner) at (-\um,0) {$\varphi^{x,a}$};
\node[yshift=-2,above right] at (m.north west) {\footnotesize
  $\gamma$};

\node[thinResource] (qch) at (0,0) {};
\node[yshift=-1.5,above right] at (qch.north west) {\footnotesize
  Noiseless channel $\aQ'$};



\draw[sArrow] (mInner) to node[swap,auto,pos=.6] {$a,x$} (a);

\draw[sArrow] (mInner) to node[auto,pos=.93] {$\varphi$} (b);


\end{tikzpicture} \hspace{2cm}
\begin{tikzpicture}[
sArrow/.style={->,>=stealth,thick},
thinResource/.style={draw,thick,minimum width=1.618*2cm,minimum height=1cm},
sqResource/.style={draw,rounded corners,thick,minimum width=1cm,minimum height=1cm}]

\small

\def\ua{3.85} 
\def\um{2.6} 
\def\ub{2.37} 
\def\v{.75}

\node (a) at (-\ua,0) {};
\node (b) at (\ub,0) {};

\node[sqResource] (m) at (-\um,0) {};
\node[inner sep=1] (mInner) at (-\um,0) {$M^a_x$};
\node[yshift=-2,above right] at (m.north west) {\footnotesize
  $\alpha$};

\node[thinResource] (qch) at (0,0) {};
\node[yshift=-1.5,above right] at (qch.north west) {\footnotesize
  Source of states $\aE'$};

\node[draw] (boxState) at (0,0) {$\rho_{AB}$};


\draw[sArrow] (mInner) to node[swap,auto,pos=.6] {$a,x$} (a);

\draw[sArrow] (boxState) to (mInner);
\draw[sArrow] (boxState) to (b);


\end{tikzpicture}}

\caption[Using an entanglement protocol with an insecure
channel]{\label{fig:qkd.ent}Pictorial proof for the security of the
  construction of $\alpha\aE$ from $\aQ$ and $\alpha \aE'$ from
  $\aQ'$. Any protocol designed to run with a source of entangled
  states $\aE$ and which measures the incoming states on Alice's side
  as does $\alpha$ can be equivalently used with an insecure channel
  $\aQ$ and a converter $\gamma$ that generates the states to be sent
  on the channel.}
\end{figure*}
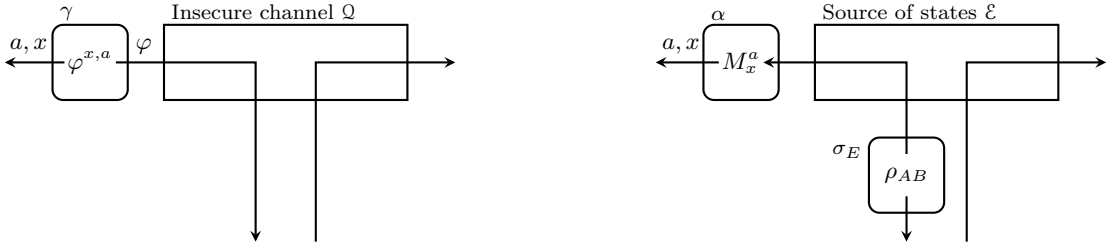
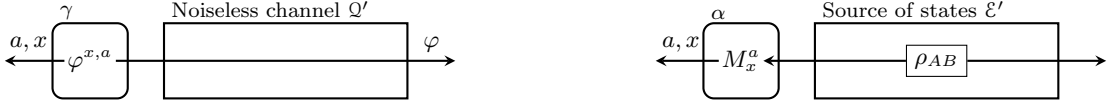

\subsection{Imperfect randomness}
\label{sec:alternative.randomness}

QKD protocols usually assume that the honest parties have (arbitrary) access to perfect random numbers. This is however never the case in practice. A more realistic model of a QKD system would consider randomness as a resource that is available in limited and imperfect quantities to Alice and Bob. The real QKD setting drawn in \figref{fig:qkd.real} needs to be changed to take this into account. In \figref{fig:qkd.real.randomness} we depict a QKD protocol that \--- additionally to the insecure quantum channel and authentic classical channel \--- has access to resources producing (local) randomness, $\aR_A$ and $\aR_B$, at Alice's and Bob's interfaces, respectively. A different model of randomness resources might also provide some partial (quantum) information about the randomness to the eavesdropper. For simplicity, however, we chose to draw the simpler case in which $\aR_A$ and $\aR_B$ have an empty interface for the dishonest party.

\begin{figure}[tb]

\begin{tikzpicture}[
sArrow/.style={->,>=stealth,thick},
thinResource/.style={draw,thick,minimum width=2.4cm,minimum height=1cm},
pnode/.style={minimum width=.8cm,minimum height=.5cm},
sqResource/.style={draw,thick,minimum width=1cm,minimum height=1cm},
longProtocol/.style={draw,rounded corners,thick,minimum width=1.2cm,minimum height=4cm}]

\small

\def\t{4.1} 
\def\u{2.9} 
\def\v{1.5}
\def\w{.6} 
\def\x{.8} 

\node[pnode] (a1) at (-\u,\v) {};
\node[pnode] (a2) at (-\u,0) {};
\node[pnode] (a3) at (-\u,-\v) {};
\node[longProtocol] (a) at (-\u,0) {};
\node[yshift=-2,above right] at (a.north west) {\footnotesize
  $\pi^{\qkd}_A$};
\node (alice) at (-\t,0) {};

\node[pnode] (b1) at (\u,\v) {};
\node[pnode] (b2) at (\u,0) {};
\node[pnode] (b3) at (\u,-\v) {};
\node[longProtocol] (b) at (\u,0) {};
\node[yshift=-2,above right] at (b.north west) {\footnotesize $\pi^{\qkd}_B$};
\node (bob) at (\t,0) {};

\node[sqResource] (ra) at (-\x-\w,\v) {};
\node[yshift=-2,above right] at (ra.north west) {\footnotesize $\aR_A$};
\node[sqResource] (rb) at (\x+\w,\v) {};
\node[yshift=-2,above right] at (rb.north west) {\footnotesize $\aR_B$};
\node[thinResource] (cch) at (\w,0) {};
\node[yshift=-2,above right] at (cch.north west) {\footnotesize
  Authentic ch.~$\aA$};
\node[thinResource] (qch) at (-\w,-\v) {};
\node[yshift=-1.5,above right] at (qch.north west) {\footnotesize
  Insecure ch.~$\aQ$};
\node (eveq1) at (-\w-.4,-\v-1) {};
\node (junc1) at (eveq1 |- a3) {};
\node (eveq2) at (-\w+.4,-\v-1) {};
\node (junc2) at (eveq2 |- a3) {};
\node (evec) at (\w+\w,-\v-1) {};
\node (junc3) at (evec |- b2) {};

\draw[sArrow] (ra.center) to node[auto,swap,pos=.65] {$r_A$} (a1);
\draw[sArrow] (rb.center) to node[auto,pos=.65] {$r_B$}  (b1);

\draw[sArrow,<->] (a2) to node[auto,pos=.08] {$t$} node[auto,pos=.92] {$t$}  (b2);
\draw[sArrow] (junc3.center) to node[auto,pos=.9] {$t$} (evec.center);

\draw[sArrow] (a2) to node[auto,pos=.75,swap] {$k_{A},\bot$} (alice.center);
\draw[sArrow] (b2) to node[auto,pos=.75] {$k_{B},\bot$} (bob.center);

\draw[sArrow] (a3) to (junc1.center) to node[pos=.8,auto,swap] {$\rho$} (eveq1.center);
\draw[sArrow] (eveq2.center) to node[pos=.264,auto,swap] {$\rho'$} (junc2.center) to (b3);

\end{tikzpicture}

\caption[QKD system with explicit randomness
resource]{\label{fig:qkd.real.randomness} A real QKD system with a
  deterministic protocol $(\pi^{\qkd}_A,\pi^{\qkd}_B)$ and explicit
  sources of randomness $\aR_A$ and $\aR_B$.}
\end{figure}
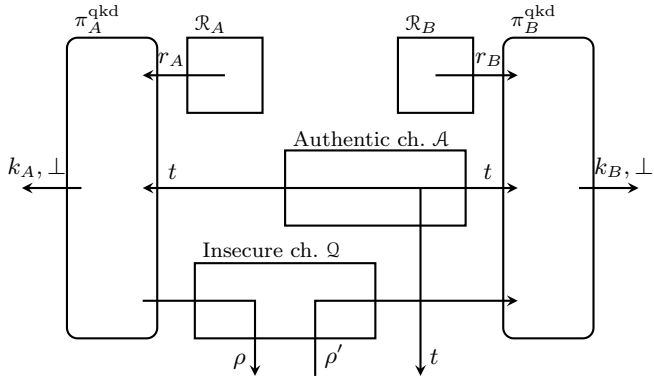

In such a setting, the converters $\pi^{\qkd}_A$ and $\pi^{\qkd}_B$
are deterministic systems. A QKD protocol would then construct an
ideal key resource given access to these three resources. It remains
an open problem to minimize the assumptions on the sources of
randomness in QKD. Recent results on device\-/independent randomness
amplification~\cite{CR12} show that under certain minimal assumptions\footnote{One
  generally has to assume that no messages leave or enter the quantum
  devices unless authorized by the protocol. Some papers make
  additional assumptions to simplify the protocols and proofs.} about
the workings of an unknown quantum system, one can transform a single
(public) weak source of randomness into a fully (private) random
source~\cite{CSW14,BRGHHHSW16,KAF20}. Alternatively, if two (or more)
sources of weak randomness are available to a player (under certain
strict conditions on the correlations between these different
sources), these can be combined to obtain (approximately) uniform
randomness \cite{CLW14,AFPS16}. Composing this with a standard QKD
protocol would allow secret keys to be distributed when only weak
randomness is available to the honest parties.

\subsection{Device-independent QKD}
\label{sec:alternative.di}

In this review we have so far always considered scenarios for which it
is assumed that the players have trusted quantum devices, which work
exactly according to their specifications. For instance, if a player
instructs the device to generate a $\zero$ state, then it is assumed
that the device generates precisely this state. This assumption is
however not met in any actual implementation with realistic devices,
as these are never perfect. Indeed, there have been numerous
demonstrations of successful attacks against implementations of
quantum cryptographic protocols that exploited deviations of the
devices' functionality from the specifications, as discussed in
\secref{sec:attacks}. Crucially, this problem cannot be solved only by
a more careful design of the devices, for it appears to be impossible
to guarantee their perfect working under all possible environmental
conditions.

A theoretical solution to this problem is to devise protocols whose
security does not rely on the assumption that devices are perfect.
Ideally, they should provide security guarantees even if the devices
are untrusted, meaning that their behavior may deviate arbitrarily
from the specification. Remarkably, using quantum devices, this is
possible (with certain caveats described below). The idea is to use a
phenomenon called \emph{(Bell) non\-/locality} \cite{Bell64} \--- see
also \textcite{Sca13,BCPSW14} for review articles on the topic. The
subfield of cryptography that studies the use of non\-/locality to
design protocols that work with untrusted devices is termed
\emph{device\-/independent cryptography}.
  
In a nutshell, a Bell inequality is a bound on the probability of
observing certain values in an experiment involving measurements of two isolated (and hence non-communicating) systems. The bound characterises classical locality: it cannot be violated if
the two isolated systems are described by classical physics. However, the bound can be violated by measurements on entangled quantum systems.   One of the most commonly used Bell inequalities is the
CHSH inequality \cite{CHSH69}. It states that, if two players each
hold non\-/communicating systems, and each performs one out of two
binary measurements chosen uniformly at random on their respective
system, where the choice of the measurement is given by
$x,y \in \{0,1\}$ and the outcome is given by $a,b \in \{0,1\}$,
respectively, then the probability that $xy = a \xor b$ should be less than
or equal to $3/4$.\footnote{An alternative formulation of the
  inequality is
  $\left| E(0,0) + E(0,1) + E(1,0) - E(1,1) \right| \leq 2$, where
  $E(x,y)$ is the expected value of the product of the outcomes of the
  systems when measured with settings $x$ and $y$, respectively, and
  the outcomes are values in $\{-1,+1\}$.} But if the systems are
quantum, it is possible to observe this outcome with probability up to
$\approx .85$ \--- this is achieved if the systems are in a perfectly
entangled state and the players perform an optimal measurement.

An observation of a violation of a Bell inequality implies that the
measurement outcomes contain some genuine randomness
\cite{Col06,PAM10,AMP12,CR12}, even conditioned on the knowledge of
the person who set up and programmed the devices used in the
experiment \--- the only assumptions being that no information other
than the measurement result leaves the devices, and that these devices
never fall in the hands of an adversary, since their internal memory
may contain a copy of the measurement outcomes. This randomness may
then be used to generate uniform random numbers
\cite{VV12,CSW14,MS14,BRGHHHSW16,KAF20} or a shared secret key
\cite{BHK05,PABGMS09,VV14,ADFRV18,AFRV19}.

For a review of different results and techniques in
device\-/independent cryptography, we refer to \textcite{ER14}. In
this section we show how to model device\-/independent quantum key
distribution (DI-QKD) in the AC framework. It then follows from the
composition theorem of AC, that the resulting key can be safely used
in applications requiring one.

The converters $\pi^\qkd_A$ and $\pi^\qkd_B$ modeling Alice's and
Bob's parts of the protocol in \secref{sec:qkd} are systems which
generate quantum states and perform measurements. In DI-QKD, exactly
these operations cannot be trusted. So instead, the DI protocol
$(\pi^\diqkd_A,\pi^\diqkd_B)$ will only involve \emph{classical}
operations. Everything \emph{quantum} is moved into a resource, a
device $\aD$. The honest players can send bits to these devices, and
receive bits back from them \--- this corresponds to choosing a
measurement $x,y \in \{0,1\}$ and receiving the outcome
$a,b \in \{0,1\}$, described a few paragraphs earlier. The adversary is
permitted to ``program'' these devices by providing some initial state
$\rho$ as input \--- depending on the model, Eve may be allowed to
provide further inputs to the device at some later point, e.g., to
provide more EPR pairs so that it may continue running. The
corresponding real world is drawn in
\figref{fig:alternatives.diqkd}. The ideal world will be identical to
that of standard QKD, since we wish to construct the same key
resource, i.e., \figref{fig:qkd.resource.sim}.

\begin{figure}[tb]

\begin{tikzpicture}[
sArrow/.style={->,>=stealth,thick},
thinResource/.style={draw,thick,minimum width=2.4cm,minimum height=1cm},
sqResource/.style={draw,thick,minimum width=1cm,minimum height=1cm},
protocol/.style={draw,rounded corners,thick,minimum width=1.2cm,minimum height=2.5cm},
pnode/.style={minimum width=.6cm,minimum height=.5cm}]

\small

\def\t{4} 
\def\u{2.8} 
\def\v{.75}
\def\w{1.3} 

\node[pnode] (a1) at (-\u,\v) {};
\node[pnode] (a2) at (-\u,0) {};
\node[pnode] (a3) at (-\u,-\v) {};
\node[protocol] (a) at (-\u,0) {};
\node[yshift=-2,above right] at (a.north west) {\footnotesize
  $\pi^{\diqkd}_A$};
\node (alice) at (-\t,0) {};

\node[pnode] (b1) at (\u,\v) {};
\node[pnode] (b2) at (\u,0) {};
\node[pnode] (b3) at (\u,-\v) {};
\node[protocol] (b) at (\u,0) {};
\node[yshift=-2,above right] at (b.north west) {\footnotesize $\pi^{\diqkd}_B$};
\node (bob) at (\t,0) {};

\node[thinResource] (cch) at (0,\v) {};
\node[yshift=-2,above right] at (cch.north west) {\footnotesize
  Authentic ch.~$\aA$};
\node[sqResource] (da) at (-\w,-\v) {$\aD_A$};
\node[yshift=-1.5,above] at (da.north) {\footnotesize
  Device};
\node[pnode] (dan) at (-\w,-\v) {};
\node[sqResource] (db) at (\w,-\v) {$\aD_B$};
\node[yshift=-1.5,above] at (db.north) {\footnotesize
  Device};
\node[pnode] (dbn) at (\w,-\v) {};

\node (eveq1) at (-\w,-1.75) {};
\node (eveq2) at (\w,-1.75) {};
\node (evec) at (0,-1.75) {};
\node (junc3) at (evec |- b1) {};

\draw[sArrow,<->] (a1) to node[auto,pos=.2] {$t$} node[auto,pos=.8] {$t$}  (b1);
\draw[sArrow] (junc3.center) to node[auto,pos=.9] {$t$} (evec.center);

\draw[sArrow] (a2) to node[auto,pos=.75,swap] {$k_{A},\bot$} (alice.center);
\draw[sArrow] (b2) to node[auto,pos=.75] {$k_{B},\bot$} (bob.center);

\draw[sArrow] (eveq1.center) to node[pos=.3,auto,swap] {$\rho_A$} (dan);
\draw[sArrow] (eveq2.center) to node[pos=.3,auto,swap] {$\rho_B$} (dbn);

\draw[sArrow,bend left] (a3) to node[pos=.5,auto] {$x$} (dan);
\draw[sArrow,bend left] (dan) to node[pos=.5,auto] {$a$} (a3);
\draw[sArrow,bend right] (b3) to node[pos=.5,auto,swap] {$y$} (dbn);
\draw[sArrow,bend right] (dbn) to node[pos=.5,auto,swap] {$b$} (b3);

\end{tikzpicture}

\caption[DI-QKD]{\label{fig:alternatives.diqkd}The real world setting
  for  a DI-QKD protocol. Eve can program the devices $\aD$, but cannot
  receive any output from them.}
\end{figure}
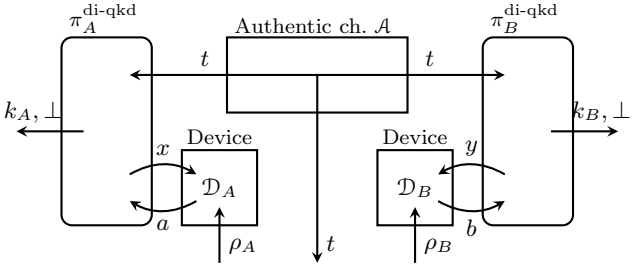

Applying \defref{def:security}, this means that the protocol
$(\pi^\diqkd_A,\pi^\diqkd_B)$ constructs $\aK$ from $\aA$, $\aD_A$ and
$\aD_B$ within $\eps$ if
\begin{equation} \label{eq:diqkd}
  \exists \sigma_E, \quad \pi_A^{\diqkd}\pi_B^{\diqkd}(\aD_A \| \aD_B \| \aA)
  \close{\eps} \aK \sigma_E.
\end{equation}
Note that we have not specified the behaviors of the devices $\aD_A$
and $\aD_B$ at all. In fact, we need \eqnref{eq:diqkd} to hold for all
devices $\aD_A$ and $\aD_B$.\footnote{The simulator may depend on
  these devices, i.e., $\forall \aD_A, \aD_B, \exists \sigma_E$ such
  that \eqnref{eq:diqkd} holds.} This is exactly the
\emph{device\-/independant} guarantee, namely that security holds
regardless of how the (quantum) devices work. Alternatively one can
consider fixed devices $\aD_A$ and $\aD_B$ that are universal
computers, and have their program be part of the inputs at the $E$ interface.

As usual, completeness is captured by specific devices $\aD'_A$ and
$\aD'_B$ that work honestly \--- e.g., they share perfectly entangled
states and perform the correct measurements as specified by the
protocol \--- as well as the same honest resources $\aA'$ and $\aK'$
as in \secref{sec:qkd}. Additionally to
\eqnref{eq:diqkd}, we also need
\begin{equation*} 
  \pi_A^{\diqkd}\pi_B^{\diqkd}(\aD'_A \| \aD'_B
  \| \aA') \close{\eps'} \aK'.
\end{equation*}

The same reduction as for (normal) QKD goes through, and one can show
that \eqnref{eq:diqkd} is satisfied if for all behaviors of the
devices (and their inputs), \eqnsref{eq:qkd.cor} and \eqref{eq:qkd.sec}
hold.

Note however that the construction outlined in this section only
allows the devices $\aD_A$ and $\aD_B$ to be accessed during the
protocol. No access is granted after the protocol ends, meaning that
we make no security statement about what happens if the devices are
reused. It is an open question how to reuse devices in DI
cryptography, which we discuss in \secref{sec:open.di}.

Proving security of device-independent QKD is more challenging than in
the device-dependent case. One of the difficulties is that the
measurement operators that describe Alice and Bob's measurement can be
arbitrary. In particular it cannot be assumed, for instance, that two
subsequent measurement outcomes by Bob are obtained by two separate
measurement processes. While some of the techniques described
in~\secref{sec:securityproofs}, such as entropy accumulation, are
still applicable to the device-independent setting, others, like de
Finetti-type theorems, are not or must be adapted,
[see~\textcite{ArnonThesis} for details].

\subsection{Semi-device-independent QKD}
\label{sec:alternative.semi}

The only assumption made about the devices in DI-QKD is that no
information leaves these devices unless allowed by the protocols (see
\secref{sec:alternative.di}). But achieving the violation of Bell
inequalities needed for this is challenging because it requires high
detector efficiency and tolerates only low noise on the
channel~\cite{BCPSW14}. Protocols that are easier to implement can be
achieved by making additional assumptions about the quantum devices
used by Alice and Bob. These are generally called
\emph{semi\-/device\-/independent} (semi-DI).

Many different assumptions may be labeled semi-DI. For example, in a
one-sided device-independent setting the protocol is DI for Bob but
not for Alice \cite{BCWSW12}. One may also assume dimension bounds on
the quantum systems generated by untrusted devices as in
\textcite{PB11}. Alternatively, \textcite{LPTRG13} assume that each
use of the devices are causally independent \--- i.e., the states
generated and measurements performed are in product form \--- to
analyze a protocol where the Bell violation is measured locally in
Alice's lab, thus avoiding the noise introduced by the channel between
Alice and Bob. Similar ideas have been used for other protocols than
QKD, e.g., semi-DI quantum money~\cite{HS20,BDG19}.

One of the most promising forms of semi-DI QKD, which has already been
implemented over large distances
\cite{Liu13,Tang2014,Pirandola2015,Yin2016} is
\emph{measurement\-/device\-/independent} (MDI) QKD
\cite{LCQ12,BP12,MR12,CXCLTL14}. Here, one assumes that players can
generate the states they desire, but one does not trust measurement
devices at all. This model is motivated by the attacks on the
detectors, e.g., the time-shift attacks or detector blinding attacks
discussed in \secref{sec:attacks:hacking}.

To understand how such protocols work, we will start from an
entanglement based protocol as in
\secref{sec:alternative.entanglement}, then modify it step by
step, until we achieve a prepare\-/and\-/measure protocol, in which
all measurements are performed by the adversary. Since the final
protocol is as secure as the original one, and the original one is
secure for all adversaries, the final MDI QKD protocol is secure
for all adversaries as well. In particular, it is secure for
adversaries that completely control the measurement apparatus.

In an entanglement based protocol, Alice and Bob receive the $A$ and
$B$ parts of a state $\psi_{ABR}$, and measure these systems in either
the computational or diagonal basis, obtaining a raw key. This key is
then processed as in a prepare\-/and\-/measure protocol (see
\secref{sec:qkd.protocol} and \secref{sec:securityproofs}).  If the
source gave them a state which is (close to) a tensor product of EPR
pairs, such a protocol is guaranteed to terminate with a shared secret
key. Equivalently, the source could generate any of the Bell states,
and notify Alice and Bob which one it gave them. They then perform bit
or phase flips to change it to an EPR pair.

Instead of the source distributing an entangled state, Alice and Bob
could each generate an EPR pair $AA'$ and $BB'$, respectively. They
then send $A'$ and $B'$ to a third party, Charlie, who measures this
in the Bell basis, and tells them the measurement outcome. If
performed correctly, the $AB$ system will be in a Bell state, and the
measurement outcome will tell them which one. By flipping bits or
phases, Alice and Bob can turn this into an EPR pair, and continue
with the protocol as above. Crucially, if Charlie does not perform the
correct measurement, then Alice and Bob will end up holding the $A$
and $B$ parts of some unknown state $\psi_{ABR}$. But this does not
compromise security: if it is too far from the expected state, the
protocol will just abort.

Instead of first performing a bit or phase flip, and then measuring,
Alice and Bob could first measure their systems $A$ and $B$, and then
flip the value of the measurement outcome if needed. And instead of
generating EPR pairs $AA'$ and $BB'$, then measuring $A$ and $B$, they
could pick the measurement outcome at random, then generate the
corresponding reduced state in $A'$ and $B'$ respectively, send these
to Charlie, and when they obtain the measurement outcome from Charlie,
they flip their bits if needed.

The only (trusted) quantum operations that Alice and Bob need to
perform in the protocol described in the paragraph above are
generating the states in the systems $A'$ and $B'$. All measurements
have now been delegated to Charlie, who may deviate arbitrarily from
the protocol without compromising security.

The real world for such a MDI-QKD protocol is drawn in
\figref{fig:alternatives.mdi}, where one can see that the converters
$\pi^{\mdi}_A$ and $\pi^{\mdi}_B$ do not have any incoming quantum
states, i.e., they do not need to perform any measurement.

Security proofs for MDI-QKD protocols can be based on the same techniques as those for fully device-independent protocols, as discussed in \secref{sec:alternative.di}. The comments on security proofs made in that section thus also apply here. 

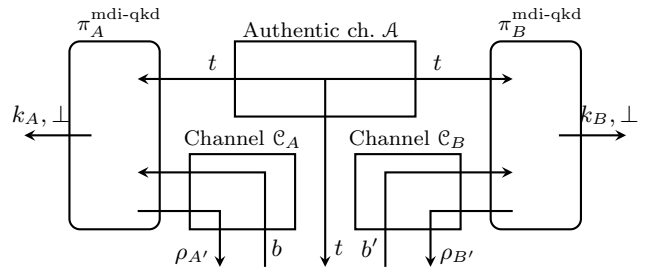
\begin{figure}[tb]

\begin{tikzpicture}[
sArrow/.style={->,>=stealth,thick},
thinResource/.style={draw,thick,minimum width=2.4cm,minimum height=1cm},
sqResource/.style={draw,thick,minimum width=1.4cm,minimum height=1cm},
protocol/.style={draw,rounded corners,thick,minimum width=1.2cm,minimum height=2.5cm},
pnode/.style={minimum width=.6cm,minimum height=.5cm}]

\small

\def\t{4} 
\def\u{2.8} 
\def\v{.75}
\def\w{1.1} 
\def\a{.3}

\node[pnode] (a1) at (-\u,\v) {};
\node[pnode] (a2) at (-\u,0) {};
\node[pnode] (a3) at (-\u,-\v) {};
\node[protocol] (a) at (-\u,0) {};
\node[yshift=-2,above right] at (a.north west) {\footnotesize
  $\pi^{\mdi}_A$};
\node (alice) at (-\t,0) {};

\node[pnode] (b1) at (\u,\v) {};
\node[pnode] (b2) at (\u,0) {};
\node[pnode] (b3) at (\u,-\v) {};
\node[protocol] (b) at (\u,0) {};
\node[yshift=-2,above right] at (b.north west) {\footnotesize $\pi^{\mdi}_B$};
\node (bob) at (\t,0) {};

\node[thinResource] (cch) at (0,\v) {};
\node[yshift=-2,above right] at (cch.north west) {\footnotesize
  Authentic ch.~$\aA$};
\node[sqResource] (da) at (-\w,-\v) {};
\node[yshift=-1.5,above] at (da.north) {\footnotesize
  Channel $\aC_A$};
\node[pnode] (dan) at (-\w,-\v) {};
\node[sqResource] (db) at (\w,-\v) {};
\node[yshift=-1.5,above] at (db.north) {\footnotesize
  Channel $\aC_B$};
\node[pnode] (dbn) at (\w,-\v) {};

\node (eveq11) at (-\w-\a,-1.75) {};
\node (eveq12) at (-\w+\a,-1.75) {};
\node (eveq21) at (\w-\a,-1.75) {};
\node (eveq22) at (\w+\a,-1.75) {};
\node (evec) at (0,-1.75) {};
\node (junc3) at (evec |- b1) {};

\draw[sArrow,<->] (a1) to node[auto,pos=.2] {$t$} node[auto,pos=.8] {$t$}  (b1);
\draw[sArrow] (junc3.center) to node[auto,pos=.9] {$t$} (evec.center);

\draw[sArrow] (a2) to node[auto,pos=.75,swap] {$k_{A},\bot$} (alice.center);
\draw[sArrow] (b2) to node[auto,pos=.75] {$k_{B},\bot$} (bob.center);

\draw[sArrow] (a3.south east) to (eveq11 |- a3.south east) to node[pos=.8,auto,swap] {$\rho_{A'}$} (eveq11.center);
\draw[sArrow] (b3.south west) to (eveq22 |- b3.south west) to node[pos=.8,auto] {$\rho_{B'}$} (eveq22.center);
\draw[sArrow] (eveq12.center) to node[pos=.2,auto,swap,xshift=-1] {$b$} (eveq12 |- a3.north east) to (a3.north east);
\draw[sArrow] (eveq21.center) to node[pos=.2,auto,xshift=1] {$b'$} (eveq21 |- b3.north west) to (b3.north west);

\end{tikzpicture}

\caption[MDI-QKD]{\label{fig:alternatives.mdi}The real world setting
  for a MDI-QKD protocol. The only quantum operations performed by
  $\pi^{\mdi}_A$ and $\pi^{\mdi}_B$ are the generation of quantum
  states. The communication resources $\aC$ send quantum systems from
  Alice or Bob to Eve, and classical bits from Eve to Alice and Bob..}
\end{figure}

\subsection{Memoryless adversaries}
\label{sec:alternative.memoryless}

The previous sections analyzed different models of QKD, in which we
changed the capabilities and resources of the honest players running
the protocol. Similar techniques may also be used to model
limitations on adversaries. In this last section we consider as example QKD
protocol with an adversary that has no (long-term) quantum memory, and is
forced to measure the quantum states exchanged between Alice and Bob
during the QKD protocol and store the classical information.

The insecure channel resource, $\aQ$, modeled as part of the real QKD
system in \figref{fig:qkd.real.adv} gives complete control over the
states sent on this channel to the adversary. Since this may include
storing them and measuring them at a later point, we need to limit the
adversary's access to this channel as part of the insecure channel
resource. We may thus define a different channel $\tilde{\aQ}$, which
requires Eve to input some measurement specification and then obtains
the measurement outcome at her interface. The resulting
post\-/measurement state is then output at Bob's interface.

The model of $\tilde{\aQ}$ described above is just one possible way
one may imagine limiting Eve's access to the states sent during
QKD. The result is a change in the requirements of the QKD
protocol. Instead of constructing a secure key $\aK$ from an authentic
channel $\aA$ and an insecure channel $\aQ$, it is now sufficient if
$\aK$ can be constructed from $\aA$ and $\tilde{\aQ}$. It is not hard
to see that, since the accessible information (see
\secref{sec:qkd.other.ai}) measures the information an adversary has
\emph{after} measuring their quantum states, a QKD protocol with low
accessible information would satisfy such a construction \--- namely,
$\aA \| \tilde{\aQ} \rightarrow \aK$. The accessible information
security measure is thus a composable criterion under the assumption
that the adversary has such a physical limit on their memory.

Since QKD protocols are secure against general adversaries as modeled
in \secref{sec:qkd}, there does not seem to be much incentive to
consider adversaries with limitations on their memory (unlike for some
two-party protocols discussed in \secref{sec:open.other}). It is
however noteworthy that, as already mentioned in
\secref{sec:qkd.other.ac}, by explicitly limiting the adversary's
capabilities we capture weaker security notions that appeared in the literature.


\section{Secure classical message transmission}
\label{sec:smt}

One of the main tasks in cryptography is to securely transmit a confidential message from one player to another. \emph{Securely} transmitting the message means that the adversary does not learn anything about the message \--- except unavoidable leaks such as the message length \--- and cannot modify the message either. 
We also want to achieve this with minimal assumptions on the available resources. In \figref{fig:construction} we depict the steps necessary to construct such a secure channel from nothing but insecure channels and an initial short key. The aim of this section is to explain this construction in detail. 

In \secref{sec:smt.auth} we first show how to construct an authentic
channel, which is used both by QKD and the OTP. Then in
\secref{sec:smt.qkd} and \secref{sec:smt.otp} we revisit the notions
of a secure key and secure channel resources introduced earlier, and
discuss a modification used here.\footnote{We take into account that
  Eve may prevent the honest players from obtaining the key or the
  transmitted message, which was ignored earlier for simplicity.} In
\secref{sec:smt.together} we put the individual parts of the
construction together and show how this gives a construction of a
secure channel from a short secret key and insecure communication
channels only.

\subsection{Authentication}
\label{sec:smt.auth}

The use of an authentic channel is essential for many cryptographic
protocols, including quantum key distribution, as we have seen
earlier. It allows the players Alice and Bob to be sure that they
are communicating with each other, and not with an adversary Eve. The
authentic channel we used in \secref{sec:qkd} \--- e.g.,
\figref{fig:qkd.real} \--- is however idealized: it guarantees that
the recipient always receives the message that was sent. In a
realistic situation, one has to assume that an adversary may jumble or
cut the communication, and prevent messages from arriving. What still
can be constructed is a channel which guarantees that Bob does not
receive a corrupted message. He either receives the correct message
sent by Alice, or an error, which symbolizes an attempt by Eve to
change the message. This can be modeled by giving Eve's idealized
interface two controls: the first provides her with Alice's message,
the second allows her to input one bit which specifies whether Alice's
message should be delivered or whether Bob gets an error instead. We
illustrate this in \figref{fig:auth.resource}.

\begin{figure}[tb]
\begin{tikzpicture}[
sArrow/.style={->,>=stealth,thick},
largeResource/.style={draw,thick,minimum width=1.618*2cm,minimum height=2cm}]

\small

\node[largeResource] (keyBox) at (0,0) {};
\node (alice) at (-2.6,0) {Alice};
\node (bob) at (2.6,0) {Bob};
\node (eve) at (0,-1.7) {Eve};
\node (ajunc) at (eve.north west |- alice) {};

\draw[thick] (alice) to node[pos=.12,auto] {$x$} (0,0) to node[pos=.5]
(ejunc) {} +(160:-.8);
\draw[sArrow] (ajunc.center) to node[pos=.85,auto,swap] {$x$} (eve.north west);
\draw[sArrow] (.8,0) to node[pos=.9,auto] {$x,\bot$} (bob);
\draw[double] (ejunc.center |- eve.north) to node[pos=.15,auto,swap] {$0,1$} (ejunc.center);

\end{tikzpicture}

\vspace{6pt}






\caption[Authentic channel resource]{\label{fig:auth.resource}An
  authentic channel resource. The message input at Alice's interface
  is visible to Eve, who gets to decide if Bob receives it or not. But
  it guarantees that, if Bob does receive a message, then it
  corresponds to the one sent by Alice.}
\end{figure}
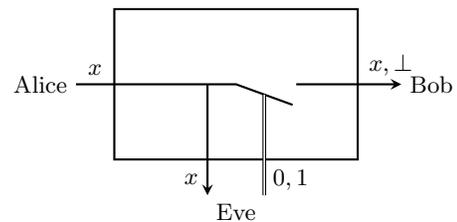

As we are going to explain, such an authentic channel can be
constructed from a completely insecure channel together with a shared
secret key. Although this may be done using a non\-/uniform secret key
[see \textcite{RW03,DW09,ACLV19}], we review here a simpler
construction, originally proposed by \textcite{WC81}, which still only
needs a short key, but which however has to be (close to) uniform: one
computes a hash $h_k(x)$ of the message $x$, and sends the string
$x \| h_k(x)$ to Bob, where $k$ is the short shared secret key and
$\{h_k\}_{k \in \cK}$ a family of strongly universal hash
functions.\footnote{See the formal definition later in this work in
  \footnoteref{fn:universalhash}.} Alice's part of the authentication
protocol $\pi^{\auth}_A$ thus gets as input a key $k$ from an ideal
key resource as well as a message $x$ from Alice, and sends
$x \| h_k(x)$ over the insecure channel. When Bob receives a string
$x' \| y'$, he needs to check whether $y' = h_k(x')$. His part of the
protocol hence gets as input the key $k$ from the ideal key resource,
the message $x' \| y'$ delivered by the channel, and outputs $x'$ if
$y' = h_k(x')$, and otherwise a symbol $\bot$ to indicate an
error. This is depicted in \figref{fig:auth.real}.

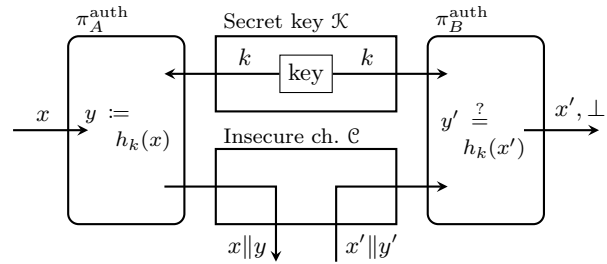
\begin{figure}[tb]

\begin{tikzpicture}[
sArrow/.style={->,>=stealth,thick},
thinResource/.style={draw,thick,minimum width=2.4cm,minimum height=1cm},
protocol/.style={draw,rounded corners,thick,minimum width=1.545cm,minimum height=2.5cm},
pnode/.style={minimum width=1cm,minimum height=.5cm}]

\small

\def\t{3.895} 
\def\u{2.39} 
\def\v{.75}

\node[pnode] (a1) at (-\u,\v) {};
\node[pnode] (a2) at (-\u,0) {};
\node[pnode] (a3) at (-\u,-\v) {};
\node[protocol,text width=1.1cm] (a) at (-\u,0) {\footnotesize $y
  \coloneqq $\\$\quad \ h_k(x)$};
\node[yshift=-2,above right] at (a.north west) {\footnotesize
  $\pi^{\auth}_A$};
\node (alice) at (-\t,0) {};

\node[pnode] (b1) at (\u,\v) {};
\node[pnode] (b2) at (\u,0) {};
\node[pnode] (b3) at (\u,-\v) {};
\node[protocol,text width=1.2cm] (b) at (\u,0) {\footnotesize $y' \stackrel{?}{=}$\\$\quad h_k(x')$};
\node[yshift=-2,above right] at (b.north west) {\footnotesize $\pi^{\auth}_B$};
\node (bob) at (\t,0) {};

\node[thinResource] (keyBox) at (0,\v) {};
\node[draw] (key) at (0,\v) {key};
\node[yshift=-2,above right] at (keyBox.north west) {\footnotesize
  Secret key $\aK$};
\node[thinResource] (channel) at (0,-\v) {};
\node[yshift=-1.5,above right] at (channel.north west) {\footnotesize
  Insecure ch.~$\aC$};
\node (eveleft) at (-.4,-1.75) {};
\node (everight) at (.4,-1.75) {};
\node (ajunc) at (eveleft |- a3) {};
\node (bjunc) at (everight |- b3) {};

\draw[sArrow] (key) to node[auto,swap,pos=.3] {$k$} (a1);
\draw[sArrow] (key) to node[auto,pos=.3] {$k$} (b1);

\draw[sArrow] (alice.center) to  node[auto,pos=.4] {$x$} (a2);
\draw[sArrow] (b2) to node[auto,pos=.75] {$x',\bot$} (bob.center);

\draw[sArrow] (a3) to (ajunc.center)
to node[pos=.8,auto,swap] {$x \| y$} (eveleft.center);
\draw[sArrow] (everight.center) to node[pos=.2,auto,swap] {$x' \| y'$}
(bjunc.center) to (b3);

\end{tikzpicture}

\caption[Real authentication system]{\label{fig:auth.real}The real
  authentication system consists of the authentication protocol
  $(\pi^{\auth}_A,\pi^{\auth}_B)$ as well as the secret key and insecure
  channel resources, $\aK$ and $\aC$. As in previous illustrations,
  Alice has access to the left interface, Bob to the right interface
  and Eve to the lower interface.}
\end{figure}

To capture completeness of this protocol, one considers instead of the insecure channel $\aC$ as in
\figref{fig:auth.real} a noiseless channel with a
blank interface for Eve\footnote{Unlike in the case of QKD, we do not
  consider noisy channels between Alice and Bob, as such noise could
  be removed easily by encoding the communication with an appropriate classical error
  correcting code. Therefore, the assumed and constructed channels
  faithfully deliver the message from Alice to Bob.}  [as illustrated
in \figref{fig:security.channel.noiseless}], and the constructed
channel is also a perfect noiseless channel instead of the channel
from \figref{fig:auth.resource}. These real and ideal systems are
indistinguishable as they are both identity channels which faithfully
transmit $x$ from Alice to Bob. This proves completeness, and we can therefore focus in the
following on the other part, namely proving soundness of the protocol.

In the ideal setting, the authentic channel
(\figref{fig:auth.resource}) has the same interface on Alice's and
Bob's sides as the real setting (\figref{fig:auth.real}): Alice can
input a message, and Bob receives either a message or an
error. However, Eve's interface looks quite different: in the real
setting she can modify the transmission on the insecure channel,
whereas in the ideal setting the adversarial interface provides only
controls to read the message and interrupt the transmission. From
\defref{def:security} we have that an authentication protocol
constructs the authentic channel if there exists a simulator
$\sigma^{\auth}_E$ that can recreate the real interface while
accessing just the idealized one. An obvious choice for the simulator
is to first generate its own key $k$ and output $x \| h_k(x)$. Then
upon receiving $x' \| y'$, it checks if $x' \| y' = x \| h_k(x)$ and
presses the switch on the authentic channel to output an error if this
does not hold. We illustrate this in \figref{fig:auth.ideal}.

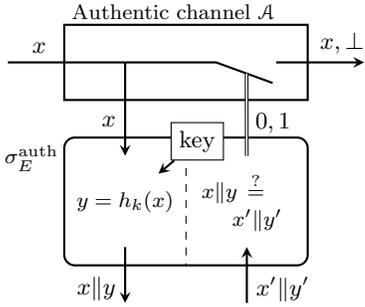
\begin{figure}[tb]

\begin{tikzpicture}[
sArrow/.style={->,>=stealth,thick},
thinResource/.style={draw,thick,minimum width=1.618*2cm,minimum height=1cm},
simulator/.style={draw,rounded corners,thick,minimum width=1.618*2cm,minimum height=1.7cm},
snode/.style={minimum width=1.1cm,minimum height=1.2cm}]

\small

\def\t{2.368} 
\def\u{-1.1}
\def\v{.75}
\def\w{-2.45} 

\node[thinResource] (channel) at (0,\v) {};
\node[yshift=-1.5,above right] at (channel.north west) {\footnotesize
  Authentic channel $\aA$};
\node (alice) at (-\t,\v) {};
\node (bob) at (\t,\v) {};

\node[simulator] (sim) at (0,\u) {};
\node[xshift=1.5,below left] at (sim.north west) {\footnotesize
  $\sigma^{\auth}_E$};
\node[snode,ellipse] (sleft) at (-.809,\u) {};
\node[snode] (sright) at (.809,\u) {};
\draw[dashed] (sim.north) to (sim.south);

\node (ajunc) at (sleft |- alice) {};
\node (bjunc) at (sright |- bob) {};

\draw[thick] (alice.center) to node[pos=.15,auto] {$x$} (.4,\v) to node[pos=.54] (ejunc) {} +(160:-.8);
\draw[sArrow] (ajunc.center) to node[pos=.63,auto,swap] {$x$} (sleft);
\draw[sArrow] (1.2,\v) to node[pos=.75,auto] {$x,\bot$} (bob.center);
\draw[double] (sright) to node[pos=.4,auto,swap] {$0,1$} (ejunc.center);

\node (sltext) at (-.809,\u) {\footnotesize $y = h_k(x)$};
\node[text width=1.2cm] (srtext) at (.809,\u) {\footnotesize $x \| y
  \stackrel{?}{=}$\\$\quad \ x' \| y'$};

\node (eveleft) at (sleft |- 0,\w) {};
\node (everight) at (sright |- 0,\w) {};
\draw[sArrow] (sleft) to node[pos=.75,auto,swap] {$x \| y$} (eveleft.center);
\draw[sArrow] (everight.center) to node[pos=.25,auto,swap] {$x' \| y'$}
(sright);

\node[draw,fill=white] (key) at (.15,\u+.8) {key};
\draw[sArrow] (key) to (sleft);

\end{tikzpicture}

\caption[Ideal authentication system]{\label{fig:auth.ideal}The ideal authentication system \--- Alice has access to the left interface,  Bob to the right interface and Eve to the lower interface \---  consists of the ideal authentication resource and a simulator  $\sigma^{\auth}_E$.}
\end{figure}

In this case, an authentication protocol is $\eps$\=/secure if Figures~\ref{fig:auth.real} and \ref{fig:auth.ideal} are $\eps$\=/close, i.e.,
\begin{equation} \label{eq:security.auth} \pi^{\auth}_A\pi^{\auth}_B
  \left(\aK \| \aC \right) \close{\eps} \aA\sigma^{\auth}_E.
\end{equation}

Original works defining authentication \--- e.g.,
\textcite{WC81,Sim85,Sim88,Sti90,Sti94} \--- did not use such a
composable security definition. Instead, they considered two kinds of
attacks. In the first, the adversary obtains a pair of a valid message
and authentication tag, and tries to find a pair of a different
message and corresponding valid authentication tag \--- this is called
a \emph{substitution attack}. In the second, the adversary directly
tries to find a pair of message and corresponding valid authentication
tag \--- this is called an \emph{impersonation attack}. It was then
shown that if the family of hash functions used are $\eps$\=/almost
strongly $2$\-/universal,\footnoteremember{fn:universalhash}{A family
  of functions $\{h_k : \cX \to \cY\}_k$ is said to be $\eps$\=/almost
  strongly $2$\-/universal if any two different messages are almost
  uniformly mapped to all pairs of tags, i.e.,
  $\forall x_1,x_2,y_1,y_2,x_1 \neq x_2, \Pr_k \left[ h_k(x_1) = y_1
    \text{ and } h_k(x_2) = y_2\right] \leq
  \frac{\eps}{|\cY|}$~\cite{Sti94}. The family of functions is said to
  be strongly $2$\-/universal if $\eps = 1/|\cY|$.}  the probability
of either of these attacks being successful is bounded by $\eps$. A
composable security proof for these schemes was given by
\textcite{Por14}, who showed that \eqnref{eq:security.auth} is
satisfied, again under the condition that the family of hash functions
used are $\eps$\=/almost strongly
$2$\-/universal.\footnote{\textcite{Por14} additionally shows that
  part of the secret key $k$ can be recycled, since only a number of
  bits corresponding to the length of the hash $h_k(x)$ are
  leaked. This is discussed in \secref{sec:recycle}.}

Note that a distinguisher interacting with either of the real or ideal
systems has the choice between providing messages in two different
orders. It can first provide Alice with a message, receive the
ciphertext\footnote{By \emph{ciphertext} we denote the pair of the
  message and authentication tag generated by the sender.} at Eve's
interface, then input a modified ciphertext, and finally learn whether
the ciphertext is accepted or not at Bob's interface. Or, it can first
input a forged ciphertext at Eve's interface, then learn if it is
accepted, and finally provide Alice with a message and obtain the
corresponding ciphertext at Eve's interface. These two orders of
messages roughly correspond to the substitution and impersonation
attacks.

The secret key resource used so far in this section assumes that both
players always get a copy of the key. However, in \secref{sec:qkd} we
modeled a secret key resource with a switch at Eve's interface, giving
her the possibility to prevent the players from getting the key. If
such a switch is present and Eve flips it, the honest players will not
be able to run the authentication protocol at all. This does however
not change the ideal resource constructed, because not running the
protocol or running the protocol but Eve preventing the message from
being delivered are essentially equivalent. However, the proof in this
case requires a different simulator \--- one which receives the bit
deciding whether the players get a key or not and then acts accordingly.

In \secref{sec:smt.qkd} an even weaker secret key resource is
considered, one which allows Eve to decide if only one of the two
players gets a secret key, but not the other \--- this is drawn in
\figref{fig:key.resource}. Running a similar reasoning as in the
paragraph above, one can see that this does not change the outcome of
the protocol either: it still constructs the authentic channel from
\figref{fig:auth.resource}.

\begin{figure}[tb]

\begin{tikzpicture}[
sArrow/.style={->,>=stealth,thick},
largeResource/.style={draw,thick,minimum width=1.618*2cm,minimum height=2cm}]

\small

\def\u{1.3}
\def\s{.8}

\node[largeResource] (keyBox) at (0,0) {};
\node (alice) at (-3,0) {Alice};
\node (bob) at (3,0) {Bob};
\node (eve) at (0,-1.7) {Eve};
\node[draw] (key) at (0,0) {key};
\node (juncL) at (-\u,0) {};
\node (juncR) at (\u,0) {};

\draw[sArrow] (juncR.center) to node[pos=0.7,auto] {$k,\bot$} (bob);
\draw[sArrow] (juncL.center) to node[pos=0.7,auto,swap] {$k,\bot$} (alice);
\draw[thick] (key) to (-\u+\s,0) to node[pos=.5] (handleL) {} +(210:\s);
\draw[thick] (key) to (\u-\s,0) to node[pos=.5] (handleR) {} +(330:\s);
\draw[double] (handleL |- eve.north) to node[pos=.15,auto,swap] {$0,1$} (handleL.center);
\draw[double] (handleR |- eve.north) to node[pos=.15,auto] {$0,1$} (handleR.center);

\end{tikzpicture}

\caption[Secret key resource with two
switches]{\label{fig:key.resource}A secret key resource allowing Eve
  to control who gets the key: the two bits Eve inputs control whether
  Alice and Bob, respectively, obtain a key or an error message from
  the resource.}
\end{figure}
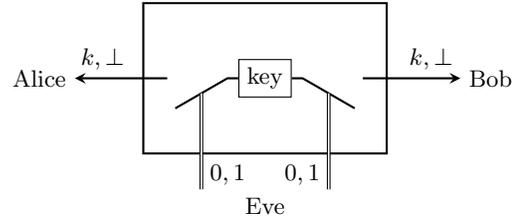


\subsection{Quantum key distribution}
\label{sec:smt.qkd}

In \secref{sec:qkd} we analyzed QKD protocols that use an insecure quantum channel and an authentic channel with the guarantee that the message is always delivered, as indicated in \figref{fig:qkd.real}. The motivation behind this (standard) choice was that, if the message is not delivered, then the players abort and the scheme is trivially secure. In other words, the non-trivial case that needs to be analyzed to prove that a QKD scheme constructs  the ideal key resource is the one in which the adversary does not  use her switch and allows the messages to be delivered on the  authentic channel.

Nonetheless, if we do replace the authentic channel with the one that
can actually be constructed from \figref{fig:auth.resource}, then we
also have to weaken the ideal key resource that is constructed. If the
players get an error message from the authentic channel instead of the
intended message, they will simply abort the protocol and not produce
a key. In the version from \secref{sec:qkd}, Eve already has the power
to prevent the players from getting a secret key. The difference is
that now Eve can let one player get the secret key, but not the other,
e.g., by jumbling the last message between the players. The resulting
ideal key resource is drawn in \figref{fig:key.resource}.

The analysis carried out in \secref{sec:qkd} goes through with only
minor changes with the weaker authentic channel and secret key
resources, since the only differences are the abort conditions, which
now may additionally occur because of failed authentication. In
particular, the reduction from the constructive statement,
\eqnref{eq:qkd.security}, to the trace distance criterion,
\eqnref{eq:qkd.sec}, is unaffected by these changes of resources.

Note that if Eve prevents one player from getting the key, but not the
other, the players are generally unaware of this fact and may end up
with mismatching key lengths.\footnote{The protocol may be designed in
  such a way that the round in which Eve needs to jumble the
  communication so that one player accepts the key but not the other
  is unknown to her. Thus her probability of success will be $O(1/n)$,
  where $n$ is the number of rounds of communication.}  This is not a
problem which is specific to QKD, but happens in general with any key
distribution scheme \cite{Wol99}.

\subsection{One-time pad}
\label{sec:smt.otp}

In \secref{sec:ac.otp} we analyzed the OTP and showed that it
constructs a secure channel given a secret key and an authentic
channel. But again, we used an authentic channel that guarantees that
the message is transmitted, as well as a secret key that is guaranteed
to be delivered to the players. It is however easy to convince oneself
that these extra assumptions about the resources do not affect the
security of the protocol, since if the players do not get a key or a
message they just abort the protocol, in which case security holds
trivially. Therefore, if we plug in the authentic channel from
\figref{fig:auth.resource} and the secret key resource from
\figref{fig:key.resource}, we merely need to weaken the secure channel
that is constructed, so that it may abort as well. We can model this
by adding a switch at Eve's interface that when pressed, delivers an
error message at Bob's interface instead of Alice's message, as
depicted in \figref{fig:secure.resource}. This is very similar to the
authentic channel resource from \figref{fig:auth.resource}, except
that Eve receives only the length of the message rather than the
message itself.

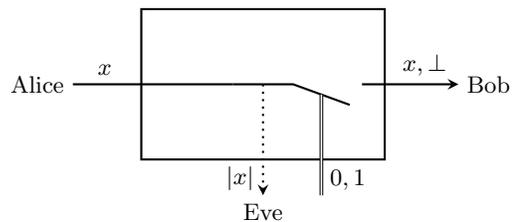
\begin{figure}[tb]

\begin{tikzpicture}[
sArrow/.style={->,>=stealth,thick},
largeResource/.style={draw,thick,minimum width=1.618*2cm,minimum height=2cm}]

\small

\def\z{.4}

\node[largeResource] (keyBox) at (0,0) {};
\node (alice) at (-3,0) {Alice};
\node (bob) at (3,0) {Bob};
\node (eve) at (0,-1.7) {Eve};
\node (juncLL) at (-3*\z,0) {};
\node (juncLR) at (-\z,0) {};
\node (juncRL) at (\z,0) {};
\node (juncRR) at (3*\z,0) {};

\draw[thick] (alice) to node[pos=.2,auto] {$x$} (juncLR.center);
\draw[thick] (juncLR.center) to (juncRL.center) to node[pos=.5]
(handle2) {} +(340:2*\z);
\draw[sArrow] (juncRR) to node[pos=.65,auto] {$x,\bot$} (bob);
\draw[sArrow,dotted] (0,0) to node[pos=.85,auto,swap] {$|x|$} (eve);
\draw[double] (handle2.center |- eve.north) to node[pos=.15,auto,swap] {$0,1$} (handle2.center);

\end{tikzpicture}

\caption[Secure channel with a switch]{\label{fig:secure.resource}A
  secure channel that allows Eve to learn the length of the message
  and prevent Bob from receiving it: when Alice inputs a message $x$
  at her interface, information about the length of the message is
  given to Eve, who can additionally press a switch that either delivers
  Alice's message to Bob or provides him with an error message $\bot$
  instead.}
\end{figure}

The analysis of the OTP with these altered resources is identical to
the one of \secref{sec:ac.otp}, since the only difference is that both
the real and ideal systems might abort or output an error at Bob's
interface instead of a message if Eve operates the corresponding
switch at her interface.

\subsection{Combining the subprotocols}
\label{sec:smt.together}

Let $\aA$ denote an authentic channel resource, as illustrated in
\figref{fig:auth.resource}, and let $\aK^{\ell}$ denote a secret key
resource of length $\ell$, as drawn in
\figref{fig:key.resource}. Furthermore, let $\aC$ be an insecure
classical, and $\aQ$ an insecure quantum channel. Finally, we denote by $\aS$ a secure classical channel, as depicted in
\figref{fig:secure.resource}. If we summarize the results presented so
far, we have that an authentication protocol constructs an authentic
channel from an insecure channel and a secret key, i.e.,
\[ 
\aK^{a} \| \aC \xrightarrow{\pi^\auth_{AB},\eps_\auth}
\aA;\
\]
a QKD protocol constructs a shared secret key resource from an authentic channel and an insecure quantum channel, i.e., 
\[ 
\aA \| \aQ \xrightarrow{\pi^\qkd_{AB},\eps_\qkd}
\aK^n;\]
and a OTP constructs a secure channel from an authentic channel and a secret key, i.e., 
\[
 \aA \| \aK^m \xrightarrow{\pi^\otp_{AB},0} \aS.
\]
Using also the fact that a key can be split, i.e.,  
\[ 
  \aK^{a+b} \xrightarrow{\id,0} \aK^{a} \| \aK^b ,
  \]
denoting by $a_\qkd$ the length of the key used by the authentication subroutines for QKD,\footnote{A QKD protocol usually authenticates   many messages, which may be going in both directions between Alice and Bob. For   simplicity, we write this here as just one round of authentication, which uses a key of length $a_\qkd$ and has error   $\eps^\qkd_\auth$.} and by $a_\otp$ the length of the key used to authenticate the message for constructing the secure channel, we obtain
\[ 
\aC \| \aC \| \aQ \| \aK^{a_\qkd} \xrightarrow{\pi_{AB},\eps} \aS
\| \aK^{n-m-a_\otp} ,
\]
where $\pi_{AB}$ is the composition of all the protocols and $\eps = \eps_\qkd + \eps_\auth^\qkd + \eps_\auth^\otp$ is the sum of the errors of the individual protocols. We depict this in \figref{fig:secure.complete}, where for simplicity we have drawn only one round of authentication as a subroutine of QKD.

\begin{figure*}[tb]

\begin{tikzpicture}[
resource/.style={draw,thick,minimum width=3cm,minimum height=1cm},
sArrow/.style={->,>=stealth,thick},
sLine/.style={-,thick},
protocol2/.style={draw,thick,minimum width=.6cm,minimum height=2.8cm,rounded corners},
protocol3/.style={draw,thick,minimum width=.6cm,minimum height=4.6cm,rounded corners}]

\small

\def\v{1.8}
\def\x{.5} 
\def\p{2.4} 
\def\c{1.2} 
\def\t{7.3}

\node[resource,red] (keyLong) at (0,3*\v) {};
\node[yshift=-1.5,above right] at (keyLong.north west) {\footnotesize
  Secret key $\aK^{a_\qkd}$};
\node[resource,red] (ch1) at (0,2*\v) {};
\node[yshift=-1.5,above] at (ch1.north) {\footnotesize
  Insecure class.\ ch.\ $\aC$};
\node[resource,violet] (chQ) at (0,\v) {};
\node[yshift=-1.5,above] at (chQ.north) {\footnotesize
  Insecure quant.\ ch.\ $\aQ$};
\node[resource,blue] (ch2) at (0,0) {};
\node[yshift=-1.5,above] at (ch2.north) {\footnotesize
  Insecure class.\ ch.\ $\aC$};

\node[protocol2,red] (authqkdA) at (-\p,2.5*\v) {};
\node[yshift=-1.5,above] at (authqkdA.north) {\footnotesize $\pi^\auth_A$};
\node[protocol2,red] (authqkdB) at (\p,2.5*\v) {};
\node[yshift=-1.5,above] at (authqkdB.north) {\footnotesize $\pi^\auth_B$};
\node[protocol3,violet] (qkdA) at (-\p-\c,2*\v) {};
\node[yshift=-1.5,above] (labelA) at (qkdA.north) {\footnotesize $\pi^\qkd_A$};
\node[protocol3,violet] (qkdB) at (\p+\c,2*\v) {};
\node[yshift=-1.5,above] (labelB) at (qkdB.north) {\footnotesize $\pi^\qkd_B$};
\node[protocol2,blue] (authotpA) at (-\p-2*\c,\v/2) {};
\node[yshift=-1.5,above] at (authotpA.north) {\footnotesize $\pi^\auth_A$};
\node[protocol2,blue] (authotpB) at (\p+2*\c,\v/2) {};
\node[yshift=-1.5,above] at (authotpB.north) {\footnotesize $\pi^\auth_B$};
\node[protocol3,green] (otpA) at (-\p-3*\c,\v) {};
\node[yshift=-1.5,above] at (otpA.north) {\footnotesize $\pi^\otp_A$};
\node[protocol3,green] (otpB) at (\p+3*\c,\v) {};
\node[yshift=-1.5,above] at (otpB.north) {\footnotesize $\pi^\otp_B$};

\node[draw,thick,dashed,fit=(authqkdA)(labelA)(otpA),inner sep=8] (compA) {};
\node[yshift=-1.5,above right] at (compA.north west) {\footnotesize
  Composed protocol $\pi_A$};
\node[draw,thick,dashed,fit=(authqkdB)(labelB)(otpB),inner sep=8] (compB) {};
\node[yshift=-1.5,above right] at (compB.north west) {\footnotesize
  Composed protocol $\pi_B$};

\node (p1) at (0,5*\v/2) {};
\node (p2) at (0,3*\v/2) {};
\node (p3) at (0,\v/2) {};
\node (p4) at (0,-\v/2) {};
\node (aliceUp) at (-\t,3*\v) {};
\node (aliceDown) at (-\t,\v) {};
\node (bobUp) at (\t,3*\v) {};
\node (bobDown) at (\t,\v) {};

\node (eveLeft) at (-\x,0) {};
\node (eveRight) at (\x,0) {};

\draw[sArrow] (aliceDown.center) to node[auto,pos=.35] {$x$} (otpA.west |- aliceDown);
\draw[sArrow] (otpA.east |- p3) to (authotpA.west |- p3);
\draw[sArrow] (qkdA.west |- aliceUp) to node[auto,pos=.9,swap] {$k_1$} (aliceUp.center);
\draw[sArrow] (qkdA.west |- 0,2*\v) to node[auto,pos=.5,swap] {$k_2$} (otpA.east |- 0,2*\v);
\draw[sArrow] (qkdA.west |- 0,\v) to node[auto,pos=.5,swap] {$k_3$} (authotpA.east |- 0,\v);
\draw[sArrow] (authotpA.east |- 0,0) to (eveLeft |- 0,0) to (eveLeft |- p4);
\draw[sArrow] (qkdA.east |- 0,\v) to (eveLeft |- 0,\v) to (eveLeft |- p3);
\draw[sArrow] (qkdA.east |- p1) to (authqkdA.west |- p1);
\draw[sArrow] (authqkdA.east |- 0,2*\v) to (eveLeft |- 0,2*\v) to (eveLeft |- p2);
\draw[sArrow] (keyLong.west |- 0,3*\v) to (authqkdA.east |- 0,3*\v);

\draw[sArrow] (keyLong.east |- 0,3*\v) to (authqkdB.west |- 0,3*\v);
\draw[sArrow] (eveRight |- p2) to (eveRight |- 0,2*\v) to (authqkdB.west |- 0,2*\v);
\draw[sArrow] (eveRight |- p3) to (eveRight |- 0,\v) to (qkdB.west |- 0,\v);
\draw[sArrow] (eveRight |- p4) to (eveRight |- 0,0) to (authotpB.west |- 0,0);
\draw[sArrow] (authqkdB.east |- p1) to (qkdB.west |- p1);
\draw[sArrow] (qkdB.east |- aliceUp) to node[auto,pos=.9] {$k'_1$} (bobUp.center);
\draw[sArrow] (qkdB.east |- 0,2*\v) to node[auto,pos=.5] {$k'_2$} (otpB.west |- 0,2*\v);
\draw[sArrow] (qkdB.east |- 0,\v) to node[auto,pos=.5] {$k'_3$} (authotpB.west |- 0,\v);
\draw[sArrow] (authotpB.east |- p3) to (otpB.west |- p3);
\draw[sArrow] (otpB.east |- bobDown) to node[auto,pos=.65] {$x'$} (bobDown.center);

\end{tikzpicture}

\caption[Constructing a secure channel]{\label{fig:secure.complete}
  (Color online)
  Composition of QKD, authentication and OTP protocols. For
  simplicity, we have drawn only one round of authentication as a
  subroutine of QKD as \textcolor{red}{$\pi^\auth_{AB}\left(
      \aK^{a_\qkd} \| \aC \right)$}. The \textcolor{violet}{QKD
    protocol $\pi^{\qkd}_{AB}$} constructs a shared key resource that
  produces the long key $(k_1,k_2,k_3)$. The second
  \textcolor{blue}{authentication protocol $\pi^\auth_{AB}$} then uses
  part of this key to construct another authentic channel, and the
  \textcolor{green}{OTP protocol $\pi^{\otp}_{AB}$} uses another part
  of this key to encrypt and decrypt the message sent on the channel.}
\end{figure*}


\section{Other cryptographic tasks}
\label{sec:other}

In our description so far we have adopted the composable view and
regarded cryptographic protocols as constructions, i.e., protocols
construct some resources given other resources.\footnote{See also
  Footnote~\ref{fn:resourcetheory} for other uses of resource theories
  in quantum mechanics.} Cryptographic protocols proposed in the
literature have not always been defined in this way, but instead are
specified in terms of particular security\-/like properties, e.g.,
that an adversary is unable to guess the content of an encrypted
message. Sometimes these properties can be rephrased as an ideal
system within the real\-/world ideal\-/world paradigm, as discussed in
\secref{sec:qkd.other.ac}. In this section we review some of the major
results in quantum cryptography from this perspective, i.e., we
present them as constructive statements, defining the resources
constructed and used by the protocols. For a broader review of quantum
cryptography, we refer to a recent survey by \textcite{BS16}.

\subsection{Secure quantum message transmission}
\label{sec:qmt}

From a theory of resources perspective, the task of securely
transmitting a quantum message from Alice to Bob is nearly identical
to the corresponding classical task, analyzed in \secref{sec:smt}. Here too, we
require the players to share a secret key resource and an insecure
channel, and the goal is to construct a secure channel \--- the only
difference being that the insecure and secure channels are both
quantum channels. We have already encountered insecure quantum
channels in \secref{sec:qkd}, where they were used for QKD. A secure
quantum channel is modeled analogously to a secure classical channel
as drawn in \figref{fig:secure.resource}, except that the messages
sent are quantum. We depict this in
\figref{fig:secure.quantum.resource}.

\begin{figure}[tb]

\begin{tikzpicture}[
sArrow/.style={->,>=stealth,thick},
largeResource/.style={draw,thick,minimum width=1.618*2cm,minimum height=2cm}]

\small

\def\z{.4}

\node[largeResource] (keyBox) at (0,0) {};
\node (alice) at (-3,0) {Alice};
\node (bob) at (3,0) {Bob};
\node (eve) at (0,-1.7) {Eve};
\node (juncLL) at (-3*\z,0) {};
\node (juncLR) at (-\z,0) {};
\node (juncRL) at (\z,0) {};
\node (juncRR) at (3*\z,0) {};

\draw[thick] (alice) to node[pos=.2,auto] {$\rho$} (juncLR.center);
\draw[thick] (juncLR.center) to (juncRL.center) to node[pos=.5]
(handle2) {} +(340:2*\z);
\draw[sArrow] (juncRR) to node[pos=.65,auto] {$\rho,\bot$} (bob);
\draw[sArrow,dotted] (0,0) to node[pos=.85,auto,swap] {$|\rho|$} (eve);
\draw[double] (handle2.center |- eve.north) to node[pos=.15,auto,swap] {$0,1$} (handle2.center);

\end{tikzpicture}

\caption[Secure quantum channel]{\label{fig:secure.quantum.resource}A
  secure quantum channel that allows Eve to learn the length of the
  quantum message \--- informally denoted as $|\rho|$ \--- and prevent
  Bob from receiving it: when Alice inputs a message $\rho$ at her
  interface, information about the length of the message is given to
  Eve, who can additional press a switch that either delivers Alice's
  message to Bob or provides him with an error message $\bot$
  instead.}
\end{figure}

The first protocols that construct such a secure quantum channel from a shared secret key and an insecure channel were proposed by \textcite{BCGST02}. They follow the same pattern as classical message transmission: one first encrypts the quantum message with a quantum OTP, then encodes it in a larger space so as to detect any errors that may be introduced by an adversary. However, contrary to the case of classical messages, there is no known way to view these two steps as two distinctive constructive statements, i.e., as a construction of an authentic channel from an insecure channel, and a second construction of a secure channel from an authentic channel. This means that the analysis has to include both aspects at the same time.

In \secref{sec:qmt.protocol} we explain this construction and in
\secref{sec:qmt.related} we review additional work on the topic. At
the end of this section, in \secref{sec:computational}, we 
revisit this topic from a computational perspective.

\subsubsection{Generic protocol}
\label{sec:qmt.protocol}

The classical OTP introduced in \secref{sec:ac.otp} can be seen as
randomly flipping each bit of the message. The quantum OTP
\cite{AMTW00,BR03} follows the same principle: one flips the bits and
phases of the message at random. For $x,z \in \{0,1\}^n$, let $X^x$
and $Z^z$ denote operators on $\left(\complex^2\right)^{\otimes n}$
which perform bit and phase flips in positions indicated by the
strings $x$ and $z$, respectively. The quantum OTP consists of
choosing $x$ and $z$ uniformly at random, and applying the
corresponding operation to the message. For any state $\rho_{MR}$
where $M$ is a register of size $2^n$, we thus have
\[ 
\frac{1}{2^{2n}}\sum_{x,z} Z^z X^x \rho_{MR} X^x Z^z = \tau_M
\otimes \rho_R,
\]
where $\tau_M$ is the fully mixed state and $\rho_R$ is the reduced
density operator of $\rho_{MR}$. We call the operators $Z^zX^x$
defined this way \emph{Pauli operators}.\footnote{This notation
  simplifies the presentation here, but deviates from the more
  commonly used definition of the Pauli operators as
  $i^{z \cdot x} Z^z X^x$, where $z \cdot x = \sum_j z_jx_j$ and $i$
  is the imaginary unit.}

The second ingredient needed to construct a secure quantum channel is
an error correcting code that is going to be used to detect errors in
the transmission, i.e., tampering by an adversary. Generally, an error
correcting code may be seen as a map from a message space $\hilbert_M$
to a larger (physical) space $\hilbert_C$. For simplicity, we model
the encoding for a code $\cC_k$ as first appending a state
$\ket{0} \in \hilbert_T$ to the message $\rho_M$, where
$\hilbert_C = \hilbert_M \otimes \hilbert_T$, followed by applying a
unitary $U_k$ to the resulting state, e.g.,
$\sigma_{C} = U_k \left( \rho_M \otimes \proj{0} \right)
\hconj{U}_k$.

To test whether an error occurred, one decodes the received state
$\tilde{\sigma}_{C}$ by applying the inverse operation $\hconj{U}_k$
and measures the $T$ register in the computational basis. If the
result is not $0$, then this is evidence for noise. We say that a code
detects an error $V$ if after decoding a message to which this error
was applied \--- i.e., $\tilde{\sigma}_{C} = V \sigma_{C} \hconj{V}$
\--- one always gets a measurement outcome different from
$0$. Furthermore, we call an error \emph{trivial} if it never affects
the code word, i.e., for any $\rho_M$,
\[ 
  \hconj{U}_k V U_k \left( \rho_M \otimes \proj{0} \right)
\hconj{U}_k \hconj{V} U_k = \rho_M \otimes \proj{0}.
\]

For an error $V$ to modify a message and yet not be caught, it must be
non-trivial and not detected by the code used. For the purpose of
constructing a secure channel according to the method of
\textcite{BCGST02}, it is sufficient to detect all Pauli
errors. \textcite{BCGST02} define a set of codes, which they call
\emph{purity testing codes}. They guarantee that with high probability
over the choice of code, all Pauli errors are either caught or
trivial. More precisely, a set $\{\cC_k\}_k$ of codes forms a family
of $\eps$\=/purity testing codes if, for any Pauli error $X^xZ^z$, the
probability over a uniformly random choice of $k$ that this error is
neither caught nor trivial is less than $\eps$.

The protocol for constructing a secure channel then works as
follows. The sender first encrypts the message with a quantum OTP,
i.e., a Pauli operator $Z^zX^x$ chosen uniformly at random according
to the secret key. Then a state $\ket{s}$ is appended to the message,
where $s$ is also chosen uniformly at random according to the
key. Finally, the resulting state is encoded with a unitary $U_k$
corresponding to the encoding operation of an element of a purity
testing code family $\{\cC_{k}\}_{k}$, where again $k$ is chosen
uniformly at random according to the secret key. Decryption works in
the obvious way: the receiver applies the inverse operator
$\hconj{U}_k$ and then measures the $T$ register. If the outcome is
not $s$, the message was jumbled and the player outputs an error
symbol. Otherwise, the receiver decrypts the message with the operator
$Z^zX^x$.

\subsubsection{Concrete schemes}
\label{sec:qmt.related}

\textcite{BCGST02} introduced the general family of protocols
described in \secref{sec:qmt.protocol}, and also provided a concrete
construction of a purity testing code family that has good
parameters. Following this seminal work, a variety of further
protocols for authentication of quantum messages have been proposed in
the literature, which are based on different codes. Authentication
using the signed polynomial code \cite{BCGHS06,ABE10}, the trap code
\cite{BGS13,BW16}, the Clifford code \cite{ABE10,DNS12,BW16} \---
which is a unitary $3$\-/design~\cite{Web15,Zhu17} \--- a unitary
$8$\-/design \cite{GYZ17} and a unitary $2$\-/design\footnote{Because
  any unitary $t$\-/design for $t \geq 2$ is a unitary $2$\-/design,
  it follows that any $t$\-/design constructs a secure quantum
  channel.}  \cite{Por17,AM17} are all instances of the family from
\textcite{BCGST02}, with alternative purity testing
codes.\footnote{Most of these works consider a construction where the
  message is first encoded with a purity testing code, and then
  encrypted. But as shown in \textcite{Por17}, this is equivalent to
  the original scheme of \textcite{BCGST02}, which reverses the order
  of these two operations.} To the best of our knowledge, only the
Auth-QFT-Auth scheme from \textcite{GYZ17} is not known to follow the
model of \textcite{BCGST02}.

Although most of these works provide some kind of security proof for
the protocol, only two papers consider a composable security
definition, namely \textcite{HLM11,Por17}. Both works show that the
family of protocols from \textcite{BCGST02} construct a secure quantum
channel from a shared secret key and an insecure quantum channel. It
should be noted however that \textcite{HLM11} considers a restricted
class of distinguishers \--- those that perform a substitution attack
(see \secref{sec:smt.auth}) \--- and \textcite{Por17} only analyzes a
subset of this family, for which the purity testing code family
detects all (rather than only non-trivial) errors with high
probability.\footnote{One refers to this as a \emph{strong} purity
  testing codes.} In fact both papers prove that one may additionally
recycle part of the key, which is discussed further in the following
section.

\subsection{Key reuse in classical and quantum message transmission}
\label{sec:recycle}

As already mentioned in Secs.~\ref{sec:smt.auth} and
\ref{sec:qmt.related}, part of the key used in the constructions of
secure channels may be recycled, i.e., at the end of the protocol it
can be added back to a pool of secret key bits. For example, in the
case of classical message authentication analyzed in
\secref{sec:smt.auth}, the sender appends a tag $y = h_k(x)$ to the
message $x$. The value of the tag $y$ depends on the shared secret key
$k$, and every bit of the tag leaks (at most) a bit of the secret key
to the adversary. But the key is longer than the tag, so the bits
which are not leaked may be reused. It is however vital that they are
not recycled too soon: if the sender reuses part of the key before the
receiver obtains the (authenticated) message, the adversary may learn
these bits and use this information to successfully change the message
and authentication tag.

To recycle key bits in classical or quantum message transmission, the
real system is changed as follows. Firstly, the players need an extra
resource, a $1$-bit backwards authentic channel, allowing the receiver
to tell the sender whether the message was successfully received or
not. Once this confirmation is sent, the receiver may recycle part of
the key, i.e., it is output by the corresponding converter. And once
this confirmation is received by the sender, she also recycles the
same part of the key. The ideal resource constructed in this way
corresponds to the parallel composition of a secure (or authentic)
channel and a secret key resource, drawn in
\figref{fig:keyrecycle.ideal}. As previously, the adversary may
control whether the message is delivered on the secure channel. And
since the amount of key recycled may depend on the adversary's
behavior as well---namely by allowing or preventing the message from
being delivered, we model the resource as being equipped with a switch
to control how much key the players get.

\begin{figure}[tb]

\begin{tikzpicture}[
sArrow/.style={->,>=stealth,thick},
thinResource/.style={draw,thick,minimum width=3.2cm,minimum height=1cm}]

\small

\def\t{2.5} 
\def\v{.8}
\def\e{2.2} 
\def\z{.4}

\node (a1) at (-\t,\v) {};
\node (a2) at (-\t,-\v) {};
\node (b1) at (\t,\v) {};
\node (b2) at (\t,-\v) {};
\node (eve) at (0,-\e) {};

\node[thinResource] (keyBox) at (0,\v) {};
\node[draw] (key) at (0,\v) {key};
\node[yshift=0,above right] (keyname) at (keyBox.north west) {\footnotesize
  Secret key $\aK$};

\draw[sArrow] (key) to node[auto,swap,pos=.85] {$k$} (a1);
\draw[sArrow] (key) to node[auto,pos=.85] {$k$} (b1);
\draw[double] (eve.center) to node[pos=.08,auto,swap] {$0,1$} (key);

\node[thinResource] (channel) at (0,-\v) {};

\node (juncLL) at (-3*\z,-\v) {};
\node (juncLR) at (-\z,-\v) {};
\node (juncRL) at (\z,-\v) {};
\node (juncRR) at (3*\z,-\v) {};

\draw[thick] (a2) to node[pos=.15,auto] {$\rho$} (juncLR.center);
\draw[thick] (juncLR.center) to (juncRL.center) to node[pos=.5]
(handle2) {} +(340:2*\z);
\draw[sArrow] (juncRR) to node[pos=.6,auto] {$\rho,\bot$} (b2);
\draw[sArrow,dotted] (-2*\z,-\v) to node[pos=.8,auto,swap] {$|\rho|$}
(-2*\z,-\v |- eve.center);
\draw[double] (handle2.center |- eve.center) to node[pos=.2,auto,swap] {$0,1$} (handle2.center);

\node[yshift=0,above right] at (channel.north west) {\footnotesize
  Secure channel $\aS$};

\node[inner sep=0] (justadot) at (0,1.5) {};

\node[draw,thick,dashed,fit=(channel)(keyBox)(justadot),inner sep=8] (alltogether) {};

\end{tikzpicture}

\caption[Ideal secure channel with
key]{\label{fig:keyrecycle.ideal}The ideal system for a secure channel
  with key recycling: it consists of a secure channel $\aS$ and key
  resource $\aK$. The adversary controls the length of the 
  recycled key (through her input to $\aK$) as well as whether the
  receiver obtains the message (through her input to $\aS$).}
\end{figure}
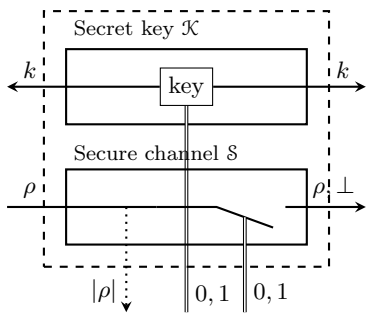

In the case of authentication of classical messages, \textcite{WC81}
already proposed that part of the key can be safely recycled. Here, if
the $2$-universal hash function has the special form
$h_{k_1,k_2}(x) = {f_{k_1}(x) \xor k_2}$, where $k_2$ is a bit string
of the same length as the tag, then $k_1$ may be recycled, but a new
$k_2$ is needed for every message. It was proven by \textcite{Por14}
that this scheme is composable and constructs the ideal resource
described above.

In the case of quantum messages, roughly the same holds in the case
where the message fails the authentication: the number of bits of key
leaked depend on the length of the ciphertext and the rest can be
recycled.\footnote{If the ciphertext is $n$ qubits long, about $2n$
  bits of key are lost, see \textcite{Por17} for the exact
  parameters.} But in the case where the message is accepted, the
players can recycle more key. This holds because of the no\-/cloning
principle of quantum mechanics: if the receiver holds the correct
ciphertext, then the adversary cannot have a copy of it, and thus does
not hold any information about the key either. It was first shown by
\textcite{HLM11} that nearly all of the key could be recycled in the
case where the message is accepted. Then \textcite{Por17} showed that
every bit of the key can indeed be recycled. This is not known to hold
for all schemes that construct a secure quantum channel, but so far
only for those that use strong purity testing codes \cite{Por17}.

\subsection{Delegated quantum computation}
\label{sec:dqc}

The setting in which a client, typically with bounded computational
resources, asks a server to perform some computation for her is called
\emph{delegated computation}. The client might not want the server to
learn what computation it is performing for her, and wish to run a
protocol that hides the underlying computation \--- this property is
called \emph{blindness} in the literature. Furthermore, the client
might want to verify that the server correctly performed the
computation she required \--- this is known as \emph{verifiability}.

The task of delegating a quantum computation was first studied by
\textcite{Chi05}, with the goal of achieving blindness. In follow-up
works, the requirements on the client's information-processing
abilities were reduced. \textcite{BFK09} proposed the first protocol
for blind delegated quantum computation that does not require the
client to have quantum memory, but only the ability to prepare
different pure states. This result was extended in \textcite{FK17} to
include verifiability as well.

Delegated quantum computation (DQC) was formalized as a constructive
statement by \textcite{DFPR14}. The authors modeled a DQC protocol
that achieves both blindness and verifiability as constructing a
resource $\aS^{\blind}_{\verif}$ that works as follows. It first
receives a description of the required computation as a state $\psi$
from the client. Every computation necessarily leaks some information
to the server, e.g., an upper bound on the computation size, so the
resource computes this leaked information $\ell$ and outputs it at the
server's interface. The server can then decide if it will cheat \---
in which case the client will however get an error message \--- or
output the correct result of the computation, which is evaluated by
applying an operator $\cU$ to the input. This is depicted in
\figref{fig:dqc.ideal}. A DQC protocol constructs such a resource from
nothing more than a shared communication channel between client and
server.

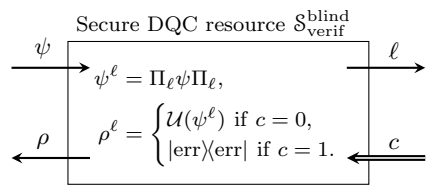
\begin{figure}[tb]

\begin{tikzpicture}[
sArrow/.style={->,>=stealth,thick},
blocResource/.style={draw,minimum width=4cm,minimum height=1.9cm},
brnode/.style={minimum width=3.4cm,minimum height=.2cm}]

\small

\def\u{2.75}
\def\v{.6}

\node[brnode] (n1) at (0,\v) {};
\node[brnode] (n3) at (0,-\v) {};
\node[blocResource] (S) at (0,0) {};
\node[yshift=-1.5,above right] at (S.north west) {{\footnotesize
  Secure DQC resource} $\aS^{\blind}_{\verif}$};
\node[text width=3.3cm] at (0,.1) {\footnotesize
  \begin{align*}
   \psi^\ell & = \Pi_\ell \psi \Pi_\ell, \\
   \rho^\ell & = \begin{cases}
    \cU(\psi^\ell) \text{ if $c=0$,} \\ \proj{\err}
    \text{ if $c=1$.} \end{cases} \end{align*}};

\node (a1) at (-\u,\v) {};
\node (a3) at (-\u,-\v) {};
\node (b1) at (\u,\v) {};
\node (b3) at (\u,-\v) {};

\draw[sArrow] (a1.center) to node[auto,pos=.4] {$\psi$} (n1);
\draw[sArrow] (n3) to node[auto,swap,pos=.6] {$\rho$} (a3.center);

\draw[sArrow] (n1) to node[auto,pos=.6] {$\ell$} (b1.center);
\draw[sArrow,double] (b3.center) to node[auto,swap,pos=.4] {$c$} (n3);

\end{tikzpicture}

\caption[Ideal DQC resources]{\label{fig:dqc.ideal}Ideal DQC
  resource. The client has access to the left interface, and the
  server to the right interface. The server obtains some information
  $\ell$ about the input, and can decide if the client gets the
  correct outcome or an error by inputting a bit $c$.}
\end{figure}

A weaker resource that provides only blindness but not verifiability
can be obtained by increasing the power of the server at its interface
of the resource: instead of inputing a bit that decides if the client
gets the correct outcome, the server can decide what output the client
gets, but still only receives the leaked information $\ell$
\cite{DFPR14}.

In \textcite{DFPR14} the protocols from \textcite{BFK09,FK17} were
shown to satisfy the corresponding constructive definitions. These
protocols still require the client to prepare a few different single
qubit quantum states and send them to the server. In order to better
analyze this requirement, \textcite{DK16} decomposed the construction
of a DQC resource in two steps. First, they consider a resource which
provides the server with the random states it needs (and the client
with a description of which state was given to the server), then the
DQC protocol constructs the DQC resource given this state preparation
resource. This decomposition then allowed \textcite{GV19} to design a
protocol that constructs the required state preparation resource for
an entirely classical client \--- this was achieved by sacrificing
information\-/theoretic security for computational security. Composing
this with a DQC protocol, one gets DQC for an entirely classical
client, albeit with computational security. This is believed not to be
possible with information\-/theoretic security~\cite{ACGK19}. We note
however that \textcite{GV19} make a non\-/standard assumption about
available resources, without which such a result does not seem
possible \cite{BCCKLMW20}.

It is instructive to compare this to early definitions of blindness,
e.g., those from \textcite{BFK09,FK17}. There, the requirement is that
the server learns nothing about the computation except for the allowed
leaked information $\ell$. Roughly speaking, this means that the state
$\rho^{\psi^\ell}$ held by the server at the end of the protocol \---
where $\psi^\ell$ is an input that leaks information $\ell$ \--- must
be such that \begin{equation}\label{eq:dqc.standalone}
  \rho^{\psi^\ell} \approx \rho^\ell,\end{equation} i.e., it can only
depend on $\ell$, but not on any other part of the input. If we
compare this to the constructive definition from \textcite{DFPR14}, in
which the distinguisher has access to both the server's interface and
the client's interface of the resources, \eqnref{eq:dqc.standalone}
corresponds to the special case where we do not maximize over all
distinguishers, but only those that ignore the output received by the
client. Following \secref{sec:qkd.other.ac} one may express this as a
restriction on the resource constructed instead of a restriction on
the distinguisher: requiring a DQC protocol to satisfy
\eqnref{eq:dqc.standalone} is (mathematically) equivalent to requiring
it to construct an ideal resource that does not provide the client
with the result of the computation.

\subsection{Multi-party computation}
\label{sec:mpc}

In this section we consider a setting where multiple mutually
distrustful parties wish to evaluate a (possibly randomized) function
to which each of them provides an input \--- or they wish to jointly
evaluate a CPTP map on shared quantum inputs. They however do not want
the other parties to learn anything about their input other than
what can be learnt from the output. Furthermore, they also want to
guarantee that if they get an output, then this is the correct
output. An example is a function that outputs which player $i$ has the
largest input $x_i$ \--- e.g., the players want to know who earns more
without revealing their salary to the others. Another example is
generating a random coin flip, in which case no input is
required. Generally speaking, multi-party computation corresponds to constructing an ideal resource
which first takes the inputs from all parties and then provides them
with the correct output.

\subsubsection{Bit commitment}
\label{sec:mpc.BC}

A bit commitment resource is a system in a two-party setting, which
forces a player (say, Alice) to commit to a value, but without
revealing this value to the other player (say, Bob). At some later
point, the commitment is ``opened'', so that Bob may know to what
value Alice committed. More precisely, Alice sends a bit $b$ to the
resource, and Bob is notified that Alice is committed to a
value. Alice may then send an open command to the resource, at which
point $b$ is delivered to Bob. In the classical setting, such a
resource cannot be constructed from communication channels
alone~\cite{CF01,MR11}, but extra resources such as a common reference
string \--- a random string shared by all parties \--- are
needed~\cite{CF01}.

The argument from \textcite{MR11} has been extended in
\textcite{VPdR19} to prove that even if the players use quantum
protocols and even if the adversary is computationally bounded, has
only bounded or noisy storage and is restricted by relativistic
constraints,\footnote{This means that the adversary cannot send
  information between two points faster than light takes to travel
  between the two points, see \secref{sec:relativistic}.} it is still
impossible to construct a bit commitment resource without further
setup assumptions than communication channels.

It has been suggested that one could construct bit commitment if one
takes relativity into account, i.e., that messages cannot be sent
faster than the speed of light \cite{Ken99,Ken12,KTHW13}. However,
these protocols do not construct a bit commitment resource:
\textcite[Appendix A]{Kan15}\footnote{The proof from
  \textcite[Appendix A]{Kan15} uses the same attack as
  \textcite{BCMS98}, where it is shown that some non\-/composable
  definitions of bit commitment appearing the classical literature
  cannot be used to force a quantum player to commit to a measurement
  outcome.} proves that if one composes these relativistic bit
commitment protocols with the protocol from \textcite{Unr10} to
construct oblivious transfer from bit
commitment,\footnote{\emph{Oblivious transfer} and the construction
  from \textcite{Unr10} are discussed in \secref{sec:mpc.OT}.} then
the result is not a secure oblivious transfer protocol. It has now
been proven that taking relativity into account is not sufficient to
achieve bit commitment \cite{VPdR19}, which we discuss in more detail
in \secref{sec:relativistic}.

\subsubsection{Coin flipping}
\label{sec:mpc.CF}

Another well studied resource is that of coin flipping, which flips a
random coin and provides both players with the result. The
impossibility proof for bit commitment from \textcite{MR11} can be
adapted to show that coin flipping and biased coin flipping (where a
player is allowed to partially bias the flip) are also impossible
without further assumptions.  Note that this proof is valid
independently of whether one considers classical or quantum
strategies. A direct proof for the impossibility of coin flip in the
quantum and relativistic setting \--- even in the case of
computational- and memory\-/bounded adversaries \--- is given in
\textcite{VPdR19}.

\emph{Coin expansion} is a weaker task, in which one constructs a
resource that produces a sequence of coin flips from a weaker resource
that produces fewer coin flips. This has been shown to be impossible
for classical players with information\-/security \cite{HMU06,SM16},
but is possible with computational security \cite{HMU06} and remains
open in the quantum case.


\subsubsection{Two-party function evaluation and oblivious transfer}
\label{sec:mpc.OT}

It has been shown by \textcite{IPS08} that a resource that evaluates
any classical probabilistic polynomial\-/time function with two inputs
can be constructed given an \emph{oblivious transfer} resource, i.e.,
a system that receives two strings $s_0,s_1$ from one player, Alice, a
bit $c$ from the second player, Bob, and sends Bob $s_c$.

In the quantum setting, it is possible to construct an oblivious
transfer resource from a \emph{bit commitment} resource. The
construction of oblivious transfer from bit commitment was first
proposed by \textcite{CK88}, adapted to noisy channels in
\textcite{BBCS92}, and proven secure by \textcite{Unr10}. Combining
this result with \textcite{IPS08}, it follows that bit commitment is
universal for classical two-party computation \cite{Unr10}.

It is however not possible to construct an oblivious transfer resource
from nothing but communication channels, even if the adversary is
computationally bounded, has only bounded or noisy storage and is
restricted by relativistic constraints~\cite{LdR21}.

\subsubsection{Everlasting security}
\label{sec:mpc.ever}

\textcite{Unr13} studied multi-party computation in the setting of
\emph{everlasting security}. This means that one relies upon a
computational assumption, but this assumption has to be broken
\emph{during} the execution of the protocol for an adversary to break
the scheme. If this is not the case, then even a computationally
unbounded adversary cannot get any significant advantage after the
protocol has terminated. This is generally not satisfied by
computational encryption schemes, because an adversary could obtain a
ciphertext and wait for an advancement in algorithms to break the
scheme and obtain the message. But if a computational authentication
scheme is executed, then the adversary must be able to perform the
hard computation before the message is received and authenticated.

Composition in such a setting is not straightforward, and
\textcite{Unr13} provides the necessary definition a scheme must
satisfy to be composable. They show how to perform
authentication given a signature card, which, when composed with QKD
and secure encryption as in \secref{sec:smt}, results in a secure
channel. They also provide a way to perform bit commitment based on signature
cards. Composing this with the protocols from \secref{sec:mpc.OT}
allows one to perform any multi-party computation with everlasting
security.

\subsubsection{Multi-party quantum computation}
\label{sec:mpc.MPQC}

The tasks studied so far in this section are concerned with
multi-party evaluation of a classical function \--- so\-/called
multi-party computation (MPC) \--- but using quantum communication and
computation to possibly achieve what cannot be done classically. The
problem of multi-party \emph{quantum} computation (MPQC) generalises this to the case where the
inputs and outputs are quantum \cite{CGS02,BCGHS06,DNS12,DGJMS20,LRW20,ACCHLS21}. The relation between inputs and outputs is then most generally described by a CPTP. It is standard to
use a composable framework for analyzing classical MPC
\cite{CDN15}. But to the best of our knowlege, the only work on MPQC
that mentions that the results hold in a composable framework is
\textcite{BCGHS06} \--- and they only provide a proof sketch. All
other works assume that the dishonest party performs their attack in
an isolated way, only interacting with the environment (the
distinguisher) before the protocol starts and after the protocol
ends. This is the so\-/called \emph{stand\-/alone} security model, and
protocols proven secure in such a model do not necessarily compose
concurrently with other protocols, in particular, they might not be
secure if two instances of the same protocol are run in
parallel. Nonetheless, for MPQC we do not know of any
attacks on protocols run concurrently, and it is plausible that
exactly the same result go through in a composable security framework.

The ideal resource one wishes to construct in MPQC receives the inputs
from all parties, performs the quantum computation, and then provides
each player with their part of the output. The works of
\textcite{CGS02,BCGHS06,LRW20} consider an ideal resource which is
guaranteed to provide the output to the honest
players. \textcite{CGS02} first show that this can be achieved if the
fraction of dishonest parties is $t < n/6$, where $n$ is the total
number of players. In \textcite{BCGHS06} this is improved to $t < n/2$
cheating parties. \textcite{LRW20} decrease the number of qubits and
communication complexity needed to get the same.

In \textcite{DNS12,DGJMS20,ACCHLS21} the ideal resource is defined
such that it first provides the dishonest parties with their share of
the output. They then provide a bit to the ideal resource, which
indicates whether the honest parties should also receive their output
or an abort symbol instead. This is called
\emph{unfairness}. Weakening the ideal resource in this way allows the
number of dishonest parties to be any $t < n$. \textcite{DNS12} first
show how to do this in the two-party case. \textcite{DGJMS20} extend
this to the multi\-/party setting.  \textcite{ACCHLS21} improve the
protocol to identify parties that abort, so that if an abort occurs,
the faulty party can be excluded and the others start again without
them.

We note that all these protocols assume that classical MPC is
available as a resource \--- usually, for the same
number of dishonest players and the same abort conditions as the
constructed MPQC. So all these results require the same setup
assumptions as the corresponding classical MPC. For example, for
$t < n/3$ and guaranteed output one can do classical MPC assuming only
pairwise secure channels between the players~\cite{CDN15}. For
$t < n/2$ and guaranteed output one additionally needs broadcast for
information\-/theoretic security, but pairwise authentic channels are
sufficient for computational security~\cite{CDN15}. If we drop
fairness, then in the case of computational security one gets unfair
security for any $t < n$ if one assumes oblivious transfer \cite{GMW87} or
a common reference string \cite{CLOS02}, and information\-/theoretic
security if one assumes common shared randomness \cite{IOZ14}.

\subsubsection{One-time programs}
\label{sec:mpc.OTP}

A special class of multi-party functionalities that have been studied
in more detail are non\-/reactive, sender\-/oblivious functions, i.e.,
one player is labeled ``sender'', another ``receiver'', and only the
receiver obtains the output of the function. This special structure
allows for non\-/interactive protocols to construct a resource that
computes such a function: communication goes only from the sender to
the receiver. The receiver can use the information obtained to
evaluate the function on one input. But by definition of the ideal
resource, he may not repeat this on a second input. The corresponding
resources are sometimes called \emph{one-time programs}.
\textcite{GISVW10} gave a construction for one-time programs, which
starts however from a resource that is similar to oblivious
transfer,\footnote{Note that oblivious transfer allows the player
  preparing the two strings $s_0,s_1$ to learn whether the other
  player has queried $s_c$. With only one-way communication of
  one-time programs, the resource used cannot allow this, but
  otherwise, it is identical to an oblivious transfer resource.} which
has been called \emph{one-time memory} or \emph{hardware token}
\cite{GISVW10} \--- since it could be implemented given hardware
assumptions, e.g., a one-time memory that contains the two strings
$s_0,s_1$, but self-destructs after producing an output.

These results have been generalized to the quantum setting by
\textcite{BGS13}, who show that one can construct quantum one-time
programs given access to the same one-time memory resources as for
classical one-time programs. More precisely, \textcite{BGS13} show
that any completely positive, trace\-/preserving map
$\Phi : \hilbert_A \otimes \hilbert_B \to \hilbert_C$ can be evaluated
with a non\-/interactive protocol by two distrustful parties holding
the inputs of registers $A$ and $B$, respectively, provided that only
one player is expected to receive the output in register~$C$.

\subsection{Relativistic cryptography}
\label{sec:relativistic}

So far we have predominantly discussed protocols whose security is based on the laws of quantum theory. One may however exploit further physical laws, such as those of special relativity. The latter imply an upper bound on the velocity by which information can spread --- the velocity of light. The combination of quantum information theory and relativity, apart from its relevance for fundamental questions \cite{Peres_Terno_RMP}, opens the possibility to achieve certain cryptographic tasks that are provably impossible based on quantum theory alone.

An example for this is relativistic bit commitment \cite{Ken99}, which
we mentioned already briefly in \secref{sec:mpc.BC}. Another one is
coin flipping. Here the two players Alice and Bob each have a trusted
agent at two locations $L_1$ and $L_2$. At $L_1$, agent $A_1$ is
instructed to provide agent $B_1$ with a random bit $a$, and at $L_2$,
agent $B_2$ provides $A_2$ with a random bit $b$. The agents then
inform Alice and Bob about these values, who can then output
$a \xor b$ as the result of the coin flip. The distance between the
locations $L_1$ and $L_2$ must be chosen large enough to ensure that,
if Bob is cheating, he cannot wait until he learns $a$ and then choose
$b$ depending on that value. Likewise, Alice cannot cheat for the same
reason. The output $a \xor b$ is hence uniformly random, provided that
at least one of the players chooses their bit uniformly at random.

This protocol does however not construct a coin flip resource, because
the parallel execution of two instances of the protocol does not
behave identically to two coin flip resources in parallel. Suppose
that Alice and Bob are running the protocol as well as Bob and
Charlie, who send their agents to the same locations $L_1$ and
$L_2$. Then at $L_1$, $A_1$ gives her random bit $a$ to $B_1$, who
gives a copy to $C_1$. And at $L_2$, $C_2$ gives his bit $c$ to $B_2$,
who gives a copy to $A_2$. Alice and Charlie then end up with exactly
the same coin flip $a \xor c$. But if we were to run two coin flip
resources in parallel, we would obtain two independent bit flips.

Running the same kind of attack on the relativistic bit commitment
protocols \cite{Ken99,Ken12,KTHW13}, Bob may forward Alice's
commitment to Charlie, and convince Charlie that he is committed to a
known bit, whereas in reality he does not know the commitment, and
thus does not satisfy the requirement of the bit commitment
resource. The same principle was used by \textcite{BCMS98} to prove
that some non\-/composable definitions of bit commitment appearing the
classical literature cannot be used to force a quantum player to
commit to a measurement outcome. This technique was then used by
\textcite[Appendix A]{Kan15} to prove that composing the relativistic
bit commitment protocols mentioned perivously with the oblivious
transfer protocol from \textcite{Unr10} is insecure. More precisely,
in the attack from \textcite{BCMS98,Kan15}, the committer does not
measure her state as required by the protocol, but runs the protocol
in superposition and only measures the strings that she needs to send
to the receiver as part of the commitment protocol. If she is asked to
open the commitment, she can still perform the required measurement
and open correctly. But if she is not asked to open, she can ``undo''
this measurement and recover the original state.

Relativistic bit commitment (and coin flipping) was analyzed more
systematically in \textcite{VPdR19,Pro20} using the Abstract
Cryptography framework \cite{MR11}. More precisely, the authors
instantiated the systems model from \textcite{MR11} with the Causal
Boxes framework \cite{PMMRT17}, which can model information with
positions in space\-/time. The resulting framework was used to prove
both impossibility and possibility results for relativistic
cryptography.

\textcite{VPdR19} show that it is impossible to construct a biased
coin flip resource between two distrustful players without assuming
any resources to help them, even in a relativistic setting.  Since
such a biased coin flip can be constructed from bit commitment
\cite{Blu83,DM13}, this immediately implies that it is also impossible
to construct a bit commitment resource in a relativistic setting. The
impossibility results also hold against adversaries that are
computationally bounded or have bounded storage
\cite{VPdR19}. \textcite{Pro20} analyzed what extra resources one can
assume to have in the real world to get positive results. The
techniques from \textcite{VPdR19} were extended in \textcite{LdR21} to
prove that oblivious transfer is also impossible without other setup
assumptions than communication channels, even if the adversary is
computationally bounded and has bounded or noisy quantum storage.

Another task for which special relativity is taken into account is
\emph{position verification}: a prover wishes to convince some
verifier that she is in a specific location \cite{CGMO09}. Protocols
based on relativity have been designed for other tasks as well, e.g.,
position verification and authentication \cite{BCFGGOS14,Unr14}. Here,
\textcite{BCFGGOS14} showed that such a task is impossible in the
presence of multiple colluding quantum adversaries that share
entanglement. They also consider a model in which holding shared
entanglement is not allowed, and propose a protocol for position
verification in this model. Similarly, \cite{Unr14} proposes a
protocol for position verification in the random oracle model, but
with no restriction on entanglement or memory. But neither of these
results provides a composable security proof, so it remains open to
prove exactly what these protocols achieve (see also
\secref{sec:open.other}).

\subsection{Secure quantum message transmission with computational security}
\label{sec:computational}

Most of the quantum cryptography literature is dedicated to
\emph{information\-/theoretic} security because the main motivation
of this field of research is to reduce cryptographic security to
physical principles. This means that regardless of the computational
ability of the adversary, the scheme may not be broken, as it does not
leak any information about the key or message. It is nevertheless
sensible to consider \emph{computational} security, as certain
cryptographic tasks may only be possible under such restricted
security guarantees (see \textcite{ABFGSSJ16} and the references
therein). In such a paradigm, a scheme may not be broken by an
adversary that is computationally bounded, but with unlimited
computational power it may be possible to obtain secret keys or read
private messages.

Composable frameworks such as \textcite{PW00,PW01,Can01} and the
quantum version by \textcite{Unr10} all define both computational and
information\-/theoretic security. However, they only define security
\emph{asymptotically}, i.e., a protocol is parametrized by some
security parameter $n$ \--- typically, this might correspond to the
number of signals exchanged between the parties or length of a tag
appended to a message \--- and security is proven in the limit as
$n \to \infty$. Abstract Cryptography \cite{MR11} on the other hand
considers \emph{finite security}, i.e., a security statement is made
for every $n$ (the limit is ignored and may not even be
well\-/defined).

Since any implementation is necessarily finite \--- the players fix a
value for the security parameter $n$ which they consider to be
sufficient and implement the corresponding protocol \--- an asymptotic
statement is arguably of limited interest in practice. For this reason
a paradigm known as \emph{concrete security} has been proposed
\cite{BDJR97}, in which parameters and reductions are given
explicitly instead of being hidden in $O$-notation and poly-time
statements. This allows a user to recover exact bounds for every $n$,
instead of only being provided with the limit values.

Concrete security is however still intrinsically asymptotic, since
adversaries are required to be \emph{poly-time} in $n$, protocols,
reductions and simulators need to be \emph{efficient} in $n$, errors
have to be \emph{negligible} in $n$, and such concepts are all defined
asymptotically. In finite security, one analyzes the security of a
protocol for individual values $n = n_0$. Hence concepts such as
poly-time, efficiency or negligibility are not necessarily
well\-/defined in a finite analysis, and cannot be part of a security
definition.

In \secref{sec:computational.def} we explain how to define finite
computational security. This follows the paradigm of
AC~\cite{MR11,Mau12,MR16}, and has been used in e.g.,
\textcite{MRT12,CMT13,BMPZ19}. In \secref{sec:computational.qmt} we
review the results from \textcite{BMPZ19} on computational security of
quantum message transmission (QMT). And in
\secref{sec:computational.literature} we discuss some asymptotic
game\-/based security definitions for QMT that have been proposed in the
literature~\cite{AGM18}.

\subsubsection{Defining composable and finite computational security}
\label{sec:computational.def}

To adapt the framework described in \secref{sec:ac} to capture
computational security, one needs to change the pseudo\-/metric used
to distinguish systems. We first recall \eqnsref{eq:adv2} and
\eqref{eq:adv4} from \secref{sec:ac.interpretation}, namely that the
distinguishing advantage for a distinguisher $\fD$ is defined as
  \begin{equation}
  \label{eq:adv2.bis} 
    d^{\fD}\left(\aR,\aS\right) \coloneqq \left| \Pr[\fD(\aR) = 0] - \Pr[\fD(\aS) = 0] \right|,
  \end{equation}
and the distinguishing advantage for a class of distinguishers $\bD$ is
given by
  \begin{equation}
  \label{eq:adv4.bis} 
  d^{\bD}\left(\aR,\aS\right)  \coloneqq \sup_{\fD \in \bD} d^{\fD}\left(\aR,\aS\right).
\end{equation}

So far in this work we have taken $\bD$ to be the set of all
distinguishers. If a protocol is only computationally secure,
\eqnref{eq:adv4.bis} could be large, since some (``inefficient'')
distinguisher might be able to distinguish the real and ideal
system. So instead of bounding the supremum over all distinguishers as
in \eqnref{eq:adv4.bis}, we bound \eqnref{eq:adv2.bis} for all $\fD$,
i.e., one needs to find some function $f : \bD \to \reals$ such that
for all $\fD$,\footnote{In asymptotic security one may still use
  \eqnref{eq:adv4.bis} instead of \eqnref{eq:adv5}, but replace $\bD$
  by the set of all efficient distinguishers. This is not
  well\-/defined in the finite setting, since \emph{efficiency} is
  only defined asymptotically.}
\begin{equation}
  \label{eq:adv5} 
  d^{\fD}\left(\aR,\aS\right) \leq f(\fD).
\end{equation}

Typically, such a bound is given by a \emph{reduction}, i.e., one
proves that if a distinguisher $\fD$ can distinguish the real from the
ideal system, then $\fD$ may be used to solve some problem believed to
be hard. In \eqnref{eq:adv5}, $f(\fD)$ may then correspond to the
probability that this problem may be solved using $\fD$.\footnote{A
  detailed explanation of this paradigm for modeling computational
  security by a reduction to computationally hard problems is provided
  in \textcite{Rog06} within a classical asymptotic model.}

Note that information\-/theoretic security corresponds to the special
case where one can show that $f(\fD)$ is small for all $\fD$, i.e., a
security proof with error $\eps$ means that \eqnref{eq:adv5} is
bounded by $f(\fD) = \eps$ for all $\fD$.



For a longer exposition on finite computational security we refer the
reader to \textcite{BMPZ19}.

\subsubsection{Secure quantum message transmission}
\label{sec:computational.qmt}

To construct a secure quantum channel with information\-/theoretic
security, the secret key shared by the honest players needs to be
longer than twice the length of quantum message sent (see
\secref{sec:qmt}). Although QKD or key recycling (see
\secref{sec:recycle}) may be used to obtain more key, this only works
when the noise on the channel is sufficiently low or the adversary
decides not to tamper with the messages. Should the noise be too high,
the used key is irremediably lost and the honest players may run out
of key and not be able to communicate securely anymore.

With computational security, it is believed\footnote{Computational
  security always relies on the belief that some problem is hard to
  solve. A cryptographic security proof then consists in showing that
  if an adversary can break the scheme, then this adversary can also
  solve the hard problem.} that the key can be much shorter than the
message, essentially allowing the same key to be used over and over,
even when conditions do not allow for recycling. For example, it is
believed that one can construct (quantum resistant) pseudo\-/random
function (PRF) families \cite{Zha12}, i.e., there exist families of
functions $\{f_k : \{0,1\}^m \to \{0,1\}^n\}_{k \in \cK}$ such that if
$k$ is chosen uniformly at random then the output of $f_k$ can not be
distinguished by a computationally bounded player from a random oracle
(RO) that outputs a uniform random string for every new input. If
Alice and Bob share a secret key $k$ which they use to pick a function
$f_k$, then they may encrypt their first message using $f_k(1)$ as
key, encrypt their second message using $f_k(2)$ as key, etc. To a
computationally bounded adversary not knowing $k$, the encryption keys
$f_k(i)$ would look random, and hence, by composition, a scheme
requiring a uniformly random secret key would be secure when used with
these pseudo\-/random keys.

Exactly this was done in \textcite{BMPZ19}, where the authors compose
a PRF family with the information\-/theoretic quantum message
transmission (QMT) protocol from \secref{sec:qmt}.\footnote{A variant
  of this protocol that allows the adversary to jumble the order of
  the messages was first proposed in \textcite{AGM18}, but security
  was only proven using asymptotic game\-/based definitions \--- see
  \secref{sec:computational.literature}.} They prove that if the error
of the PRF is bounded by
\[
 d^{\fD}(\prf,\ro) \leq \eps^{\prf}(\fD),
\]
and if the QMT protocol has error $\eps^{\qmt}$ and is used to send at
most $\ell$ messages, then the composed protocol essentially
constructs $\ell$ copies of the secure channel from
\figref{fig:secure.quantum.resource}, and the error of this
construction is bounded by
\[\eps(\fD) = \ell \eps^{\qmt} + \eps^\prf(\fD'),\]
where $\fD'$ is the same distinguisher as $\fD$ with the addition that
it can perform $\ell$ extra encryption and decryption operations.

\subsubsection{Relation to other security definitions}
\label{sec:computational.literature}

Computational security for secure message transmission is often
defined with asymptotic game\-/based definitions, e.g., an adversary
chooses two plaintexts, receives a ciphertext for one of the two and
has to guess to which of the two plaintexts it corresponds. In order
to model situations where the same keys can be used to encrypt and
decrypt other messages that may be accessible to the adversary, she is
also given oracle access to either encryption or decryption functions
at various points of the game \cite{BDPR98,KY06}. These definitions
have been adapted to the quantum case in
\textcite{BJ15,ABFGSSJ16,AGM18}.

Before such definitions may be safely used in practice, it is
essential to understand what security guarantees they provide, i.e, what
resources they assume and what resources they construct. The
accessible information security definition for QKD that was discussed
in \secref{sec:qkd.other.ai} (see also
\secref{sec:alternative.memoryless}) turned out to implicitly assume
that the adversary has no quantum memory. In the case of these
game\-/based definitions, a series of results show that some of them
have the opposite flaw: they construct a resource that is
unnecessarily strong and exclude certain protocols that should be
considered secure \cite{CKN03,CMT13,BMPZ19}.

The strongest of these definitions, called
\emph{quantum authenticated encryption} (QAE) in \cite{AGM18}, is the
most similar to the construction of secure channels used in
\secref{sec:computational.qmt}. \textcite{BMPZ19} show that QAE
essentially corresponds to constructing a secure channel, but with a
fixed simulator, whereas a security definition within the Abstract Cryptography
framework only requires the
existence of a simulator. A protocol for which the simulator
hard\-/coded in QAE is a good simulator will be deemed
secure. However, for a protocol that requires a different simulator to
prove its security, the QAE definition will just declare it insecure,
even though it constructs a secure channel.


\section{Open problems}
\label{sec:open}

Only a relatively small part of the protocols appearing in the quantum
cryptography literature have been analysed and proved secure within a
composable framework. To understand the security guarantees they
actually provide, and in what contexts they can be used safely, such
an analysis would however be crucial, thus representing a major task
for quantum cryptographers to be completed in the future. Here we
illustrate this task, focusing on a few areas that we consider
interesting. The first is the problem of reusing devices in device
independent cryptography (\secref{sec:open.di}). The second is
modeling quantum cryptography with non\-/asymptotic computational
assumptions (\secref{sec:open.computational}). And the third consists
in studying setup assumptions that are needed to achieve a broader
range of constructions (\secref{sec:open.other}).

\subsection{Reusing devices in device-independent cryptography}
\label{sec:open.di}

In \secref{sec:alternative.di} we modeled device\-/independent
(DI) QKD. There, the (untrusted) devices correspond to resources that
are available to the honest players. If Alice and Bob want to run
another DI-QKD protocol to generate more key once the first run is
over, they will again need all the same resources, i.e., they will
need such devices once more. Obviously, if Alice and Bob have acces to
new, fresh devices, they can run the protocol a second time with
these. However, it does not follow from that analysis that the
\emph{same} devices can be used again. In fact it has been shown by
\textcite{BCK13} that in general these devices cannot be reused a
second time: the internal memory of a device used for key (or randomness) generation may contain
information about the secret key (or random number) generated in the
first round, and the device may thus leak this information when being
reused in a second round. A secret bit may be leaked in a subtle manner.  For example, if the bit equals $0$ the device may perform the expected operations during the second round, and if the bit equals $1$ force an abort. 

Reusing devices in DI cryptography is very similar to reusing keys. In
general it cannot be done. However, in the case of keys, if one can
prove that the key is (close to) uniform and independent from the
adversary's information, then it can be recycled \--- this was covered
in \secref{sec:recycle}. The same approach could be used to recycle
devices: instead of the ideal world just consisting of a key resource,
it should as well provide access to devices that are independent of
this key as depicted in \figref{fig:open.di}.

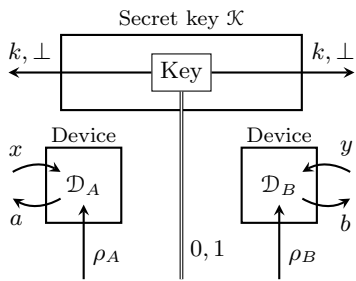
\begin{figure}[tb]

\begin{tikzpicture}[
sArrow/.style={->,>=stealth,thick},
thinResource/.style={draw,thick,minimum width=3.2cm,minimum height=1cm},
sqResource/.style={draw,thick,minimum width=1cm,minimum height=1cm},
pnode/.style={minimum width=.6cm,minimum height=.5cm}]

\small

\def\t{2.3} 
\def\u{.25}
\def\v{.75}
\def\w{1.3} 
\def\z{2}

\node[thinResource] (keyBox) at (0,\v) {};
\node[yshift=-1.5,above] at (keyBox.north) {\footnotesize
  Secret key $\aK$};

\node[draw] (key) at (0,\v) {Key};
\node (a1) at (-\t,\v) {};
\node (b1) at (\t,\v) {};
\node (eve) at (0,-\z) {};

\draw[sArrow] (key) to node[pos=.85,auto] {$k,\bot$} (b1.center);
\draw[sArrow] (key) to node[pos=.85,auto,swap] {$k,\bot$} (a1.center);
\draw[double] (eve.center) to node[pos=.15,auto,swap] {$0,1$} (key);

\node[pnode] (a2) at (-\t-\u,-\v) {};
\node[pnode] (b2) at (\t+\u,-\v) {};

\node[sqResource] (da) at (-\w,-\v) {$\aD_A$};
\node[yshift=-1.5,above] at (da.north) {\footnotesize
  Device};
\node[pnode] (dan) at (-\w,-\v) {};
\node[sqResource] (db) at (\w,-\v) {$\aD_B$};
\node[yshift=-1.5,above] at (db.north) {\footnotesize
  Device};
\node[pnode] (dbn) at (\w,-\v) {};

\node (eveq1) at (-\w,-\z) {};
\node (eveq2) at (\w,-\z) {};

\draw[sArrow] (eveq1.center) to node[pos=.3,auto,swap] {$\rho_A$} (dan);
\draw[sArrow] (eveq2.center) to node[pos=.3,auto,swap] {$\rho_B$} (dbn);

\draw[sArrow,bend left] (a2) to node[pos=.4,auto] {$x$} (dan);
\draw[sArrow,bend left] (dan) to node[pos=.6,auto] {$a$} (a2);
\draw[sArrow,bend right] (b2) to node[pos=.4,auto,swap] {$y$} (dbn);
\draw[sArrow,bend right] (dbn) to node[pos=.6,auto,swap] {$b$} (b2);

\end{tikzpicture}

\caption[Reusing devices]{\label{fig:open.di}An ideal world in which a
  secret key is produced by $\aK$ and new devices $\aD_A$ and $\aD_B$
  independent from $\aK$ are accessible to the players..}
\end{figure}

Unfortunately, no DI-QKD protocol has ever been shown to construct the
ideal system from \figref{fig:open.di} and it might well be impossible
to do so. But even if this is the case, it does not exclude that one
can construct an ideal system that is stronger than just the shared
secret key considered in \secref{sec:alternative.di}, e.g., one in
which the devices have some partial independence from the key or are
fully independent in certain contexts.\footnote{Context restricted
  composability is a promising research path for protocols that do not
  construct the desired ideal resource. Its investigation has been
  initiated in \textcite{JM18}, and is beyond the scope of this
  review.}

We note that weaker models such as measurement\-/device\-/independent
(MDI) QKD \--- see \secref{sec:alternative.semi} \--- do not
suffer from the same problem of device reuse as DI-QKD. The reason is
that in MDI-QKD one does not need to make any assumption about the
measurement devices at all (the adversary does the measurements for
the honest players), whereas in DI-QKD one has to assume that no
unauthorized information leaves the devices.



\subsection{Computational security}
\label{sec:open.computational}

Computational security is a fairly unexplored area of quantum
cryptography. The main motivation for studying this is to achieve
results that are not possible with information\-/theoretic
security. For example, in \secref{sec:dqc} we mentioned a
computationally secure protocol for delegated quantum computation with
a classical client \cite{GV19}, which is believed not to be possible
with information\-/theoretic security~\cite{ACGK19}. The
computationally secure message transmission from
\secref{sec:computational} allows keys to be reused without the extra
communication required by QKD (\secref{sec:qkd}) or key recycling
(\secref{sec:recycle}) \--- and thus, without the possibility of an
adversary interrupting this communication and preventing the key from
being reused. And the work from \textcite{Unr13} discussed in
\secref{sec:mpc.ever} removes the need for a shared secret key in QKD
by using signature cards instead.

One may essentially analyze any area of cryptography with
computational security to study how assumptions needed for
information\-/theoretic security may be weakened in the computational
setting. There is however no single way to model computational
assumptions, and important open questions in the field are to identify
the (best) ways of doing this. Most frequently, one proves a
reduction, i.e., if some distinguisher can guess whether it is
interacting with the real or ideal system, then this distinguisher can
be used to solve some problem which is believed to be hard. In
\secref{sec:computational} we reviewed the finite reductions from
\textcite{BMPZ19}, in which the probability of a distinguisher $\fD$
distinguishing the real and ideal worlds is bounded by the probability
of this distinguisher being successfully used \--- as part of a new
distinguisher $\fD'$ \--- to distinguish a pseudo\-/random function
from a uniform random function; see also \textcite{Rog06} for a
discussion of reductions.

Another way to define computational security would be to define an
ideal resource that falls under the control of the adversary if she
can solve some problem believed to be hard (e.g., find a collision for
a hash function). This is essentially the ``identical-until-bad''
concept of \textcite{BR06}, but adapted to composable security instead of
game-based security. To the best of our knowledge, this paradigm remains
completely unexplored in quantum cryptography.

Other works such as \textcite{CCLVW17} bound adversaries by
circuit sizes. It is not clear how to model that in a finite,
composable framework, and is important open work.

\subsection{Other setup assumptions}
\label{sec:open.other}

When a security definition is considered ``not composable'', it often
has some (setup) assumption hard-coded in it, which is not present in
the obvious composable definition, and is therefore strictly
weaker. By modeling this assumption in a composable framework, one can
get another, equivalent composable definition. We illustrated this
in \secref{sec:alternative.memoryless} by explaining how a definition for QKD based on the accessible information, which is normally not composable, can be turned into a composable
one within a model where an adversary has no (long term) quantum
memory.

Similar techniques have been used by \textcite{Unr11} to obtain
commitments in the bounded storage model. While it follows from
\textcite{VPdR19} that coin flipping and bit commitment are impossible
in a bounded storage model without further assumptions,
\textcite{Unr11} avoids these by putting a bound on the number of
times a protocol can be run in parallel, and designing protocols that
are secure for this limited number of compositions.\footnote{This
  effectively restricts what the distinguisher/environment may do to
  distinguish the real and ideal systems, since the bound on the
  number of executions of a protocol applies to the distinguisher as
  well.} Likewise, \textcite{Pro20} has made extra setup assumptions
in the relativistic model to avoid the impossibility results of
\textcite{VPdR19}.
  
  There are numerous works where security is proved based on the assumption that adversaries are restricted. For example, adversaries cannot share entanglement in
\textcite{BCFGGOS14}, the adversaries' memory size is bounded in
\textcite{DFSS07,DFSS08}, the adversaries' memory is noisy in
\textcite{WST08,STW09,KWW12}, or adversaries can only perform local
operations on single qubits and communicate classically in
\textcite{Liu14,Liu15}. It remains open how to model these assumptions
to get composable security statements and prove in what setting such
protocols are secure. Similarly, to capture position\-/based
cryptography \textcite{Unr14} uses a model of circuits with positions in
space\-/time. Here too, it is not clear how to fit these results in a
composable framework and identify the resource that is constructed by these
protocols.


\begin{acknowledgments}
  We would like to thank Claude Cr\'epeau, Marco Lucamarini, Mark
  Wilde, Ramona Wolf and the anonymous referees for their helpful
  comments and suggestions.  This work has been funded by the European
  Research Council (ERC) via grant No.~258932, the Swiss National
  Science Foundation via the National Centre of Competence in Research
  ``QSIT'', the QuantERA project ``eDICT'', the US Air Force Office of
  Scientific Research (AFOSR) via grants~FA9550-16-1-0245 and
  FA9550-19-1-0202, the Quantum Center of ETH Zurich, and the Zurich
  Information Security and Privacy Center.
\end{acknowledgments}


\appendix

\section{Trace distance}
\label{app:op}

Many of the statements in this work use the well\-/known fact that the
distinguishing advantage between two systems that output states $\rho$
and $\sigma$ is equivalent to the trace distance between these
states. In this appendix we gather lemmas and theorems which prove
this fact and help interpret the meaning of the trace distance.

In \appendixref{app:op.definitions} we first define the trace distance
\--- as well as its classical counterpart, the total variation
distance\--- and provide some basic lemmas that can also be found in
textbooks such as \textcite{nielsen2010quantum}. In
\appendixref{app:op.distadv} we then show the connection between trace
distance and distinguishing advantage, which was originally proven by
\textcite{Hel76}. In \appendixref{app:op.failure} we prove that
we can alternatively think of the trace distance between a real and
an ideal system as a bound on the probability that a failure occurs in
the real system as suggested in \textcite{Ren05}. Finally, in
\appendixref{app:op.local} we bound two typical information theory
notions of secrecy \--- the conditional entropy of a key given the
eavesdropper's information and her probability of correctly guessing
the key \--- in terms of the trace distance. Although such measures of
information are generally ill-suited for defining cryptographic
security, they can help interpret the notion of a key being
$\eps$\-/close to uniform.

\subsection{Metric definitions}
\label{app:op.definitions}

In the case of a classical system, statistical security is defined by
the total variation (or statistical) distance between the probability
distributions describing the real and ideal settings, which is defined
as follows.\footnote{We employ the same notation $D(\cdot,\cdot)$ for
  both the total variation and trace distance, since the former is a
  special case of the latter.}

\begin{deff}[Total variation distance]
  \label{def:vdist}
  The total variation distance between two probability distributions
  $P_Z$ and $P_{\tilde{Z}}$ over an alphabet $\cZ$ is defined as
  \[D(P_Z,P_{\tilde{Z}}) \coloneqq \frac{1}{2} \sum_{z \in \cZ} \left| P_Z(z) -
    P_{\tilde{Z}}(z) \right|.\]
\end{deff}

Using the fact that $|a-b| = a+b-2 \min(a,b)$, the total variation
distance can also be written as
\begin{equation} \label{eq:vdist.alt} D(P_Z, P_{\tilde{Z}}) = 1-
  \sum_{z \in \cZ} \min[P_Z(z), P_{\tilde{Z}}(z)].\end{equation}

In the case of quantum states instead of classical random variables,
the total variation distance generalizes to the trace distance. More
precisely, the trace distance between two density operators that are
diagonal in the same orthonormal basis is equal to the total variation
distance between the probability distributions defined by their
respective eigenvalues.

\begin{deff}[trace distance]
  \label{def:tdist}
  The trace distance between two quantum states $\rho$ and $\sigma$ is
  defined as
  \[D(\rho,\sigma) \coloneqq \frac{1}{2} \tr |\rho-\sigma|.\]
\end{deff}

We now introduce some technical lemmas involving the trace distance,
which help us derive the theorems in the next sections. Proofs of
these lemmas may be found in \textcite{nielsen2010quantum}.

\begin{lem}
  \label{lem:op.definitions.pos}
  For any two states $\rho$ and $\sigma$ and any operator $0 \leq
  M \leq I$,
  \begin{equation} \label{eq:op.definitions.pos}
    D(\rho,\sigma) \geq \trace{M(\rho-\sigma)}.
  \end{equation}
  Furthermore, this inequality is tight for some value of $M$.
\end{lem}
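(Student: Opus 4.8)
The plan is to reduce everything to the spectral structure of the Hermitian operator $\Delta \coloneqq \rho - \sigma$. First I would invoke the Jordan--Hahn (spectral) decomposition to write $\Delta = P - Q$, where $P, Q \geq 0$ are its positive and negative parts, supported on mutually orthogonal subspaces, so that $|\Delta| = P + Q$. Because $\rho$ and $\sigma$ are both normalized states, $\tr \Delta = 0$, which forces $\tr P = \tr Q$; calling this common value $t$, one obtains immediately $D(\rho,\sigma) = \frac{1}{2}\tr|\Delta| = \frac{1}{2}(\tr P + \tr Q) = t$.

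For the inequality itself, I would expand $\tr(M\Delta) = \tr(MP) - \tr(MQ)$ and bound the two terms separately using the hypothesis $0 \leq M \leq I$. Since $I - M \geq 0$ and $P \geq 0$, the trace of their product is nonnegative, giving $\tr(MP) \leq \tr P = t$; and since $M \geq 0$ and $Q \geq 0$, we have $\tr(MQ) \geq 0$. Combining these, $\tr(M\Delta) \leq \tr(MP) \leq t = D(\rho,\sigma)$, which is exactly the claimed bound \eqref{eq:op.definitions.pos}.

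Finally, to establish tightness I would take $M$ to be the orthogonal projector $\Pi_+$ onto the support of the positive part $P$. Then $\tr(\Pi_+ P) = \tr P = t$, because $\Pi_+$ acts as the identity on the range of $P$, while $\tr(\Pi_+ Q) = 0$, because the supports of $P$ and $Q$ are orthogonal. Hence $\tr(\Pi_+ \Delta) = t = D(\rho,\sigma)$, so the choice $M = \Pi_+$ saturates the inequality and proves the final assertion.

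The argument is entirely elementary once the decomposition is in place, so I do not anticipate a genuine obstacle. The one point requiring care is the normalization step, namely using $\tr(\rho-\sigma) = 0$ to identify the common trace $\tr P = \tr Q$ with the trace distance itself; this is what links the operator inequality to the factor $\frac{1}{2}$ in \defref{def:tdist}, and I expect it to be the step most prone to a bookkeeping slip (a lost factor of two) rather than to any real difficulty.
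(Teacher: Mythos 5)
Your proof is correct and follows essentially the same route as the proof the paper points to (the standard argument in Nielsen and Chuang): the Jordan--Hahn decomposition $\rho-\sigma = P-Q$ with orthogonal supports, the observation that $\trace{P}=\trace{Q}=D(\rho,\sigma)$ since $\trace{\rho-\sigma}=0$, the bound $\trace{M(\rho-\sigma)}\leq\trace{MP}\leq\trace{P}$, and tightness via the projector onto the support of $P$. No gaps; the normalization step you flag is handled correctly.
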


The trace distance can thus alternatively be written
as \begin{equation} \label{eq:op.definitions.pos.d}
D(\rho,\sigma) = \max_M \trace{M(\rho-\sigma)}.\end{equation}

Let $\{\Gamma_x\}_{x}$ be a positive operator\-/valued measure (POVM)
\--- a set of operators $0 \leq \Gamma_x \leq I$ such that
$\sum_x \Gamma_x = I$ \--- and let $P_X$ denote the outcome of
measuring a quantum state $\rho$ with $\{\Gamma_x\}_{x}$, i.e.,
$P_X(x) = \trace{\Gamma_x \rho}$. Our next lemma says that the trace
distance between two states $\rho$ and $\sigma$ is equal to the total
variation between the outcomes \--- $P_X$ and $Q_X$ \--- of an optimal
measurement on the two states.

\begin{lem}
  \label{lem:op.definitions.opt}
  For any two states $\rho$ and $\sigma$,
  \begin{equation} \label{eq:op.definitions.opt} D(\rho,\sigma) =
    \max_{\left\{\Gamma_x\right\}_x} D(P_X,Q_X),\end{equation}
  where $P_X$ and $Q_X$ are the probability distributions resulting
  from measuring $\rho$ and $\sigma$ with a POVM $\{\Gamma_x\}_x$,
  respectively, and the maximization is over all POVMs. Furthermore,
  if the two states $\rho_{ZB}$ and $\sigma_{ZB}$ have a classical
  subsystem $Z$, then the measurement satisfying
  \eqnref{eq:op.definitions.opt} leaves the classical subsystem
  unchanged, i.e., the maximum is reached for a POVM with
  elements \begin{equation} \label{eq:op.definitions.opt.classical}
    \Gamma_x = \sum_z \proj{z} \otimes M^z_x ,\end{equation} where
  $\{\ket{z}\}_z$ is the classical orthonormal basis of $Z$.
\end{lem}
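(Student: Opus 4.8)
The plan is to prove the two claims separately, in both cases reducing to Lemma~\ref{lem:op.definitions.pos}, which already identifies $D(\rho,\sigma)$ with $\max_M \trace{M(\rho-\sigma)}$ over operators $0 \le M \le I$. \textbf{Upper bound.} First I would show $D(P_X,Q_X) \le D(\rho,\sigma)$ for every POVM $\{\Gamma_x\}_x$. Writing $a_x \coloneqq \trace{\Gamma_x(\rho-\sigma)} = P_X(x)-Q_X(x)$, the normalisation $\sum_x \Gamma_x = I$ together with $\tr\rho = \tr\sigma = 1$ gives $\sum_x a_x = 0$, so $D(P_X,Q_X) = \frac12\sum_x |a_x| = \sum_{x:\,a_x \ge 0} a_x$. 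Collecting the corresponding POVM elements into $M \coloneqq \sum_{x:\,a_x\ge 0}\Gamma_x$, which satisfies $0 \le M \le I$, this last sum equals $\trace{M(\rho-\sigma)}$, hence is at most $D(\rho,\sigma)$ by Lemma~\ref{lem:op.definitions.pos}. Taking the supremum over POVMs yields $\max_{\{\Gamma_x\}_x} D(P_X,Q_X) \le D(\rho,\sigma)$.

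\textbf{Achievability.} For the reverse inequality I would exhibit a single POVM attaining the bound. By Lemma~\ref{lem:op.definitions.pos} the maximum of $\trace{M(\rho-\sigma)}$ is attained by the projector $\Pi_+$ onto the non-negative eigenspace of the Hermitian operator $\rho-\sigma$ (equivalently, the support of the positive part in the decomposition $\rho-\sigma = (\rho-\sigma)_+ - (\rho-\sigma)_-$). Consider the two-outcome POVM $\Gamma_0 = \Pi_+$, $\Gamma_1 = I - \Pi_+$. Since $P_X(1)-Q_X(1) = -(P_X(0)-Q_X(0))$, one has $D(P_X,Q_X) = |P_X(0)-Q_X(0)| = \trace{\Pi_+(\rho-\sigma)} = D(\rho,\sigma)$, the last equality holding because $\Pi_+$ is the optimiser. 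This proves the first claim.

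\textbf{Classical subsystem.} For the refinement I would exploit the block structure induced by $Z$. If $\rho_{ZB} = \sum_z \proj{z}\otimes \tilde\rho^z_B$ and $\sigma_{ZB} = \sum_z \proj{z}\otimes \tilde\sigma^z_B$ with subnormalised conditional operators, then $\rho_{ZB}-\sigma_{ZB} = \sum_z \proj{z}\otimes(\tilde\rho^z_B-\tilde\sigma^z_B)$ is block-diagonal with respect to the orthogonal decomposition $\bigoplus_z \ket{z}\otimes\hilbert_B$. A block-diagonal Hermitian operator can be diagonalised block by block, so its positive-eigenspace projector is itself block-diagonal, $\Pi_+ = \sum_z \proj{z}\otimes M^z$ with $M^z$ the projector onto the positive part of $\tilde\rho^z_B-\tilde\sigma^z_B$. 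The optimal POVM $\{\Pi_+, I-\Pi_+\}$ then has exactly the announced form $\Gamma_x = \sum_z \proj{z}\otimes M^z_x$, i.e.\ it commutes with the dephasing of $Z$ and hence leaves the classical register intact.

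The routine parts are the two inequalities; the step I expect to require the most care is the inheritance of block-diagonality by $\Pi_+$, where one must argue that eigenvectors can be chosen to respect the $Z$-grading despite possible eigenvalue degeneracies across blocks. Degeneracies at the eigenvalue $0$ are harmless, since those eigenvectors may be assigned to either outcome without changing $\trace{\Pi_+(\rho-\sigma)}$, so the optimiser can always be taken in the stated form.
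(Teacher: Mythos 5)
Your proof is correct and is essentially the standard argument that the paper itself invokes (it gives no proof of this lemma, deferring instead to \textcite{nielsen2010quantum}): the upper bound by grouping POVM elements into an operator $0 \le M \le I$ and applying Lemma~\ref{lem:op.definitions.pos}, achievability via the two-outcome measurement $\{\Pi_+, I-\Pi_+\}$ built from the positive part of $\rho-\sigma$, and block-diagonality of $\Pi_+$ for the classical-register refinement. One small remark: the degeneracy issue you flag as delicate is actually a non-issue, since the projector onto the non-negative spectral subspace is a function of the Hermitian operator (functional calculus applied to $\rho-\sigma$), hence unique and automatically block-diagonal whenever the operator is, with no choices of eigenvectors to be made.
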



\subsection{Distinguishing advantage}
\label{app:op.distadv}

\textcite{Hel76} proved that the advantage a distinguisher has in
guessing whether it was provided with one of two states with equal
priors, $\rho$ or $\sigma$, is given by the trace distance between the
two, $D(\rho,\sigma)$.\footnote{Actually, \textcite{Hel76} solved
  a more general problem, in which the states $\rho$ and $\sigma$ are
  picked with apriori probabilities $p$ and $1-p$, respectively,
  instead of $1/2$ as in the definition of the distinguishing
  advantage.} We first sketch the classical case, then prove the
quantum version.

Let a distinguisher be given a value sampled according to probability
distributions $P_Z$ or $P_{\tilde{Z}}$, where $P_Z$ and
$P_{\tilde{Z}}$ are each chosen with probability $1/2$. Suppose the
value received by the distinguisher is $z \in \cZ$. If $P_Z(z) >
P_{\tilde{Z}}(z)$, its best guess is that the value was sampled
according to $P_Z$. Otherwise, it should guess that it was
$P_{\tilde{Z}}$. Let $\cZ'\coloneqq \{z \in \cZ : P_Z(z) >
P_{\tilde{Z}}(z)\}$ and $\cZ'' \coloneqq \{z \in \cZ : P_Z(z) \leq
P_{\tilde{Z}}(z)\}$. There are a total of $2|\cZ|$ possible events:
the sample is chosen according to $P_Z$ or $P_{\tilde{Z}}$ and takes
the value $z \in \cZ$. These events have probabilities
$\frac{P_Z(z)}{2}$ and $\frac{P_{\tilde{Z}}(z)}{2}$. Conditioned on
$P_Z$ being chosen and $z$ being the sampled value, the distinguisher
has probability $1$ of guessing correctly with the strategy outlined
above if $z \in \cZ'$, and $0$ otherwise. Likewise, if $P_{\tilde{Z}}$
was selected, it has probability $1$ of guessing correctly if $z \in
\cZ''$ and $0$ otherwise. The probability of correctly guessing
whether it was given a value sampled according to $P_Z$ or
$P_{\tilde{Z}}$, which we denote $\distinguish{P_Z,P_{\tilde{Z}}}$, is
obtained by summing over all possible events weighted by their
probabilities. Hence
\begin{align*}
 &  \distinguish{P_Z,P_{\tilde{Z}}} \\
& \qquad = \sum_{z \in \cZ'} \frac{P_Z(z)}{2} + \sum_{z \in
    \cZ''} \frac{P_{\tilde{Z}}(z)}{2} \\
  & \qquad =  \frac{1}{2} \left(1- \sum_{z \in
    \cZ''} P_{Z}(z) \right) + \frac{1}{2} \left(1 - \sum_{z \in
    \cZ'} P_{\tilde{Z}}(z)\right) \\
  & \qquad = 1 - \frac{1}{2} \sum_{z \in \cZ} \min[P_Z(z), P_{\tilde{Z}}(z)] \\
  & \qquad = \frac{1}{2} + \frac{1}{2} D(P_Z,P_{\tilde{Z}}) , 
\end{align*}
where in the last equality we used the alternative formulation of the
total variation distance from \eqnref{eq:vdist.alt}.

We now generalize the argument above to quantum states with equal
priors, which is a special case of \textcite{Hel76}.

\begin{thm}
  \label{thm:op.distinguishing}
  For any states $\rho$ and $\sigma$, we have \[\distinguish{\rho,\sigma} =
  \frac{1}{2}+\frac{1}{2}D(\rho,\sigma) .\]
\end{thm}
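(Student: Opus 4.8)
The plan is to reduce the optimal guessing probability to a single optimization over two-outcome measurements and then invoke the characterization of the trace distance already established in \eqnref{eq:op.definitions.pos.d}. First I would observe that any distinguishing strategy \--- an arbitrary measurement followed by a classical decision rule that outputs a guess in $\{0,1\}$ \--- can be coarse-grained into a two-outcome POVM $\{M, I-M\}$ with $0 \leq M \leq I$, where obtaining the outcome associated with $M$ corresponds to guessing that the state was $\rho$ and the complementary outcome to guessing $\sigma$. Grouping the outcomes of any finer measurement according to the guess they induce shows that this two-outcome form is fully general, so no generality is lost.

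Next I would write out the success probability for a fixed $M$. Since $\rho$ and $\sigma$ are each supplied with prior $1/2$, the distinguisher guesses correctly with probability $\frac{1}{2}\tr(M\rho)$ when it is given $\rho$ and with probability $\frac{1}{2}\tr((I-M)\sigma)$ when it is given $\sigma$. Adding these and using $\tr\sigma = 1$ gives
\begin{equation*}
  \tfrac{1}{2}\tr(M\rho) + \tfrac{1}{2}\tr\!\big((I-M)\sigma\big) = \tfrac{1}{2} + \tfrac{1}{2}\tr\!\big(M(\rho-\sigma)\big).
\end{equation*}
Finally, I would take the supremum over all admissible $M$. The constant term is unaffected, and by \eqnref{eq:op.definitions.pos.d} (the dual formulation of Lemma~\ref{lem:op.definitions.pos}) the maximum of $\tr(M(\rho-\sigma))$ over $0 \leq M \leq I$ is exactly $D(\rho,\sigma)$, attained at the projector onto the positive part of $\rho-\sigma$. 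This yields $\distinguish{\rho,\sigma} = \frac{1}{2} + \frac{1}{2}D(\rho,\sigma)$, as claimed, and the structure mirrors the classical computation carried out just above, with $\cZ'$ playing the role of the support of the optimal $M$.

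The only point requiring care \--- the main obstacle \--- is the reduction in the first step: one must be sure that allowing the distinguisher arbitrary measurements together with arbitrary (possibly randomized) post-processing cannot beat the best two-outcome POVM. This follows because randomized decision rules are convex combinations of deterministic ones and the success probability is linear in the strategy, so its maximum is attained at an extreme point, i.e.\ a deterministic rule; any deterministic rule then partitions the measurement outcomes into a ``guess $\rho$'' set and a ``guess $\sigma$'' set, whose associated operators sum to a valid $M$ and $I-M$ respectively. Once this is settled, the remaining steps are the short algebraic manipulation displayed above and a direct appeal to \eqnref{eq:op.definitions.pos.d}.
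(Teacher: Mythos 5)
Your proof is correct and follows essentially the same route as the paper's: write the success probability for a two-outcome POVM as $\frac{1}{2} + \frac{1}{2}\tr\bigl(M(\rho-\sigma)\bigr)$ and then invoke \eqnref{eq:op.definitions.pos.d} to identify the maximum with $D(\rho,\sigma)$. The only difference is that you explicitly justify the reduction of arbitrary (possibly randomized) strategies to two-outcome POVMs, a step the paper absorbs into its definition of the distinguisher's strategy; this is a harmless and correct addition.
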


\begin{proof}
  If a distinguisher is given one of two states $\rho$ or $\sigma$,
  each with probability $1/2$, its probability of guessing which one
  it holds is given by a maximization of all possible measurements it
  may do: it chooses some POVM $\{\Gamma_0,\Gamma_1\}$, where
  $\Gamma_0$ and $\Gamma_1$ are positive operators with
  $\Gamma_0+\Gamma_1 = I$, and measures the state it holds. If it gets
  the outcome $0$, it guesses that it holds $\rho$ and if it gets the
  outcome $1$, it guesses that it holds $\sigma$. The probability of
  guessing correctly is given by
\begin{align}
  \distinguish{\rho,\sigma} & = \max_{\Gamma_0,\Gamma_1} \left[
    \frac{1}{2}\trace{\Gamma_0\rho} +
    \frac{1}{2}\trace{\Gamma_1\sigma} \right] \notag \\
  & = \frac{1}{2}\max_{\Gamma_0} \left[ \trace{\Gamma_0\rho} +
    \trace{(I-\Gamma_0)\sigma} \right] \notag \\
  & = \frac{1}{2} + \frac{1}{2} \max_{\Gamma_0}
  \trace{\Gamma_0(\rho-\sigma)} . \label{eq:op.distinguishing.thm}
\end{align}
The proof concludes by plugging \eqnref{eq:op.definitions.pos.d} in
\eqnref{eq:op.distinguishing.thm}.
\end{proof}

\subsection{Probability of a failure}
\label{app:op.failure}

The trace distance is used as the security definition of QKD, because
the relevant measure for cryptographic security is the distinguishing
advantage (as discussed in \secref{sec:ac}), and as proven in
\thmref{thm:op.distinguishing}, the distinguishing advantage between
two quantum states corresponds to their trace distance. This
operational interpretation of the trace distance involves two worlds,
an ideal one and a real one, and the distance measure is the
(renormalized) difference between the probabilities of the
distinguisher correctly guessing to which world it is connected.

In this section we describe a different interpretation of the total
variation and trace distances. Instead of having two different worlds,
we consider one world in which the outcomes of interacting with the
real and ideal systems co-exist. And instead of these distance
measures being a difference between probability distributions, they
become the probability that any (classical) value occurring in one of
the systems does not \emph{simultaneously} occur in the other. We call
such an event a \emph{failure} \--- since one system is ideal, if the
other behaves differently, it must have failed \--- and the trace
distance becomes the probability of a failure occurring.

Given two random variables $Z$ and $\tilde{Z}$ with probability
distributions $P_Z$ and $P_{\tilde{Z}}$, any distribution
$P_{Z\tilde{Z}}$ with marginals given by $P_Z$ and $P_{\tilde{Z}}$ is
called a coupling of $P_Z$ and $P_{\tilde{Z}}$. The interpretation of
the trace distance treated in this section uses one specific coupling,
known as a \emph{maximal coupling} in probability theory~\cite{Tho00}.

\begin{thm}[Maximal coupling] \label{thm:difference}
  Let $P_Z$ and $P_{\tilde{Z}}$ be two probability distributions over
  the same alphabet $\cZ$. Then there exists a probability
  distribution $P_{Z \tilde{Z}}$ on $\cZ \times \cZ$ such that
  \begin{equation} \label{eq:equalityprob}
    \Pr[Z  = \tilde{Z}] := \sum_{z} P_{Z \tilde{Z}}(z, z) \geq 1-  D(P_Z, P_{\tilde{Z}}) 
  \end{equation}
  and such that $P_Z$ and $P_{\tilde{Z}}$ are the marginals of $P_{Z
    \tilde{Z}}$, i.e., 
  \begin{align} \label{eq:marginal1}
    P_Z(z) & = \sum_{\tilde{z}} P_{Z \tilde{Z}}(z,\tilde{z})  \quad
    (\forall z \in \cZ)
    \\ \label{eq:marginal2}
    P_{\tilde{Z}}(\tilde{z}) & = \sum_z P_{Z \tilde{Z}}(z,\tilde{z})
    \quad (\forall \tilde{z} \in \cZ) .
  \end{align}
\end{thm}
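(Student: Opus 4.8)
The plan is to construct the maximal coupling explicitly. First I would separate the diagonal mass that the two distributions share from the residual mass where they differ. Define $m(z) \coloneqq \min[P_Z(z), P_{\tilde Z}(z)]$ and place all of this mass on the diagonal, i.e., set $P_{Z\tilde Z}(z,z) \geq m(z)$. Summing over $z$ and invoking the alternative form of the total variation distance from \eqnref{eq:vdist.alt}, namely $\sum_z m(z) = 1 - D(P_Z, P_{\tilde Z})$, would already deliver \eqnref{eq:equalityprob}, provided the construction is a valid coupling. So the crux reduces to distributing the remaining mass off the diagonal in a way that (a) is nonnegative, and (b) restores the correct marginals \eqref{eq:marginal1} and \eqref{eq:marginal2}.

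For the off-diagonal part, I would introduce the excess distributions
\[
  a(z) \coloneqq P_Z(z) - m(z), \qquad b(z) \coloneqq P_{\tilde Z}(z) - m(z),
\]
both of which are nonnegative by definition of $m$. The key observation is that $a$ is supported on the set where $P_Z > P_{\tilde Z}$ and $b$ on the set where $P_{\tilde Z} > P_Z$, and that both have the same total mass $D \coloneqq D(P_Z,P_{\tilde Z})$, since $\sum_z a(z) = 1 - \sum_z m(z) = D$ and likewise for $b$. When $D > 0$, I would then set the off-diagonal entries to be the product form
\[
  P_{Z\tilde Z}(z,\tilde z) \coloneqq \frac{a(z)\, b(\tilde z)}{D}
  \qquad (z \neq \tilde z),
\]
with the convention that this vanishes when $D=0$ (in which case $P_Z = P_{\tilde Z}$ and the diagonal coupling already works). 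Because $a$ and $b$ have disjoint supports, the product $a(z)b(\tilde z)$ is automatically zero whenever $z = \tilde z$, so there is no conflict with the diagonal assignment.

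It then remains to verify the two marginal conditions, which is a routine calculation: for the first marginal, $\sum_{\tilde z} P_{Z\tilde Z}(z,\tilde z) = m(z) + \sum_{\tilde z} a(z)b(\tilde z)/D = m(z) + a(z)\cdot(\sum_{\tilde z} b(\tilde z))/D = m(z) + a(z) = P_Z(z)$, using $\sum_{\tilde z} b(\tilde z) = D$; the second marginal follows symmetrically. I do not foresee a genuine obstacle here, as the construction is elementary once the right decomposition is in place; the only point requiring mild care is the degenerate case $D = 0$ and the handling of the disjoint-support bookkeeping that guarantees nonnegativity and that no probability mass is double-counted on the diagonal. An alternative, slightly slicker route that avoids the product form entirely is to define $P_{Z\tilde Z}$ on the diagonal by $m(z)$ and to note that any coupling of the two residual measures $a$ and $b$ (which live on disjoint supports, hence trivially contribute nothing to the diagonal) completes the construction; the product coupling above is simply one convenient such choice.
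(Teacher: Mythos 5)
Your proposal is correct and takes essentially the same route as the paper's own proof: you place the overlap $\min[P_Z(z),P_{\tilde Z}(z)]$ on the diagonal and couple the residual masses (your $a,b$, the paper's $R_Z,R_{\tilde Z}$) in product form normalized by $D(P_Z,P_{\tilde Z})$, verifying nonnegativity and the marginals exactly as the paper does. The only difference is cosmetic and slightly in your favor: you explicitly handle the degenerate case $D(P_Z,P_{\tilde Z})=0$, which the paper's formula (containing a factor $1/D$) leaves implicit.
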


It turns out that the inequality in \eqnref{eq:equalityprob} is tight,
i.e., one can also show that for any distribution $P_{Z\tilde{Z}}$,
$\Pr[Z = \tilde{Z}] \leq 1 - D(P_Z, P_{\tilde{Z}})$. We will however
not use this fact here.

Consider now a real system that outputs values given by $Z$ and an
ideal system that outputs values according to
$\tilde{Z}$. \thmref{thm:difference} tells us that there exists a
coupling of these distributions such that the probability of the real
system producing a different value from the ideal system is bounded by
the total variation distance between $P_Z$ and $P_{\tilde{Z}}$. Thus,
the real system behaves ideally except with probability $D(P_Z,
P_{\tilde{Z}})$.

We first prove this theorem, then in \corref{cor:difference} here
below we apply it to quantum systems.

\begin{proof}[Proof of \thmref{thm:difference}]
  Let $Q_{Z \tilde{Z}}$ be the real function on $\cZ \times \cZ$
  defined by
  \begin{align*}
    Q_{Z \tilde{Z}}(z,\tilde{z}) = \begin{cases} \min[P_Z(z),
      P_{\tilde{Z}}(\tilde{z})] & \text{if $z=\tilde{z}$} \\ 0 & \text{otherwise}\end{cases} 
  \end{align*}
  (for all $z, \tilde{z} \in \cZ$).  Furthermore, let $R_Z$ and
  $R_{\tilde{Z}}$ be the real functions on $\cZ$ defined by
  \begin{align*}
    R_Z(z) & = P_Z(z) - Q_{Z \tilde{Z}}(z, z) , \\
    R_{\tilde{Z}}(\tilde{z}) & = P_{\tilde{Z}}(\tilde{z}) - Q_{Z
        \tilde{Z}}(\tilde{z}, \tilde{z})  .
  \end{align*}
  We then define $P_{Z \tilde{Z}}$ by
  \begin{align*}
    P_{Z \tilde{Z}}(z, \tilde{z}) = Q_{Z \tilde{Z}}(z, \tilde{z}) +
    \frac{1}{D(P_Z, P_{\tilde{Z}})} R_Z(z) R_{\tilde{Z}}(\tilde{z}) .
  \end{align*}
   
  We now show that $P_{Z \tilde{Z}}$ satisfies the conditions of the
  theorem. For this, we note that for any $z \in \cZ$
  \begin{align*}
    R_Z(z) = P_Z(z) - \min[P_Z(z), P_{\tilde{Z}}(z)] \geq 0, 
  \end{align*}
  i.e., $R_Z$, and, likewise, $R_{\tilde{Z}}$, are nonnegative. Since
  $Q_{Z \tilde{Z}}$ is by definition also nonnegative, we have that
  $P_{Z \tilde{Z}}$ is nonnegative, too. From \eqnref{eq:marginal1}
  or~\eqref{eq:marginal2}, which we will prove below, it follows that
  $P_{Z \tilde{Z}}$ is also normalized. Hence, $P_{Z \tilde{Z}}$ is a
  valid probability distribution.

  To show \eqnref{eq:equalityprob} we use again the non\-/negativity
  of $R_{Z}$ and $R_{\tilde{Z}}$, which implies
  \begin{align*}
    \sum_z P_{Z \tilde{Z}}(z,z) & \geq \sum_z Q_{Z \tilde{Z}}(z,z) \\
    & = \sum_z \min[P_Z(z), P_{\tilde{Z}}(z)] \\
    & = 1- D(P_Z, P_{\tilde{Z}}) ,
  \end{align*}
  where in the last equality we used the alternative formulation of the
  total variation distance from \eqnref{eq:vdist.alt}.

  To prove \eqnref{eq:marginal1}, we first note that
  \begin{align*}
    \sum_{\tilde{z}} R_{\tilde{Z}}(\tilde{z}) & = \sum_{\tilde{z}}
    P_{\tilde{Z}}(\tilde{z}) - \sum_{\tilde{z}}  Q_{Z
      \tilde{Z}}(\tilde{z},\tilde{z})  \\ & =
    1 - \sum_{\tilde{z}} \min[P_Z(\tilde{z}),
    P_{\tilde{Z}}(\tilde{z})] \\ & = D(P_Z,  P_{\tilde{Z}}) .
  \end{align*}
  Using this we find that for any $z \in \cZ$
  \begin{align*}
    & \sum_{\tilde{z}} P_{Z \tilde{Z}}(z,\tilde{z}) \\
  & \qquad = \sum_{\tilde{z}} Q_{Z \tilde{Z}}(z,\tilde{z}) + R_Z(z)
    \frac{1}{D(P_Z, P_{\tilde{Z}})} \sum_{\tilde{z}}  R_{\tilde{Z}}(\tilde{z})  \\
  & \qquad = Q_{Z \tilde{Z}}(z,z) +R_Z(z) = P_Z(z) .
  \end{align*}  
  By symmetry, this also proves \eqnref{eq:marginal2}.
\end{proof}

In the case of quantum states, \thmref{thm:difference} can be used to
couple the outcomes of any observable applied to the quantum systems.

\begin{cor}
\label{cor:difference}
For any states $\rho$ and $\sigma$ with trace distance $D(\rho,\sigma)
\leq \eps$, and any measurement given by its POVM operators
$\{\Gamma_w\}_w$ with outcome probabilities $P_W(w) = \trace{\Gamma_w
  \rho}$ and $P_{\tilde{W}}(w) = \trace{\Gamma_w \sigma}$, there
exists a coupling of $P_W$ and $P_{\tilde{W}}$ such that \[ \Pr[W \neq
\tilde{W}] \leq D(\rho,\sigma) .\]
\end{cor}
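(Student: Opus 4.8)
The plan is to reduce the corollary to \thmref{thm:difference} applied to the classical outcome distributions, using the measurement\-/monotonicity of the trace distance to control the relevant total variation distance. First I would observe that the two distributions $P_W$ and $P_{\tilde{W}}$ arising from the fixed POVM $\{\Gamma_w\}_w$ are exactly of the form appearing in \lemref{lem:op.definitions.opt}. Since that lemma asserts $D(\rho,\sigma) = \max_{\{\Gamma_x\}_x} D(P_X,Q_X)$, taking the maximum over all POVMs in particular dominates the value attained by our specific choice $\{\Gamma_w\}_w$, so that
\[ D(P_W,P_{\tilde{W}}) \leq D(\rho,\sigma) \leq \eps. \]

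Next I would invoke \thmref{thm:difference} with the pair $(P_W,P_{\tilde{W}})$, identifying the measurement\-/outcome alphabet with $\cZ$. This produces a coupling $P_{W\tilde{W}}$ on $\cZ \times \cZ$ whose marginals are $P_W$ and $P_{\tilde{W}}$ and which satisfies the bound \eqnref{eq:equalityprob}, namely $\Pr[W = \tilde{W}] \geq 1 - D(P_W,P_{\tilde{W}})$. Taking the complementary event and chaining the two inequalities then yields
\[ \Pr[W \neq \tilde{W}] = 1 - \Pr[W = \tilde{W}] \leq D(P_W,P_{\tilde{W}}) \leq D(\rho,\sigma), \]
which is precisely the claimed statement.

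There is essentially no hard step here: the corollary is a direct composition of the maximal coupling theorem (which supplies the coupling on the classical outcomes) with the measurement\-/monotonicity of the trace distance (which guarantees that the trace distance upper bounds the total variation distance of any measurement outcomes). The only point requiring minor care is the direction of the inequality extracted from \lemref{lem:op.definitions.opt}: I need that measuring cannot \emph{increase} the distance, so that $D(\rho,\sigma)$ serves as an upper bound on $D(P_W,P_{\tilde{W}})$ rather than the reverse. This follows immediately from the maximization identity, since the given $\{\Gamma_w\}_w$ is one admissible POVM among those over which the maximum is taken.
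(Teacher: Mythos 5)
Your proof is correct and follows exactly the paper's argument: the paper also proves the corollary by combining \lemref{lem:op.definitions.opt} (which bounds $D(P_W,P_{\tilde{W}})$ by $D(\rho,\sigma)$, since the given POVM is among those maximized over) with the maximal coupling of \thmref{thm:difference}. You have merely written out explicitly the chaining of inequalities that the paper labels ``immediate.''
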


\begin{proof}
  Immediate by combining \lemref{lem:op.definitions.opt} and
  \thmref{thm:difference}.
\end{proof}

\corref{cor:difference} tells us that if two systems produce states
$\rho$ and $\sigma$, then for any observations made on those systems
there exists a coupling for which the values of each measurement will
differ with probability at most $D(\rho,\sigma)$. It is instructive to
remember that this operational meaning is not essential to the
security notion or part of the framework in any way. It is an
intuitive way of understanding the trace distance, so as to better
choose a suitable value. It allows this distance to be thought of as a
maximum failure probability, and the value for $\eps$ to be chosen
accordingly.

\subsection{Measures of uncertainty}
\label{app:op.local}

Non\-/composable security models often use measures of uncertainty to
quantify how much information an adversary might have about a secret,
e.g., entropy as used by Shannon to prove the security of the one-time
pad~\cite{Shannon49}. These measures are often weaker than what one
obtains using a global distinguisher, and in general do not provide
good security definitions. They are however quite intuitive and in
order to further illustrate the quantitative value of the
distinguishing advantage, we derive bounds on two of these measures of
uncertainty in terms of the trace distance, namely on the probability
of guessing the secret key in
\appendixref{app:op.local.guessing} and on the von Neumann entropy of the
secret key in \appendixref{app:op.local.entropy}.

\subsubsection{Probability of guessing}
\label{app:op.local.guessing}

Let $\rho_{KE} = \sum_{k \in \cK} p_k \proj{k}_K \otimes \rho^k_E$ be
the joint state of a secret key in the $K$ subsystem and Eve's
information in the $E$ subsystem. To guess the value of the key, Eve
can pick a POVM $\{\Gamma_k\}_{k \in \cK}$, measure her system, and
output the result of the measurement. Given that the key is $k$, her
probability of having guessed correctly is $\trace{\Gamma_k
  \rho^k_E}$. The average probability of guessing correctly for this
measurement is then given by the sum over all $k$, weighted by their
respective probabilities $p_k$. And Eve's probability of correctly
guessing the key is defined by taking the maximum over all
measurements,
\begin{equation} \label{eq:pguess} \pguess[\rho]{K|E} \coloneqq
  \max_{\{\Gamma_k\}} \sum_{k \in \cK} p_k \trace{\Gamma_k \rho^k_E}.
\end{equation}

\begin{lem}
  \label{lem:pguess}
For any bipartite state $\rho_{KE}$ with classical $K$, 
\[\pguess[\rho]{K|E} \leq
\frac{1}{|\cK|} + D\left( \rho_{KE},\tau_K \otimes \rho_E\right),
\] where $\tau_K$ is the fully mixed state.
\end{lem}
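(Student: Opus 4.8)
The plan is to realize the guessing probability as the expectation value of a single measurement operator on the joint $KE$ system, and then apply \lemref{lem:op.definitions.pos} directly. Concretely, let $\{\Gamma_k\}_{k \in \cK}$ be the POVM achieving the maximum in the definition \eqnref{eq:pguess} of $\pguess[\rho]{K|E}$, and define the operator
\[
  M \coloneqq \sum_{k \in \cK} \proj{k}_K \otimes \Gamma_k
\]
on $\hilbert_K \otimes \hilbert_E$. The first thing I would check is that $0 \leq M \leq I$: since the projectors $\proj{k}$ are mutually orthogonal and $0 \leq \Gamma_k \leq I$ for each $k$, we have $M \leq \sum_k \proj{k} \otimes I = I$, and $M \geq 0$ is clear. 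This is the only point requiring a small verification, and it is precisely what makes $M$ an admissible operator for \lemref{lem:op.definitions.pos}.

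Next I would compute the two relevant traces. Using $\rho_{KE} = \sum_k p_k \proj{k}_K \otimes \rho^k_E$ together with the orthogonality $\proj{k}\proj{k'} = \delta_{kk'}\proj{k}$, one gets
\[
  \trace{M \rho_{KE}} = \sum_{k \in \cK} p_k \trace{\Gamma_k \rho^k_E} = \pguess[\rho]{K|E},
\]
so $M$ indeed captures the guessing probability. For the ideal state, writing $\tau_K = \frac{1}{|\cK|}\sum_{k} \proj{k}_K$ and using $\sum_k \Gamma_k = I$ with $\trace{\rho_E} = 1$, I would obtain
\[
  \trace{M (\tau_K \otimes \rho_E)} = \frac{1}{|\cK|} \sum_{k \in \cK} \trace{\Gamma_k \rho_E} = \frac{1}{|\cK|} \trace{\rho_E} = \frac{1}{|\cK|}.
\]

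Finally I would invoke \lemref{lem:op.definitions.pos}, namely $\trace{M(\rho_{KE} - \tau_K \otimes \rho_E)} \leq D(\rho_{KE}, \tau_K \otimes \rho_E)$, and substitute the two computed traces to conclude
\[
  \pguess[\rho]{K|E} \leq \frac{1}{|\cK|} + D\left(\rho_{KE}, \tau_K \otimes \rho_E\right).
\]
I do not anticipate a genuine obstacle here; the proof is short and the only piece of real content is the choice of $M$ and the observation that it lies in $[0,I]$, which is exactly what allows the optimal guessing strategy to be reinterpreted as a distinguishing strategy between $\rho_{KE}$ and the ideal decoupled state $\tau_K \otimes \rho_E$. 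The inequality is naturally interpreted as saying that if the key is $\eps$-close to uniform and independent of $E$, then Eve's advantage in guessing it over a blind random guess (which succeeds with probability $1/|\cK|$) is at most $\eps$.
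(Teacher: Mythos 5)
Your proof is correct and follows essentially the same route as the paper: the same operator $M = \sum_k \proj{k} \otimes \Gamma_k$ built from the optimal guessing POVM, the same two trace computations, and the same application of \lemref{lem:op.definitions.pos}. The only difference is that you spell out the verification $0 \leq M \leq I$, which the paper leaves implicit.
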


\begin{proof}
  Note that for $M \coloneqq \sum_k \proj{k} \otimes \Gamma_k$, where
  $\{\Gamma_k\}$ maximizes \eqnref{eq:pguess}, the guessing
  probability can equivalently be written \[\pguess[\rho]{K|E} =
  \trace{M \rho_{KE}}. \] Furthermore,
\[\ktrace{M \left(\tau_K \otimes \rho_{E}\right)} = \frac{1}{|\cK|}. \]
In \lemref{lem:op.definitions.pos} we proved that for any
operator $0 \leq M \leq I$,
\[\trace{M(\rho - \sigma)} \leq D(\rho,\sigma). \]
Setting $\rho = \rho_{KE}$ and $\sigma = \tau_K \otimes \rho_{E}$ in
the above inequality, we finish the proof:
\begin{align*}
& \trace{M\rho_{KE}} \leq \trace{M\left(\tau_K \otimes \rho_{E}\right)}
+ D\left( \rho_{KE} , \tau_K \otimes \rho_{E} \right) , \\
 & \implies \quad \pguess[\rho]{K|E} \leq \frac{1}{|\cK|} + D\left( \rho_{KE},\tau_K
   \otimes \rho_E\right) . \qedhere
\end{align*}
\end{proof}

\subsubsection{Entropy}
\label{app:op.local.entropy}

Let $\rho_{KE} = \sum_{k \in \cK} p_k \proj{k}_K \otimes \rho^k_E$ be
the joint state of a secret key in the $K$ subsystem and Eve's
information in the $E$ subsystem. We wish to bound the von Neumann
entropy of $K$ given $E$ \--- $\Ss(K|E)_{\rho} = \Ss(\rho_{KE}) -
\Ss(\rho_E)$, where $S(\rho) \coloneqq -\trace{\rho \log \rho}$ \---
in terms of the trace distance $D\left( \rho_{KE},\tau_K \otimes
  \rho_E\right)$. We first derive a lower bound on the von Neumann
entropy, using the following theorem from \textcite{AF04}.

\begin{thm}[From~\textcite{AF04}]
  \label{thm:AF04}
  For any bipartite states $\rho_{AB}$ and $\sigma_{AB}$ with trace
  distance $D(\rho,\sigma) = \eps \leq 1/4$ and $\dim \hilbert_A =
  d_A$, we have \[ \left| \Ss(A|B)_\rho - \Ss(A|B)_\sigma \right| \leq
  8 \eps \log d_A + 2h(2\eps) , \] where $h(p) = - p \log p - (1-p)
  \log (1-p)$ is the binary entropy.
\end{thm}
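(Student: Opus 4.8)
The plan is to follow the continuity argument of Alicki and Fannes (in the sharpened form later streamlined by Winter), exploiting that the conditional entropy $\Ss(A|B)_\rho = \Ss(\rho_{AB}) - \Ss(\rho_B)$ is both concave and ``almost affine'' under mixing, together with the uniform bound $\left|\Ss(A|B)_\rho\right| \le \log d_A$. The point is that the dimension $d_A$ enters \emph{only} through this last bound: a naive application of the Fannes inequality to each of $\Ss(\rho_{AB})$ and $\Ss(\rho_B)$ separately would produce a correction growing with $\log(d_A d_B)$, whereas by working with the conditional quantity directly the $B$-dimension drops out.

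First I would split the difference into its positive and negative parts. Writing $\rho - \sigma = \Delta_+ - \Delta_-$ with $\Delta_\pm \ge 0$ having orthogonal support, the definition of the trace distance (together with $\tr(\rho-\sigma)=0$) gives $\tr \Delta_+ = \tr \Delta_- = \eps$. Hence $\omega_\pm \coloneqq \Delta_\pm/\eps$ are valid states, and the positive operator $\sigma + \Delta_+ = \rho + \Delta_-$ has trace $1+\eps$. Normalizing it produces a single state $\xi$ that can be expressed as a convex combination in two different ways,
\[
  \xi = \tfrac{1}{1+\eps}\,\sigma + \tfrac{\eps}{1+\eps}\,\omega_+ = \tfrac{1}{1+\eps}\,\rho + \tfrac{\eps}{1+\eps}\,\omega_- .
\]

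The core step is a two-sided bound on how $\Ss(A|B)$ behaves under such a mixture, and both directions come from a single device: introduce a classical flag register $I$ recording which branch is taken, so that for $\xi = \sum_i p_i \xi_i$ one has $\Ss(A|BI)_\xi = \sum_i p_i\, \Ss(A|B)_{\xi_i}$. Strong subadditivity (``conditioning reduces entropy'', $\Ss(A|BI)_\xi \le \Ss(A|B)_\xi$) yields the lower bound $\Ss(A|B)_\xi \ge \sum_i p_i\, \Ss(A|B)_{\xi_i}$, i.e.\ concavity. For the matching upper bound, the gap $\Ss(A|B)_\xi - \Ss(A|BI)_\xi$ equals the conditional mutual information, which for classical $I$ can be written $\Ss(I|B)_\xi - \Ss(I|AB)_\xi$; since $I$ is classical we have $\Ss(I|AB)_\xi \ge 0$ and $\Ss(I|B)_\xi \le \Ss(I)_\xi = h(p)$, so $\Ss(A|B)_\xi \le \sum_i p_i\, \Ss(A|B)_{\xi_i} + h(p)$. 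Applying the lower bound to the $\sigma$-decomposition of $\xi$ and the upper bound to the $\rho$-decomposition and subtracting, the $\Ss(A|B)_\xi$ terms cancel and (after multiplying through by $1+\eps$) I obtain
\[
  \Ss(A|B)_\sigma - \Ss(A|B)_\rho \le \eps\bigl(\Ss(A|B)_{\omega_-} - \Ss(A|B)_{\omega_+}\bigr) + (1+\eps)\,h\!\left(\tfrac{\eps}{1+\eps}\right).
\]

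Finally I would bound each $\Ss(A|B)_{\omega_\pm}$ in absolute value by $\log d_A$, turning the first term into a $2\eps\log d_A$ contribution, and then swap the roles of $\rho$ and $\sigma$ (equivalently $\omega_+ \leftrightarrow \omega_-$) to get the same estimate for $\Ss(A|B)_\rho - \Ss(A|B)_\sigma$, hence for the absolute difference. This already gives the clean bound $2\eps\log d_A + (1+\eps)h(\eps/(1+\eps))$. What then remains is purely to loosen this into the stated packaging $8\eps\log d_A + 2h(2\eps)$: using $\eps \le 1/4$ together with monotonicity of $h$ on $[0,\tfrac12]$, one checks that $(1+\eps)h(\eps/(1+\eps)) \le 2h(2\eps)$ and absorbs the discrepancy in the linear coefficient. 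I expect the main obstacle to be precisely this last bookkeeping — reconciling the tight intermediate bound with the deliberately coarser constants quoted in \thmref{thm:AF04}, and verifying that the hypothesis $\eps \le 1/4$ is exactly what keeps the arguments of $h$ inside its increasing regime. The conceptual heart, by contrast, is entirely captured by the flag-register argument furnishing the two-sided mixing inequality.
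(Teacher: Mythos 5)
Your proof is correct, but note that the paper itself never proves \thmref{thm:AF04}: it is imported verbatim from \textcite{AF04} (hence the attribution in the theorem header), and the only in-paper argument is the one-line derivation of \corref{cor:AF04} from it. What you have written is therefore a genuinely different — and in fact stronger — route than the one the citation points to. The original Alicki--Fannes argument interpolates between $\rho$ and $\sigma$ in a less economical way and lands on the constants $8\eps\log d_A + 2h(2\eps)$ (equivalently $4\eps'\log d_A + 2h(\eps')$ in trace-norm normalization $\eps'=2\eps$) directly. Your argument is the later streamlined version (due to Winter): the single interpolating state $\xi$ with its two convex decompositions, combined with concavity of $\Ss(A|B)$ (from strong subadditivity via the classical flag $I$) on one side and the almost-convexity bound $\Ss(A|B)_\xi \le \sum_i p_i \Ss(A|B)_{\xi_i} + h(p)$ (from $\Ss(I|AB)\ge 0$ and $\Ss(I|B)\le \Ss(I)$) on the other, so that $\Ss(A|B)_\xi$ cancels. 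This yields the tighter intermediate bound $2\eps\log d_A + (1+\eps)\,h\!\left(\eps/(1+\eps)\right)$, which you then coarsen. The final bookkeeping you flag as the "main obstacle" is actually immediate and needs no trading between terms: for $\eps\le 1/4$ one has $\eps/(1+\eps) \le 2\eps \le 1/2$, and since $h$ is increasing on $[0,1/2]$ and $1+\eps\le 2$, it follows that $(1+\eps)h(\eps/(1+\eps)) \le 2h(2\eps)$, while trivially $2\eps\log d_A \le 8\eps\log d_A$; each term is dominated separately. The only cosmetic gap is the degenerate case $\eps=0$, where $\omega_\pm$ are undefined but the claim is trivial. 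So: correct proof, tighter than what is claimed, with the hypothesis $\eps\le 1/4$ entering exactly where you said it would — keeping $2\eps$ inside the increasing regime of $h$.
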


\begin{cor}
  \label{cor:AF04}
 For any state $\rho_{KE}$ with $D(\rho_{KE},\tau_K \otimes
  \rho_E) = \eps \leq 1/4$, where $\tau_K$ is the fully mixed state,
  we have
  \[ \Ss(K|E)_{\rho} \geq (1- 8\eps) \log |\cK| - 2h(2\eps) . \]
\end{cor}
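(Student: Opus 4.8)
The plan is to derive Corollary~\ref{cor:AF04} as a direct specialization of \thmref{thm:AF04}. First I would set $\rho_{AB} = \rho_{KE}$ with the role of the $A$ system played by the key register $K$ and the $B$ system by Eve's register $E$, and set $\sigma_{AB} = \tau_K \otimes \rho_E$. Since $K$ is classical with alphabet $\cK$, its dimension is $d_A = |\cK|$. The hypothesis $D(\rho_{KE}, \tau_K \otimes \rho_E) = \eps \leq 1/4$ matches precisely the precondition of \thmref{thm:AF04}, so I may invoke it to obtain
\[
\left| \Ss(K|E)_\rho - \Ss(K|E)_{\tau_K \otimes \rho_E} \right| \leq 8 \eps \log |\cK| + 2h(2\eps) .
\]

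The second step is to compute the conditional entropy of the product state $\tau_K \otimes \rho_E$. For a product state the conditional entropy factorizes in the expected way: $\Ss(K|E)_{\tau_K \otimes \rho_E} = \Ss(\tau_K)$, since the $K$ and $E$ parts are uncorrelated. Because $\tau_K$ is the fully mixed state on a space of dimension $|\cK|$, its von Neumann entropy is maximal, $\Ss(\tau_K) = \log |\cK|$. Hence $\Ss(K|E)_{\tau_K \otimes \rho_E} = \log |\cK|$. This is the one small computation to carry out explicitly, and it is entirely routine.

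Finally, I would drop the absolute value in favour of the lower bound it implies, namely
\[
\Ss(K|E)_\rho \geq \Ss(K|E)_{\tau_K \otimes \rho_E} - 8 \eps \log |\cK| - 2h(2\eps),
\]
and substitute $\Ss(K|E)_{\tau_K \otimes \rho_E} = \log|\cK|$ to get
\[
\Ss(K|E)_\rho \geq \log |\cK| - 8 \eps \log |\cK| - 2h(2\eps) = (1 - 8\eps)\log|\cK| - 2h(2\eps),
\]
which is exactly the claimed bound.

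Honestly, there is no real obstacle here: the corollary is essentially a plug-in of the ideal product state into the continuity bound of \thmref{thm:AF04}, combined with the elementary fact that the fully mixed state saturates the entropy. The only point requiring a moment's care is recognizing that $\Ss(K|E)$ evaluated on $\tau_K \otimes \rho_E$ equals $\log|\cK|$ rather than something involving $\rho_E$; this follows because the conditional entropy of a tensor product equals the entropy of the first marginal, and the $\Ss(\rho_E)$ contributions cancel. Everything else is bookkeeping.
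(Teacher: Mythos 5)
Your proposal is correct and is exactly the paper's proof: the paper also obtains the corollary by plugging $\rho_{KE}$ and $\tau_K \otimes \rho_E$ into \thmref{thm:AF04}, with the only substantive step being the observation that $\Ss(K|E)_{\tau_K \otimes \rho_E} = \Ss(\tau_K) = \log|\cK|$, which you carry out correctly.
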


\begin{proof}
  Immediate by plugging $\rho_{KE}$ and $\tau_K \otimes \rho_E$ in
  \thmref{thm:AF04}.
\end{proof}

Given the von Neumann entropy of $K$ conditioned on $E$,
$\Ss(K|E)_{\rho}$, one can also upper bound the trace distance of
$\rho_{KE}$ from $\tau_K \otimes \rho_E$ by relating $\Ss(K|E)_{\rho}$
to the relative entropy of $\rho_{KE}$ to $\tau_K \otimes \rho_E$ \---
the relative entropy of $\rho$ to $\sigma$ is defined as $\Ss(\rho \|
\sigma) \coloneqq \trace{\rho \log \rho} - \trace{\rho \log \sigma}$.

\begin{lem}
  \label{lem:entropybound}
  For any quantum state $\rho_{KE}$,
  \[ D(\rho_{KE},\tau_K \otimes \rho_E) \leq
  \sqrt{\frac{1}{2}\left(\log |\cK| - \Ss(K|E)_{\rho}\right)} . \]
\end{lem}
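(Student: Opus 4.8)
The plan is to recognise the quantity under the square root as a quantum relative entropy and then reduce the lemma to Pinsker's inequality. Writing $\sigma \coloneqq \tau_K \otimes \rho_E$, the first task is to establish the identity
\[ \Ss(\rho_{KE} \,\|\, \sigma) = \log|\cK| - \Ss(K|E)_\rho, \]
after which the right-hand side of the lemma becomes exactly $\sqrt{\tfrac{1}{2}\,\Ss(\rho_{KE}\|\sigma)}$, reducing the statement to a single general inequality between trace distance and relative entropy.

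To prove this identity I would exploit the product structure of $\sigma$. Since $\tau_K = \frac{1}{|\cK|} I_K$ acts on the first factor only, the operator logarithm splits additively as $\log\sigma = -\log|\cK|\,(I_K \otimes I_E) + I_K \otimes \log\rho_E$. Substituting this into $\Ss(\rho_{KE}\|\sigma) = \trace{\rho_{KE}\log\rho_{KE}} - \trace{\rho_{KE}\log\sigma}$, the first term equals $-\Ss(\rho_{KE})$ by definition, and the contribution $-\trace{\rho_{KE}\log\sigma}$ becomes $\log|\cK| + \Ss(\rho_E)$, using $\trace{\rho_{KE}} = 1$ together with $\trace[K]{\rho_{KE}} = \rho_E$. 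Adding these gives $\log|\cK| - (\Ss(\rho_{KE}) - \Ss(\rho_E)) = \log|\cK| - \Ss(K|E)_\rho$, as desired. The relative entropy is finite here because $\operatorname{supp}\sigma = \hilbert_K \otimes \operatorname{supp}\rho_E$ always contains $\operatorname{supp}\rho_{KE}$; and its non-negativity simultaneously guarantees $\Ss(K|E)_\rho \leq \log|\cK|$, so that the argument of the square root is non-negative and the statement is well posed. Note that this computation uses nothing about $K$ being classical, so it applies to the general $\rho_{KE}$ as stated.

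The final step is to invoke Pinsker's inequality in its quantum form. Applied with $\rho = \rho_{KE}$ and $\sigma = \tau_K \otimes \rho_E$, it yields $D(\rho_{KE}, \tau_K\otimes\rho_E) \leq \sqrt{\tfrac{1}{2}\,\Ss(\rho_{KE}\|\sigma)}$, which combined with the identity above is precisely the claim. There is no serious obstacle in the argument beyond the relative-entropy computation; the only point that genuinely needs care is the logarithm convention, because Pinsker's constant depends on it. Measuring all entropies in the same base as $\log|\cK|$, the quantum Pinsker bound actually delivers the sharper constant $\tfrac{\ln 2}{2}$ in place of $\tfrac{1}{2}$, and since $\ln 2 < 1$ this is a \emph{stronger} estimate, so the stated inequality with $\tfrac{1}{2}$ follows \emph{a fortiori} (and is, in particular, not tight). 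I would therefore fix the base and the units of $\Ss$ explicitly at the outset, so that the prefactor of the square root is unambiguous.
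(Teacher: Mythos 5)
Your proposal is correct and takes essentially the same route as the paper: you first establish the identity $\Ss\left(\rho_{KE} \middle\| \tau_K \otimes \rho_E\right) = \log|\cK| - \Ss(K|E)_{\rho}$ (the paper does this by writing the left-hand side as $\log|\cK| + \Ss\left(\rho_{KE} \middle\| \id_K \otimes \rho_E\right)$, which is exactly your logarithm-splitting computation packaged differently) and then apply the quantum Pinsker inequality $\Ss(\rho\|\sigma) \geq 2\left(D(\rho,\sigma)\right)^2$, which the paper cites as Theorem~1.15 of \textcite{OP93}. Your closing remark on the logarithm base is also sound: the stated constant $\tfrac{1}{2}$ is valid in either convention, and with base-$2$ entropies one indeed obtains the sharper constant $\tfrac{\ln 2}{2}$.
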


\begin{proof}
  From the definitions of the relative and von Neumann entropies we
  have
  \begin{align*}
    \Ss \left(\rho_{KE} \middle\| \tau_K \otimes \rho_E \right) & =
  \log |\cK| + \Ss \left(\rho_{KE} \middle\| \id_K \otimes \rho_E
  \right) \\ & = \log |\cK| - \Ss(K|E)_{\rho},\end{align*} where $\id_K$ is the
  identity matrix. We then use the following bound on the relative
  entropy \cite[Theorem~1.15]{OP93} to conclude the proof:
  \[ \Ss\left(\rho \middle\| \sigma \right) \geq 2
  \left(D(\rho,\sigma)\right)^2. \qedhere\]
\end{proof}

\corref{cor:AF04} and \lemref{lem:entropybound} can be written
together in one equation, upper and lower bounding the conditional von
Neumann entropy:
\[ (1- 8\eps) \log |\cK| - 2h(2\eps) \leq \Ss(K|E)_{\rho} \leq \log
|\cK| - 2\eps^2, \] where $\eps = D(\rho_{KE},\tau_K \otimes
\rho_E)$.


\section{Proofs from \secref{sec:qkd}}
\label{app:proofs}

In \secref{sec:qkd} we show how to define the security of QKD in a
composable framework and relate this to the trace distance security
criterion introduced in \textcite{Ren05}. This composable treatment of
the security of QKD follows the literature \cite{BHLMO05,MR09}, and
the results presented in \secref{sec:qkd} may be found in
\textcite{BHLMO05,MR09} as well. The formulation of the statements
differs however from those works, since we use here the Abstract
Cryptography framework of \textcite{MR11}. So for completeness, we
provide here proofs of the main results from \secref{sec:qkd}.

\begin{proof}[Proof of \thmref{thm:qkd}]
  Recall that in \secref{sec:security.simulator} we fixed the
  simulator and show that to satisfy \eqnref{eq:qkd.security} it is
  sufficient for \eqnref{eq:qkd.security.2} to hold. Here, we will
  break \eqnref{eq:qkd.security.2} into security [\eqnref{eq:qkd.cor}]
  and correctness [\eqnref{eq:qkd.sec}], thus proving the theorem.

  Let us define $\gamma_{ABE}$ to be a state obtained from
  $\rho^{\top}_{ABE}$ [\eqnref{eq:qkd.security.tmp}] by throwing away
  the $B$ system and replacing it with a copy of $A$, i.e., \[
  \gamma_{ABE} = \frac{1}{1-p^\bot} \sum_{k_A,k_B \in \cK} p_{k_A,k_B}
  \proj{k_A,k_A} \otimes \rho^{k_A,k_B}_E.\] From the triangle
  inequality we get \begin{multline*} D(\rho^\top_{ABE},\tau_{AB} \otimes
  \rho^\top_{E}) \leq \\ D(\rho^\top_{ABE},\gamma_{ABE}) +
  D(\gamma_{ABE},\tau_{AB} \otimes \rho^\top_{E}) .\end{multline*}

Since in the states $\gamma_{ABE}$ and
$\tau_{AB} \otimes \rho^\top_{E}$ the $B$ system is a copy of the $A$
system, it does not modify the distance. Furthermore,
$\trace[B]{\gamma_{ABE}} =
\trace[B]{\rho^{\top}_{ABE}}$. Hence
\[D(\gamma_{ABE},\tau_{AB} \otimes \rho^\top_{E}) =
  D(\gamma_{AE},\tau_{A} \otimes \rho^\top_{E}) =
  D(\rho^\top_{AE},\tau_{A} \otimes \rho^\top_{E}).\]

For the other term note that
\begin{align*}
  & D(\rho^\top_{ABE},\gamma_{ABE}) \\
  & \qquad \leq \sum_{k_A,k_B} \frac{p_{k_A,k_B}}{1-p^{\bot}}
    D\left(\proj{k_A,k_B} \otimes \rho^{k_A,k_B}_E,\right. \\
  & \qquad \qquad \qquad \qquad \qquad \qquad \left.\proj{k_A,k_A} \otimes \rho^{k_A,k_B}_E \right)\\
  & \qquad = \sum_{k_A \neq k_B} \frac{p_{k_A,k_B}}{1-p^{\bot}} = \frac{1}{1-p^{\bot}}\Pr
  \left[ K_A \neq K_B \right].
\end{align*}
Putting the above together with \eqnref{eq:qkd.security.2}, we get
\begin{align*} & D(\rho_{ABE},\tilde{\rho}_{ABE}) \\
  & \qquad = (1-p^\bot)
  D(\rho^\top_{ABE},\tau_{AB} \otimes \rho^\bot_{E}) \\ & \qquad \leq \Pr
  \left[ K_A \neq K_B \right] + (1-p^\bot) D(\rho^\top_{AE},\tau_{A}
  \otimes \rho^\top_{E}). \qedhere \end{align*}
\end{proof}

\begin{proof}[Proof of \lemref{lem:robustness}]
  By construction, $\aK_\delta$ aborts with exactly the same
  probability as the real system. And because $\sigma^{\qkd}_E$
  simulates the real protocols, if we plug a converter $\pi_E$ in
  $\aK\sigma^{\qkd}_E$ which emulates the noisy channel $\aQ_q$ and
  blogs the output of the simulated authentic channel, then
  $\aK_\delta = \aK\sigma^{\qkd}_E\pi_E$. Also note that by
  construction we have
  $\aQ_q \| \aA' = \left(\aQ \| \aA\right) \pi_E$. Thus
  \begin{multline*} d\left( \pi_A^{\qkd}\pi_B^{\qkd}(\aQ_q \| \aA')
      ,\aK_\delta\right) \\ = d\left( \pi_A^{\qkd}\pi_B^{\qkd}\left(\aQ
        \| \aA\right) \pi_E , \aK\sigma^{\qkd}_E\pi_E\right). \end{multline*}

  Finally, because the converter $\pi_E$ on both the real and ideal
  systems can only decrease their distance (see
  \secref{sec:ac.systems}), the result follows.
\end{proof}



%

\end{document}